\pgfplotsset{compat=newest}
\numberwithin{equation}{section}
\title{Bulk-edge correspondence in finite photonic structure}
\author{Jiayu Qiu\thanks{Department of Mathematics, 
 HKUST,  Clear Water Bay, Kowloon, Hong Kong SAR, China, jqiuaj@connect.ust.hk.}\,\, Hai Zhang\thanks{Department of Mathematics, HKUST, Clear WaterbBay, Kowloon, Hong Kong SAR, China, haizhang@ust.hk. HZ was partially supported by Hong Kong RGC grant GRF 16307024 and NSFC grant 12371425.}}
\date{\today}
\newtheorem{theorem}{Theorem}[section]
\newtheorem{lemma}[theorem]{Lemma}
\newtheorem{definition}[theorem]{Definition}
\newtheorem{hypothesis}[theorem]{Hypothesis}
\newtheorem{corollary}[theorem]{Corollary}
\newtheorem{remark}[theorem]{Remark}
\newtheorem{proposition}[theorem]{Proposition}
\begin{document}

\maketitle
\begin{abstract}
In this work, we establish the bulk-edge correspondence principle for finite two-dimensional photonic structures. Specifically, we focus on the divergence-form operator with periodic coefficients and prove the equality between the well-known gap Chern number (the bulk invariant) and an edge index defined via a trace formula for the operator restricted to a finite domain with Dirichlet boundary conditions. We demonstrate that the edge index characterizes the circulation of electromagnetic energy along the system's boundary, and the BEC principle is a consequence of energy conservation. The proof leverages Green function techniques and can be extended to other systems. These results provide a rigorous theoretical foundation for designing robust topological photonic devices with finite geometries, complementing recent advances in discrete models. 
\end{abstract}

\section{Introduction}
\subsection{Background}

Since the discovery of the quantum Hall effect (QHE) and its topological origin \cite{Klitzing80qhe,vonKlitzing2020forty_years,stone1992quantum}, the topological phase of matter has gained great interest within both the physics and mathematics communities. The remarkable properties of topological materials are rooted in the fundamental concept of the bulk-edge correspondence (BEC) principle. This principle asserts that differences in the topological phases of two materials separated by an interface give rise to transportation channels at the interface \cite{qizhang11topo_insulator,bernivig13topo_insulator}. Due to its topological nature, edge transport is robust against perturbations such as medium deformations or impurities, enabling a wide range of applications. Motivated by the developments of topological phases of matter in condensed-matter systems, it has been argued and demonstrated that the topological properties of materials are ubiquitous across various periodic systems, such as photonic crystals \cite{haldane08realization,wang2009observation,ozawa19topological_photonics}. Enabled by the BEC principle, the emergence of robust edge transport of light has driven advancements in the design of high-quality optical devices and spurred significant interest in topological photonics.

The mathematical research on BEC begins with the interpretation of the integer quantum Hall effect \cite{Hatsugai93chern}. Since then, the principle of BEC has been established across various models, primarily focusing on electronic systems described by the (magnetic or non-magnetic) Schrödinger operator \cite{Kellendonk02landau+ktheory,Kellendonk2004landau+ktheory,Kellendonk04landau+functional,taarabt2014landau+Ktheory,cornean2021landau+functional,shapiro2022shrodinger+functional,drouot2021microlocal}, Dirac Hamiltonians \cite{bal2019dirac+functional,bal2023dirac+microlocal}, and tight-binding models \cite{graf2018shortrange+transfer,avila2013shortrange+transfer,ludewig2020shortrange+coarse,graf2013shortrange+scattering,elgart2005shortrange+functional,shapiro2022shrodinger+functional}. The techniques used in these studies include K-theory \cite{bourne2017ktheory,kubota2017ktheory,prodan2016ktheory,Kellendonk02landau+ktheory,Kellendonk2004landau+ktheory,taarabt2014landau+Ktheory}, functional analysis \cite{bal2019dirac+functional,cornean2021landau+functional,Kellendonk04landau+functional,elgart2005shortrange+functional,shapiro2022shrodinger+functional}, transfer matrix and scattering theory \cite{graf2013shortrange+scattering,lin2022transfer,thiang2023transfer,graf2018shortrange+transfer,avila2013shortrange+transfer}, Toeplitz theory \cite{ammari2024toeplitz_1,ammari2024toeplitz_2,braverman2018spectralflowfamilytoeplitz}, coarse geometry \cite{ludewig2020shortrange+coarse}, and microlocal and semiclassical analysis \cite{drouot2021microlocal,bal2023dirac+microlocal}. While extensive research has validated BEC in infinite domains, such as half-plane models with edges or interface models consisting of two separated half-planes (see references above), the mathematical framework for finite geometries remains underexplored. Recent works \cite{bal2023edge_curved,ludewig2020shortrange+coarse,drouot2024bec_curvedinterfaces} have addressed more general domains in which the edge or interface may not be straight, while still requiring the infinite size of domains. These developments, along with the rapid growth of topological photonics and the apparent fact that all realistic experiments are conducted in finite-size domains, motivate the need for a rigorous mathematical framework for BEC in finite photonic structures.

In this work, we rigorously prove the bulk-edge correspondence principle for finite two-dimensional photonic structures. Specifically, for a divergence-form operator with periodic coefficients, we establish the equality between the well-known gap Chern number and an edge index defined by a trace formula (Definition \ref{def_edge_index}). This edge index is evaluated for the operator restricted to a finite, simply connected, and bounded domain with Dirichlet boundary conditions. The newly defined edge index is central to our proof of BEC. Unlike the edge index in previous studies, which primarily measures wave propagation along a specific direction, the index defined in \eqref{eq_edge_index} describes \textit{the circulation of waves within the bounded structure}. This shift in perspective is driven by the physical intuition: in a finite domain, unidirectional propagation along the edge is canceled out at opposite faces, necessitating a new observable to describe edge transport. Notably, with this newly defined edge index, the BEC proved in this paper has an intuitive physical interpretation: since 1) our edge index measures the circulation of electromagnetic energy along an impenetrable boundary and 2) the Chern number measures the energy current in the bulk excited by an external source, the BEC principle arises naturally from energy conservation; we elaborate on this physical explanation detailedly in the Appendix. Our proof of BEC relies on the estimates of Green functions, especially the effect of the boundary as the domain size becomes large, as outlined in Section \ref{sec_outline_proof}. We expect the framework developed here to extend to study BEC in various settings, including finite quantum Hall systems or systems with disorder. These perspectives are discussed in Section \ref{sec_relation_perspec}.

\subsection{Main results}
We consider the following divergence-form operator in $L^2(\mathbf{R}^2)$ with $\mathbf{Z}^2-$periodic coefficients, which models the propagation of time-harmonic TE polarized electromagnetic waves in a 2D square lattice:
\begin{equation*}
\mathcal{L}=-\nabla\cdot A\nabla \quad\text{with}\quad
A(x+e)=A(x)\quad \forall e\in \mathbf{Z}^2.
\end{equation*}
We assume $A\in C^3(\mathbf{R}^2,\mathbf{C}^{2\times 2})$ is Hermitian and positive definite. Then $\mathcal{L}$ admits nonnegative spectrum, i.e. $\sigma(\mathcal{L})\subset [0,\infty)$. Suppose $\mathcal{L}$ has a band gap $\Delta:=(\lambda_{\text{low}},\lambda_{\text{upp}})$ in its spectrum, i.e.
\begin{equation*}
\overline{\Delta}\cap \sigma(\mathcal{L})=\emptyset.
\end{equation*}
Then the gap Chern number is well-defined and integer-valued. It is given as the integral of Berry curvature over the torus $\mathbf{T}^2=\mathbf{R}^2/\mathbf{Z}^2$ \cite{tong2016lecturesquantumhalleffect,ozawa19topological_photonics},
\begin{equation} \label{eq_chern_num}
\mathcal{C}_{\Delta}=\frac{-1}{4\pi}\int_{\mathbf{T}^2}d\kappa\sum_{n\in F}
\text{Im}\big(\partial_{\kappa_1}u_{n}(\cdot;\kappa),\partial_{\kappa_2}u_{n}(\cdot;\kappa)\big)
\in\mathbf{Z}.
\end{equation}
Here $F:=\{n:\,\lambda_n(\kappa)<\lambda_{\text{low}}\}$ is referred to as the index set of filled bands and $E:=\{n:\,\lambda_n(\kappa)>\lambda_{\text{upp}}\}=\mathbf{N}\backslash F$ the unfilled bands. The inner product is taken in $L^2(Y)$ with $Y=(0,1)^2$ being the unit cell. $(\lambda_n(\kappa), v_{n}(x;\kappa))$ are the Bloch eigenpairs and $u_{n}(x;\kappa)=e^{-i\kappa\cdot x}v_{n}(x;\kappa)$ denotes the periodic part of Bloch eigenfunction. 

Let $\Omega\subset \mathbf{R}^2$ be a simply connected and bounded domain with a $C^{\infty}$ boundary that contains the origin. Define the scaled expansion of $\Omega$ by $\Omega_L:=L\cdot \Omega$ ($L>0$). We consider the restriction of $\mathcal{L}$ to $\Omega_L$ with Dirichlet boundary conditions, i.e.
\begin{equation*}
\mathcal{L}_{\Omega_L}:=\mathcal{L}\big|_{H_0^1(\Omega_L)}.
\end{equation*}
Let $g\in C^{\infty}_{c}(\mathbf{R})$ satisfy that
\begin{equation*}
g(x)=1\quad \text{if }x<\lambda_{low},\quad
g(x)=0\quad \text{if }x>\lambda_{upp},\quad
\text{and}\quad \text{supp }g^{\prime} \subset \Delta.
\end{equation*}
Here $g'$ denote the derivative of $g$. 
We define the edge index of $\mathcal{L}_{\Omega_L}$ as:
\begin{definition}[edge index] \label{def_edge_index}
    \begin{equation} \label{eq_edge_index}
EI_{L}(\Delta):=\text{Tr}_{\Omega_L}
\big(i(x_1\mathcal{V}_2-x_2\mathcal{V}_1)g^{\prime}(\mathcal{L}_{\Omega_L})\big),
\end{equation}
where $x_i$ is the position operator and $$\mathcal{V}_i=[\mathcal{L},x_i]=\mathcal{L}x_i-x_i\mathcal{L}=-e_{i}\cdot \nabla A-\nabla\cdot Ae_i.$$
\end{definition}

The trace in \eqref{eq_edge_index} is well-defined by noting the following facts. First, $g^{\prime}(\mathcal{L}_{\Omega_L})$ is \textit{a finite sum} of rank-one projections, i.e. $\sum_{k=1}^{m}(\cdot,u_{k}(x))u_k(x)$, with each $u_k$ being an eigenfunction of $\mathcal{L}_{\Omega_L}$ (recall that $\mathcal{L}_{\Omega_L}$ has a discrete spectrum and $g^{\prime}$ has compact support). On the other hand, each eigenfunction $u_k$ of the elliptic operator $\mathcal{L}_{\Omega_L}$ admits $H^2$ regularity, i.e. $u_k\in H^2(\Omega_L)$. As a consequence, $\sum_{k=1}^{m}(\cdot,u_{k}(x))(x_1\mathcal{V}_2-x_2\mathcal{V}_1)u_k(x)$ is finite-rank and bounded on $\mathcal{B}(L^2(\Omega_L))$, and hence is trace-class.

The main result of this paper is the following:
\begin{theorem} \label{thm_bec}
Assuming \eqref{eq_GL_singularities_1}-\eqref{eq_G_sharp_singularities_2} on the singularities of Green functions, we have
\begin{equation} \label{eq_bec}
\lim_{L\to\infty}\frac{1}{|\Omega_L|}EI_{L}(\Delta)
=\mathcal{C}_{\Delta}.
\end{equation}
\end{theorem}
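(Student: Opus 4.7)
The plan is to write $g'(\mathcal{L}_{\Omega_L})$ as a resolvent integral via Helffer--Sj\"ostrand, to express $EI_L(\Delta)$ through the Dirichlet Green function $G_L(x,y;z)=(z-\mathcal{L}_{\Omega_L})^{-1}(x,y)$, and then to compare $G_L$ with the full-plane Green function $G(x,y;z)$ of $\mathcal{L}$ on $\mathbf{R}^2$ by means of the singularity hypotheses \eqref{eq_GL_singularities_1}--\eqref{eq_G_sharp_singularities_2}. The leading part will be matched to the Berry curvature \eqref{eq_chern_num} through Bloch decomposition, while the remainder will be shown to be $o(|\Omega_L|)$.

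Concretely, since $g'\in C_c^\infty(\mathbf{R})$ is supported in the gap $\Delta$, I would pick an almost analytic extension $\tilde g'$ and use
$$g'(\mathcal{L}_{\Omega_L})=-\frac{1}{\pi}\int_{\mathbf{C}}\bar\partial\tilde g'(z)\,(z-\mathcal{L}_{\Omega_L})^{-1}\,dA(z).$$
Inserting this into \eqref{eq_edge_index} and writing the trace via the Schwartz kernel would give
$$EI_L(\Delta)=-\frac{1}{\pi}\int_{\mathbf{C}}\bar\partial\tilde g'(z)\int_{\Omega_L}\bigl[i(x_1\mathcal{V}_2-x_2\mathcal{V}_1)_x G_L(x,y;z)\bigr]_{y=x}\,dx\,dA(z),$$
where the subscript $x$ means $\mathcal{V}_j$ acts in the $x$ variable. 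I would then decompose $G_L=G+H_L$ with $H_L$ the Dirichlet correction. A naive substitution $G_L\mapsto G$ produces zero, because $G(\cdot,\cdot;z)$ is analytic in $z$ on the support of $\tilde g'$ (a complex neighborhood of $\Delta$); this is consistent with the bulk identity $g'(\mathcal{L})\equiv 0$. Hence the nontrivial contribution must come entirely from $H_L$, amplified by the unbounded position factor $|x|\sim L$ near $\partial\Omega_L$.

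The core step will be the analysis of $H_L$. Using \eqref{eq_GL_singularities_1}--\eqref{eq_G_sharp_singularities_2}, which control $G_L$ and an auxiliary half-plane ``sharp'' Green function $G_\sharp$ modelling reflection across the tangent line at $\partial\Omega_L$, I would localize $H_L$ to a boundary strip of width $o(L)$ and expand it in coordinates tangential/normal to the boundary. The antisymmetric weight $i(x_1\mathcal{V}_2-x_2\mathcal{V}_1)$ extracts precisely the tangential current circulating along $\partial\Omega_L$; integrating along arc length and performing the $z$-integration against $\bar\partial\tilde g'(z)$ would collapse the resolvent onto the spectral projection of the filled bands and yield the Berry-curvature density on $\mathbf{T}^2$. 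Combining the $|\partial\Omega_L|\sim L$ boundary length with the $|x|\sim L$ position weight then gives $|\Omega_L|\cdot\mathcal{C}_\Delta+o(|\Omega_L|)$, and dividing by $|\Omega_L|$ produces \eqref{eq_bec}.

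The main obstacle will be this boundary analysis: because the position operator amplifies $H_L$ by a factor of order $L$, one needs not merely smallness of $H_L$ but a precise identification of its leading $L$-weighted part with a translation-invariant Bloch integral, and this is exactly what the singularity estimates \eqref{eq_GL_singularities_1}--\eqref{eq_G_sharp_singularities_2} are designed to supply. A secondary technical point will be the interchange of trace and Helffer--Sj\"ostrand area integration, which demands uniform-in-$z$ trace-class bounds for the $|x|$-weighted resolvent of $\mathcal{L}_{\Omega_L}$ across a complex neighborhood of $\Delta$; elliptic regularity of $\mathcal{L}_{\Omega_L}$ should make this manageable, though the trace norm must be controlled uniformly in $L$.
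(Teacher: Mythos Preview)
Your proposal diverges from the paper's proof in a fundamental way, and I believe your route has a genuine gap that the stated hypotheses do not fill.

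First, a factual misreading: in this paper $G_\sharp(x,x';z)$ is the Green function of the \emph{full-plane} periodic operator $\mathcal{L}$ on $\mathbf{R}^2$, not a half-plane or tangent-plane reflection kernel. Hypotheses \eqref{eq_GL_singularities_1}--\eqref{eq_G_sharp_singularities_2} are uniform pointwise bounds on $G_L$ and $G_\sharp$ near the diagonal (logarithmic and $|x-x'|^{-1}$ singularities, with polynomial blow-up in $|\mathrm{Im}\,z|^{-1}$); they supply no asymptotic expansion of the Dirichlet correction $H_L=G_L-G_\sharp$ near $\partial\Omega_L$.

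This matters for your plan. You correctly observe that replacing $G_L$ by $G_\sharp$ in the raw trace formula annihilates the integrand (since $g'(\mathcal{L})=0$), so the whole content sits in $H_L$. But you then propose to extract $\mathcal{C}_\Delta$ from a boundary-layer analysis of $H_L$, weighted by the explicit position factor $|x|\sim L$. To get a \emph{specific number} out of that, smallness and localization of $H_L$ are not enough: you need the leading tangential profile of $H_L$ along $\partial\Omega_L$, uniformly in arc length and with error $o(1)$ after integration. The hypotheses of the theorem give only size bounds, not this profile; your ``collapse onto the Berry curvature'' step is asserted, not derived.

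The paper avoids this difficulty entirely by a structural trick you are missing. Using Green's formula inside $\Omega_L$ (and crucially the Dirichlet boundary condition), it shows that the position operators can be \emph{eliminated} from $EI_L(\Delta)$, yielding
\[
EI_L(\Delta)=-\frac{1}{\pi}\int_{\mathbf{C}}\partial_{\bar z}\tilde g(z)\int_{\Omega_L}\!\!\int_{\Omega_L}\sum_{i,j}\tilde\varepsilon_{ij}\,(\mathcal{V}_iG_L)(x,x';z)\,(\mathcal{V}_j\partial_zG_L)(x',x;z)\,dx\,dx'\,dm,
\]
and, independently, that $\mathcal{C}_\Delta$ equals the \emph{same} expression with $G_L$ replaced by $G_\sharp$ and averaged over $\Omega_L$. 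Once both sides are written in this identical $x$-free form, the difference is controlled simply by $G_L-G_\sharp$ via a layer-potential identity and the exponential decay of the Green functions (Propositions \ref{prop_G_sharp_decay}--\ref{prop_G_L_decay}); the singularity hypotheses are used only to handle the near-diagonal integrals, not to produce the Chern number. No sharp boundary asymptotics are needed, and the curved shape of $\partial\Omega$ enters only through $|\partial\Omega_L|=O(L)$.

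In short: your decomposition $G_L=G_\sharp+H_L$ with the position weight kept explicit forces you to prove a precise boundary expansion that the paper neither assumes nor proves. The paper's route is to first rewrite $EI_L$ so the position operators disappear, after which the comparison with $\mathcal{C}_\Delta$ becomes a soft Green-function estimate.
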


In the Appendix, we provide a physical interpretation of Theorem \ref{thm_bec} as a natural consequence of energy conservation. Using linear response theory, we demonstrate that the gap Chern number $\mathcal{C}_{\Delta}$ characterizes the expectation of photonic energy transport induced by external perturbations, analogous to the Hall conductance in the quantum Hall effect (QHE). On the other hand, as suggested by its definition, the edge index \eqref{eq_edge_index} describes the energy circulation near the boundary $\partial \Omega_L$. By energy conservation, these two quantities must be equal in a lossless system, as indicated by the equality in \eqref{eq_bec}.

The emergence of edge spectrum follows directly from Theorem \ref{thm_bec}:
\begin{corollary} \label{corol_edge_spectrum}
Let $\Delta_n$, $1\leq n \leq N$, be a finite collection of pairwise disjoint open intervals contained in $\Delta$. If $\mathcal{C}_{\Delta}\neq 0$, then for sufficiently large $L>0$, we have $\sigma(\mathcal{L}_{\Omega_L})\cap \Delta_n \neq \emptyset$ for all $1\leq n \leq N$.  
\end{corollary}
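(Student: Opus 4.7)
My plan is to argue by contradiction via the contrapositive of Theorem \ref{thm_bec}. Suppose the conclusion fails: then for every $L_0>0$ there exist $L>L_0$ and some index $n\in\{1,\dots,N\}$ with $\sigma(\mathcal{L}_{\Omega_L})\cap\Delta_n=\emptyset$. Because $N$ is finite, by pigeonhole I can extract a single index $n_0$ and an unbounded sequence $L_k\to\infty$ along which $\sigma(\mathcal{L}_{\Omega_{L_k}})\cap\Delta_{n_0}=\emptyset$. My goal is to show that this forces $\mathcal{C}_\Delta=0$, contradicting the hypothesis $\mathcal{C}_\Delta\neq 0$.

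The key is to select the smooth cutoff $g$ used in Definition \ref{def_edge_index} adapted to the offending interval $\Delta_{n_0}$. Because $\Delta_{n_0}$ is an open subinterval of $\Delta=(\lambda_{\text{low}},\lambda_{\text{upp}})$, I can pick $g\in C^\infty(\mathbf{R})$ equal to $1$ on $(-\infty,\lambda_{\text{low}}]$, equal to $0$ on $[\lambda_{\text{upp}},\infty)$, and whose entire transition from $1$ to $0$ is concentrated inside $\Delta_{n_0}$; i.e.\ $\operatorname{supp}(g')\subset\Delta_{n_0}$. Such a $g$ satisfies all requirements preceding Definition \ref{def_edge_index} and so is admissible in the edge-index trace formula.

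With this $g$ fixed, the spectral theorem applied to the self-adjoint operator $\mathcal{L}_{\Omega_{L_k}}$ (which has purely discrete spectrum) gives
\begin{equation*}
g'(\mathcal{L}_{\Omega_{L_k}})=\sum_{\lambda_j\in\operatorname{supp}(g')}g'(\lambda_j)\,P_j=0,
\end{equation*}
since the assumption $\sigma(\mathcal{L}_{\Omega_{L_k}})\cap\Delta_{n_0}=\emptyset$ makes the indexing set empty. Consequently $EI_{L_k}(\Delta)=\operatorname{Tr}_{\Omega_{L_k}}\bigl(i(x_1\mathcal{V}_2-x_2\mathcal{V}_1)\cdot 0\bigr)=0$ for every $k$. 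Dividing by $|\Omega_{L_k}|$ and passing to the limit along this subsequence via Theorem \ref{thm_bec} yields $\mathcal{C}_\Delta=\lim_{k\to\infty}|\Omega_{L_k}|^{-1}EI_{L_k}(\Delta)=0$, the desired contradiction.

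The only subtlety I anticipate is bookkeeping rather than a real obstacle: the edge index in Definition \ref{def_edge_index} depends implicitly on the chosen cutoff $g$, so the argument above requires Theorem \ref{thm_bec} to hold uniformly for every admissible $g$, not merely for one fixed representative. This is, however, the natural form in which the theorem must be proved, since the right-hand side $\mathcal{C}_\Delta$ is manifestly $g$-independent; provided the proof of Theorem \ref{thm_bec} uses only the stated structural properties of $g$ (in particular, accommodates any $g$ with $\operatorname{supp}(g')$ contained in an arbitrary open subinterval of $\Delta$), the deduction above goes through verbatim and the corollary follows.
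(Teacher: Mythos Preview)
Your proof is correct and takes essentially the same approach as the paper's: choose $g$ with $\operatorname{supp}(g')\subset\Delta_{n}$, observe that an empty spectrum in $\Delta_n$ forces $g'(\mathcal{L}_{\Omega_L})=0$ and hence $EI_L=0$, and invoke Theorem~\ref{thm_bec} to reach a contradiction with $\mathcal{C}_\Delta\neq 0$. The paper argues directly rather than by contradiction (for each $n$ the limit $\lim_{L\to\infty}|\Omega_L|^{-1}EI_L(\Delta_n)=\mathcal{C}_{\Delta_n}=\mathcal{C}_\Delta\neq 0$ forces $EI_L(\Delta_n)\neq 0$ for large $L$, then take the maximum over the finitely many thresholds), which avoids your pigeonhole/subsequence step, but the logical content is identical.
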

\begin{proof}
For each $n$, we have 
$$
\lim_{L\to\infty}\frac{1}{|\Omega_L|}EI_{L}(\Delta_n)
=\mathcal{C}_{\Delta_n} = \mathcal{C}_{\Delta} \neq 0.
$$
Therefore, for $L$ sufficiently large, $EI_L(\Delta_n) \neq 0$ for all $1\leq n \leq N$. 
It follows that $\sigma(\mathcal{L}_{\Omega_L})\cap \Delta_n \neq \emptyset$ for all $1\leq n \leq N$.
\end{proof}

\begin{remark}
The Chern number vanishes when the operator $\mathcal{L}$ is time-reversal symmetric, meaning that the coefficient matrix $A$ is real-valued. Therefore, introducing complex coefficients in $A$ is essential to achieve nontrivial topology. In experiments, complex coefficients are typically introduced using magneto-optic materials \cite{haldane08realization} or bi-anisotropic media \cite{khanikaev2013photonic}; see also \cite{zhu2019elliptic,ozawa19topological_photonics}.
\end{remark}

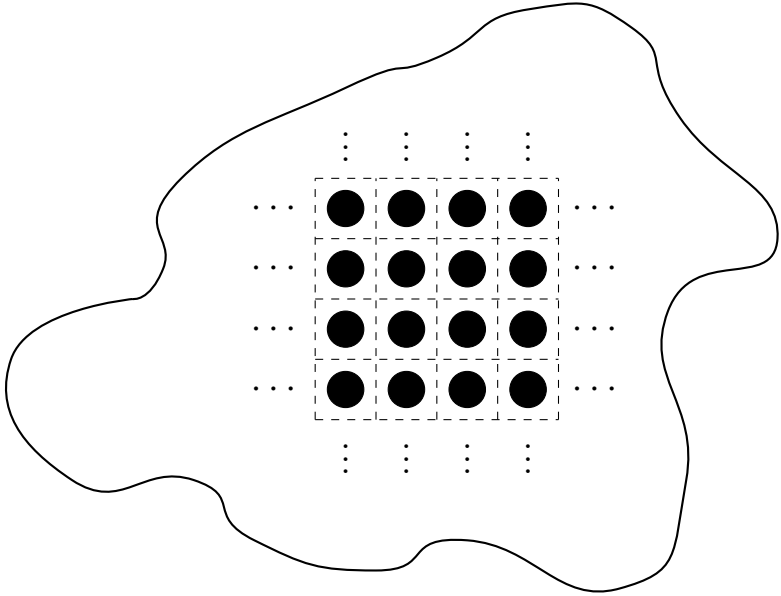
\begin{figure}
\begin{center}
\begin{tikzpicture}[scale=0.8]
\draw[dashed] ({-2},{0})--({2},{0});
\draw[dashed] ({-2},{1})--({2},{1});
\draw[dashed] ({-2},{2})--({2},{2});
\draw[dashed] ({-2},{-1})--({2},{-1});
\draw[dashed] ({-2},{-2})--({2},{-2});
\draw[dashed] ({0},{-2})--({0},{2});
\draw[dashed] ({1},{-2})--({1},{2});
\draw[dashed] ({2},{-2})--({2},{2});
\draw[dashed] ({-1},{-2})--({-1},{2});
\draw[dashed] ({-2},{-2})--({-2},{2});
\draw[fill=black,opacity=1] ({1/2},{1/2}) ellipse(0.3 and 0.3);
\draw[fill=black,opacity=1] ({1+1/2},{1/2}) ellipse(0.3 and 0.3);
\draw[fill=black,opacity=1] ({-1+1/2},{1/2}) ellipse(0.3 and 0.3);
\draw[fill=black,opacity=1] ({-2+1/2},{1/2}) ellipse(0.3 and 0.3);
\draw[fill=black,opacity=1] ({1/2},{1/2+1}) ellipse(0.3 and 0.3);
\draw[fill=black,opacity=1] ({1+1/2},{1/2+1}) ellipse(0.3 and 0.3);
\draw[fill=black,opacity=1] ({-1+1/2},{1/2+1}) ellipse(0.3 and 0.3);
\draw[fill=black,opacity=1] ({-2+1/2},{1/2+1}) ellipse(0.3 and 0.3);
\draw[fill=black,opacity=1] ({1/2},{1/2-1}) ellipse(0.3 and 0.3);
\draw[fill=black,opacity=1] ({1+1/2},{1/2-1}) ellipse(0.3 and 0.3);
\draw[fill=black,opacity=1] ({-1+1/2},{1/2-1}) ellipse(0.3 and 0.3);
\draw[fill=black,opacity=1] ({-2+1/2},{1/2-1}) ellipse(0.3 and 0.3);
\draw[fill=black,opacity=1] ({1/2},{1/2-2}) ellipse(0.3 and 0.3);
\draw[fill=black,opacity=1] ({1+1/2},{1/2-2}) ellipse(0.3 and 0.3);
\draw[fill=black,opacity=1] ({-1+1/2},{1/2-2}) ellipse(0.3 and 0.3);
\draw[fill=black,opacity=1] ({-2+1/2},{1/2-2}) ellipse(0.3 and 0.3);

\node[left,scale=1.2] at ({-2},{1/2}) {$\cdots$};
\node[left,scale=1.2] at ({-2},{1/2+1}) {$\cdots$};
\node[left,scale=1.2] at ({-2},{1/2-1}) {$\cdots$};
\node[left,scale=1.2] at ({-2},{1/2-2}) {$\cdots$};
\node[right,scale=1.2] at ({2},{1/2}) {$\cdots$};
\node[right,scale=1.2] at ({2},{1/2+1}) {$\cdots$};
\node[right,scale=1.2] at ({2},{1/2-1}) {$\cdots$};
\node[right,scale=1.2] at ({2},{1/2-2}) {$\cdots$};
\node[above,scale=1.2] at ({1/2},{2+1/16}) {$\vdots$};
\node[above,scale=1.2] at ({1/2+1},{2+1/16}) {$\vdots$};
\node[above,scale=1.2] at ({1/2-1},{2+1/16}) {$\vdots$};
\node[above,scale=1.2] at ({1/2-2},{2+1/16}) {$\vdots$};
\node[below,scale=1.2] at ({1/2},{-2+1/8}) {$\vdots$};
\node[below,scale=1.2] at ({1/2+1},{-2+1/8}) {$\vdots$};
\node[below,scale=1.2] at ({1/2-1},{-2+1/8}) {$\vdots$};
\node[below,scale=1.2] at ({1/2-2},{-2+1/8}) {$\vdots$};

\draw [thick] plot [smooth,samples=400, tension=1] coordinates {(-5,0) (-7,-1) (-6,-3) (-4,-3) (-3,-4) (-1,-4.5) (0.5,-4) (3,-4.8) (4.1,-3) (3.8,-0.2) (5.6,1) (4,3) (3.2,4.5) (1.5,4.8) (0,4) (-1.5,3.5)  (-4.2,2) (-4.5,0.5) (-5,0)};
\end{tikzpicture}
\caption{A finite sample consists of dielectric rods with PEC boundary.}
\label{fig_finite_sample}
\end{center}
\end{figure}

\subsection{Outline of the Proof of Theorem \ref{thm_bec}} \label{sec_outline_proof}
We outline the main ideas behind the proof of Theorem \ref{thm_bec}, which builds on the approach of Mario \cite{mario19proof}.  The key of the proof is that both the gap Chern number and the edge index are expressed in terms of appropriate Green functions.

Let $\tilde{g}\in C_{0}^{\infty}(\mathbf{C})$ be the almost analytic extension of $g$ (see Section 3.1 for an introduction). Throughout this paper, we denote the integral kernel of resolvents $R(z,\mathcal{L})=(\mathcal{L}-z)^{-1}$ and $R(z,\mathcal{L}_{B_{L}})=(\mathcal{L}_{\Omega_L}-z)^{-1}$ by $G_{\sharp}(x,x^{\prime};z)$ and $G_{L}(x,x^{\prime};z)$, respectively. These represent the Green functions for the periodic and finite structure, respectively.  A key observation is that the gap Chern number $\mathcal{C}_{\Delta}$ can be expressed as follows:
\begin{theorem} \label{prop_Chern_number_expression}
\begin{equation} \label{eq_Chern_number_expression}
\begin{aligned}
\mathcal{C}_{\Delta}
=-\frac{1}{\pi}\lim_{L\to \infty}\frac{1}{|\Omega_L|}\int_{\mathbf{C}}dm\frac{\partial \tilde{g}(z)}{\partial \overline{z}}
\int_{\Omega_L}dx\int_{\Omega_L}dx^{\prime}
\sum_{i,j\in\{1,2\}}\tilde{\varepsilon}_{ij}(\mathcal{V}_i G_{\sharp})(x,x^{\prime};z)
(\mathcal{V}_j \partial_z G_{\sharp})(x^{\prime},x;z).
\end{aligned}
\end{equation}
Here $|\Omega_L|$ denotes the area of $\Omega_L$, $m$ is the Lebesgue measure on $\mathbf{C}$ and $\tilde{\varepsilon}_{ij}:=\varepsilon_{ij3}$, where $\varepsilon_{ijk}$ is the standard Levi-Civita symbol.
\end{theorem}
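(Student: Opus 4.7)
The plan is to express $\mathcal{C}_\Delta$ as a Kubo--Streda trace of resolvents, convert the resulting contour integral to an almost-analytic (Helffer--Sj\"ostrand) form, and finally realize the trace per unit volume as a spatial average over $\Omega_L$.

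First, by the Bloch--Floquet decomposition $\mathcal{L}=\int^{\oplus}_{\mathbf{T}^2}\mathcal{L}(\kappa)\,d\kappa$ with $\mathcal{L}(\kappa)=e^{-i\kappa\cdot x}\mathcal{L}\,e^{i\kappa\cdot x}$, one computes $\partial_{\kappa_j}\mathcal{L}(\kappa)=i\,e^{-i\kappa\cdot x}\mathcal{V}_j\,e^{i\kappa\cdot x}$, and first-order perturbation theory yields $\partial_{\kappa_j}u_n=\sum_{m\neq n}\frac{i(u_m,\mathcal{V}_j u_n)}{\lambda_n-\lambda_m}u_m+i\alpha_j u_n$ for some gauge choice $\alpha_j\in\mathbf{R}$. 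Substituting into \eqref{eq_chern_num}, pairs with $n,m\in F$ cancel by anti-symmetry of $\mathrm{Im}$, and the remaining sum over $(n,m)\in F\times E$ is recognised as a contour integral via the elementary identity $\frac{1}{2\pi i}\oint_\gamma\frac{dz}{(z-\lambda_n)^2(z-\lambda_m)}=\frac{1}{(\lambda_n-\lambda_m)^2}$ (for $n\in F,\,m\in E$, with $\gamma\subset\Delta$ enclosing the filled spectrum). After antisymmetrising in $(i,j)$, this produces the Kubo--Streda formula
\begin{equation*}
\mathcal{C}_\Delta=\frac{C}{2\pi i}\oint_\gamma\sum_{i,j}\tilde\varepsilon_{ij}\,\mathcal{T}\!\left[\mathcal{V}_i R(z,\mathcal{L})\,\mathcal{V}_j R(z,\mathcal{L})^2\right]dz,
\end{equation*}
where $\mathcal{T}$ denotes the trace per unit volume and $C$ is an explicit numerical constant to be tracked.

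Second, applying Stokes' theorem on $\mathbf{C}\setminus\sigma(\mathcal{L})$ to $\tilde g(z)h(z)$ with $h(z):=\mathcal{T}[\sum_{i,j}\tilde\varepsilon_{ij}\mathcal{V}_i R(z)\mathcal{V}_j R(z)^2]$ analytic off the spectrum, and using $\tilde g\equiv 1$ near filled bands and $\tilde g\equiv 0$ near empty bands, gives $\frac{1}{2\pi i}\oint_\gamma h\,dz=-\frac{1}{\pi}\int_{\mathbf{C}}dm(z)\,\partial_{\bar z}\tilde g(z)\,h(z)$. Third, $\mathcal{T}$ is realised as a spatial average over $\Omega_L$: for jointly periodic kernels $A(x+e,y+e)=A(x,y)$ (and similarly for $B$) whose product $A(x,y)B(y,x)$ decays in $|x-y|$, decomposing $\Omega_L$ into near-unit cells and changing variables gives
\begin{equation*}
\mathcal{T}[AB]=\lim_{L\to\infty}\frac{1}{|\Omega_L|}\int_{\Omega_L}\int_{\Omega_L}A(x,x')B(x',x)\,dx\,dx',
\end{equation*}
with boundary contribution $O(|\partial\Omega_L|/|\Omega_L|)=O(L^{-1})$. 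Applied to $A=\mathcal{V}_i R(z,\mathcal{L})$ and $B=\mathcal{V}_j R(z,\mathcal{L})^2$, whose kernels are $\mathcal{V}_i^{(x)}G_\sharp(x,x';z)$ and $\mathcal{V}_j^{(x')}\partial_z G_\sharp(x',x;z)$ (using $R^2=\partial_z R$), this reproduces the inner double integral of \eqref{eq_Chern_number_expression}; dominated convergence then interchanges $\lim_L$ with $\int dm$.

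The main obstacle is the spatial-averaging step. The kernel $G_\sharp$ carries a logarithmic diagonal singularity that the velocities $\mathcal{V}_i$ sharpen further, so both diagonal integrability and smallness of the boundary layer of $\Omega_L$ must be controlled. The singularity bounds in \eqref{eq_G_sharp_singularities_2} are tailored precisely for this, and also furnish the uniform-in-$z$ control (on the compact support of $\partial_{\bar z}\tilde g$ inside $\Delta$, where $z$ stays bounded away from $\sigma(\mathcal{L})$) required to justify the final interchange of limit and integration. The remaining ingredients are standard residue calculus, Stokes' theorem, and elliptic functional calculus.
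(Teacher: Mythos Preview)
Your route---perturbation theory for the Berry curvature, then a Kubo--Str\v{e}da/residue conversion to resolvent traces, then Helffer--Sj\"ostrand, then spatial averaging over $\Omega_L$---matches the paper's four-step proof in Section~4. But the justification you offer for uniform-in-$z$ control has a gap. You assert that on $\operatorname{supp}\partial_{\bar z}\tilde g$ the point $z$ ``stays bounded away from $\sigma(\mathcal L)$'' and that the bound \eqref{eq_G_sharp_singularities_2} then gives uniform control. Even granting the first claim (which depends on the particular almost-analytic extension chosen), the estimates \eqref{eq_G_sharp_singularities_1}--\eqref{eq_G_sharp_singularities_2} and the decay \eqref{eq_G_sharp_decay_1} carry prefactors $|\operatorname{Im} z|^{-q}$ and rates $|\operatorname{Im} z|^{p}$, not $\operatorname{dist}(z,\sigma(\mathcal L))$; for $z$ on the real axis inside the gap they give no control whatsoever. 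What actually makes every $z$-integral and every dominated-convergence step work is the almost-analytic property $\partial_{\bar z}\tilde g=O(|\operatorname{Im} z|^{\infty})$, which absorbs any fixed power $|\operatorname{Im} z|^{-N}$. The paper leans on this mechanism throughout (Proposition~\ref{prop_interchange_dkappa_dz}, and the argument following \eqref{eq_trace_approx_final_1}); you need to invoke it explicitly rather than appeal to distance from the spectrum.

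A secondary point: your Stokes conversion is looser than the paper's treatment. Rather than manipulating a contour integral for the whole-space trace $h(z)=\mathcal T[\mathcal V_iR(z)\mathcal V_jR(z)^2]$, the paper works fiber-by-fiber in $\kappa$, where the spectrum is discrete, computes the residues of the meromorphic transition functions $M_{FE},M_{EF},M_{FF},M_{EE}$ explicitly (Lemma~\ref{lem_four_mero_functions} via Corollary~\ref{corol_residue_formula_for_g}), carefully justifies the interchange of $\sum_{n,m}$ with $\int_{Y\times Y}$ using the convergence results of Propositions~\ref{prop_Gsharp_floquet_series_convergence}--\ref{prop_Gsharp_z_floquet_series_convergence}, and only then integrates over $\kappa$ and applies Floquet--Parseval to reach whole-space Green functions (Step~4). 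This sidesteps the branch-cut structure of $h$ along the bands, where your direct Stokes argument on $\tilde g\,h$ over $\mathbf C\setminus\sigma(\mathcal L)$ would require substantially more care.
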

Similarly, and importantly, the edge index $EI_{L}(\Delta)$ admits a similar expression in terms of the Green function $G_{L}$:
\begin{theorem} \label{prop_edge_index_expression}
\begin{equation} \label{eq_edge_index_expression}
\begin{aligned}
EI_{L}(\Delta)
=-\frac{1}{\pi}\int_{\mathbf{C}}dm\frac{\partial \tilde{g}(z)}{\partial \overline{z}}
\int_{\Omega_L}dx\int_{\Omega_L}dx^{\prime}
\sum_{i,j\in\{1,2\}}\tilde{\varepsilon}_{ij}(\mathcal{V}_i G_{L})(x,x^{\prime};z)
(\mathcal{V}_j \partial_z G_{L})(x^{\prime},x;z).
\end{aligned}
\end{equation}
\end{theorem}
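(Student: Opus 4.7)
The plan, following Mario \cite{mario19proof}, is to pass $g'(\mathcal{L}_{\Omega_L})$ through the Helffer--Sjöstrand functional calculus so that $EI_L(\Delta)$ is represented in terms of the resolvent $R(z,\mathcal{L}_{\Omega_L})=(\mathcal{L}_{\Omega_L}-z)^{-1}$, and then to absorb the position operators $x_1,x_2$ inside the trace using the single commutator identity $[x_i,R(z,\mathcal{L}_{\Omega_L})]=R(z,\mathcal{L}_{\Omega_L})\,\mathcal{V}_i\,R(z,\mathcal{L}_{\Omega_L})$. Differentiating the standard Helffer--Sjöstrand formula under the integral sign gives
$$g'(\mathcal{L}_{\Omega_L})=-\frac{1}{\pi}\int_{\mathbf{C}}dm(z)\,\frac{\partial\tilde g}{\partial\bar z}(z)\,\partial_z R(z,\mathcal{L}_{\Omega_L}),$$
where $\partial_z R=R^2$. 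Since $\tilde g\in C_c^\infty(\mathbf{C})$ and $\mathcal{L}_{\Omega_L}$ has compact resolvent on the bounded domain $\Omega_L$, the operator $(x_1\mathcal{V}_2-x_2\mathcal{V}_1)\partial_z R(z,\mathcal{L}_{\Omega_L})$ is trace class uniformly for $z\in\text{supp}(\bar\partial\tilde g)$, so I would interchange trace and integral and write $EI_L(\Delta)$ as an integral against $\bar\partial\tilde g$ of $\text{Tr}_{\Omega_L}\!\big((x_1\mathcal{V}_2-x_2\mathcal{V}_1)\,\partial_z R(z,\mathcal{L}_{\Omega_L})\big)$.

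Next, I would establish the commutator identity by sandwiching the trivial equality $[\mathcal{L}_{\Omega_L}-z,x_i]=\mathcal{V}_i$ between two copies of the resolvent; this requires checking that multiplication by the bounded function $x_i$ preserves $H^1_0(\Omega_L)$ and maps the Dirichlet domain $H^1_0(\Omega_L)\cap H^2(\Omega_L)$ into itself, which is immediate on a bounded region. Using this identity and cyclicity of the trace,
$$\text{Tr}(\mathcal{V}_i R\,\mathcal{V}_j\partial_z R)=\text{Tr}\!\big(\mathcal{V}_i[x_j,R]R\big)=\text{Tr}(\mathcal{V}_i x_j R^2)-\text{Tr}(\mathcal{V}_i R x_j R).$$
The second trace vanishes because $\text{Tr}(\mathcal{V}_i R x_j R)=\text{Tr}(x_j[x_i,R])=\text{Tr}([x_j,x_i]R)=0$, using the commutativity of the position operators. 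The Jacobi identity then gives $[\mathcal{V}_i,x_j]=[\mathcal{V}_j,x_i]$, from which
$$\sum_{i,j\in\{1,2\}}\tilde\varepsilon_{ij}\,\text{Tr}(\mathcal{V}_i R\,\mathcal{V}_j\partial_z R)=-\text{Tr}\!\big((x_1\mathcal{V}_2-x_2\mathcal{V}_1)\partial_z R\big).$$
Substituting back into the Helffer--Sjöstrand representation and reassembling the overall prefactor to $-1/\pi$, one arrives at the claimed identity after writing each trace as the double integral $\int_{\Omega_L}\!dx\int_{\Omega_L}\!dx'\,(\mathcal{V}_i G_L)(x,x';z)(\mathcal{V}_j\partial_z G_L)(x',x;z)$, noting that $(\mathcal{V}_i G_L)(x,x';z)$ is precisely the kernel of $\mathcal{V}_i R(z,\mathcal{L}_{\Omega_L})$ with $\mathcal{V}_i$ acting in the first variable, and similarly for $\mathcal{V}_j\partial_z R$.

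The principal obstacle is trace-class justification: each intermediate operator appearing in the algebraic manipulation, in particular $\mathcal{V}_i R x_j R$ and $\mathcal{V}_i x_j R^2$, must be individually trace class so that cyclic permutations are legitimate. This reduces to Hilbert--Schmidt bounds for $R(z,\mathcal{L}_{\Omega_L})$ and $\nabla R(z,\mathcal{L}_{\Omega_L})$ uniformly on $\text{supp}(\bar\partial\tilde g)$, which follow from the ellipticity of $\mathcal{L}$, the $C^3$-regularity of $A$, and the $H^2$-regularity of Dirichlet eigenfunctions on the smooth domain $\Omega_L$, together with the boundedness of $x_i$ and the first-order nature of $\mathcal{V}_i$. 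Once these technicalities are in place, the proof reduces to the purely algebraic identity above; no estimate on Green function singularities (of the type \eqref{eq_GL_singularities_1}--\eqref{eq_G_sharp_singularities_2}) is needed at this stage, since those enter only in the large-$L$ analysis that proves \eqref{eq_bec} itself.
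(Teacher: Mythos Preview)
Your operator-algebraic route via cyclicity is different from the paper's, which works entirely with integral kernels: it computes the kernel of $\mathcal{V}_j x_i\partial_z R$ by solving a boundary-value problem, splits the diagonal integral into the desired term plus the ``cross term'' $\int_{\Omega_L\times\Omega_L}\sum_{i,j}\tilde\varepsilon_{ij}\,\mathcal{V}_i G_L(x,x';z)\,x'_j G_L(x',x;z)\,dx\,dx'$, and then shows that cross term vanishes by an explicit integration-by-parts argument in which the Dirichlet condition kills the $\partial\Omega_L$ boundary contributions and the near-diagonal contributions are controlled using Hypothesis~\ref{prop_GL_singularities}. Your commutator algebra is morally equivalent to this computation, and is conceptually cleaner when it works.

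There is, however, a genuine gap in the justification you propose. In two dimensions $\nabla R(z,\mathcal{L}_{\Omega_L})$ is \emph{not} Hilbert--Schmidt: its kernel behaves like $|x-x'|^{-1}$ near the diagonal, which is not in $L^2(\Omega_L\times\Omega_L)$ (equivalently, $\|\mathcal{V}_i R v_n\|^2\sim\lambda_n^{-1}\sim n^{-1}$ by Weyl, and the sum diverges). So the reduction to ``HS bounds for $R$ and $\nabla R$'' fails. More seriously, the step $\text{Tr}\big(x_j[x_i,R]\big)=\text{Tr}\big([x_j,x_i]R\big)=0$ is not a legal cyclic permutation: expanding it requires that $x_j R x_i$ and $x_i x_j R$ be individually trace class, and they are only Hilbert--Schmidt. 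The identity $\text{Tr}\big(x_j[x_i,R]\big)=0$ \emph{is} true, but proving it forces you back to kernels: one must use that the kernel of $[x_i,R]$ is $(x_i-x_i')G_L(x,x';z)$, which is continuous on $\overline{\Omega_L}\times\overline{\Omega_L}$ and vanishes on the diagonal precisely because $G_L$ has only a logarithmic singularity---i.e.\ Hypothesis~\ref{prop_GL_singularities}. So the Green-function singularity estimates are \emph{not} avoidable at this stage, contrary to your last paragraph; the paper uses them explicitly in its Step~4, and your approach needs them implicitly to legitimize the trace manipulation you treat as purely algebraic.
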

\begin{proof}[Proof of Theorem \ref{prop_Chern_number_expression} and \ref{prop_edge_index_expression}]
See Sections 4 and 5.
\end{proof}
We emphasize that the similarity between the formula \eqref{eq_Chern_number_expression} and \eqref{eq_edge_index_expression} is a consequence of the Dirichlet-type boundary,  which plays a crucial role in ensuring energy conservation and, consequently, the bulk-edge correspondence in finite structures. This connection is explicitly established in the proof of Theorem \ref{prop_edge_index_expression}.
From this perspective, the results of this paper can be readily extended to domains with other boundary conditions such as Neumann-type boundaries.

The remainder of the proof involves estimating the difference between the two expressions \eqref{eq_Chern_number_expression} and \eqref{eq_edge_index_expression} as $L \to \infty$, which is carried out in Section 6. The key idea is that the influence of the boundary on the bulk region becomes negligible as the domain size increases, due to the absence of propagating modes within the spectral gap of the bulk. This is demonstrated in the proof by leveraging the exponential decay of the Green functions. For further details, see Section 6.

\subsection{Relation to previous works and perspectives} \label{sec_relation_perspec}

Compared with the previous studies of BEC, which focus primarily on domains with infinite size, the main result of this paper, Theorem \ref{thm_bec}, holds on finite domains. It states that if the gap Chern number of the bulk medium is nonzero (i.e., the medium is topologically nontrivial), the truncated system necessarily exhibits an in-gap spectrum as long as the domain is sufficiently large (but still finite). On the other hand, we note that an important feature of the usual edge index in the previous studies on infinite structures, such as the in-gap spectral flow along the direction of edge \cite{avila2013shortrange+transfer}, is that one can independently prove its quantization. However, the edge index defined in \eqref{eq_edge_index} is not quantized; instead, Theorem \ref{thm_bec} shows that it scales as $\mathcal{O}(L^2)$, with its average value converging to an integer as $L \to \infty$. Loosely speaking, this scaling can be understood intuitively: on a domain of diameter $L$, the position operator $x_i$ scales as $\mathcal{O}(L)$, and the momentum operator $\mathcal{V}_i = [\mathcal{L}, x_i]$ scales similarly. This distinction suggests that the quantization of the edge index is not a universal feature but depends on the specific system under consideration.

We point out that Theorem \ref{thm_bec} can be extended to quantum Hall systems on a finite sample by replacing the divergence-form operator $\mathcal{L} = -\nabla \cdot A \nabla$ with the magnetic Schrödinger Hamiltonian $\mathcal{H} = -(\nabla + i\mathcal{A})^2 + V$, where $\mathcal{A}$ and $V$ denote the vector and scalar potentials, respectively. In this case, the Chern number \eqref{eq_chern_num} should be replaced by the Hall conductance expressed through the Kubo formula (cf. \cite{Avron1994charge,taarabt2014landau+Ktheory,elgart2005shortrange+functional}, etc.):
\begin{equation} \label{eq_conduct_kubo}
    \sigma=\lim_{R\to\infty}\frac{i}{4R^2}\text{Tr}_{(-R,R)^2} \big(P_{E}\big[[P_{E},x_1],[P_{E},x_2]\big]\big),
\end{equation}
where $P_{E}=\mathbbm{1}_{(-\infty,E)}(\mathcal{H})$ ($E\in\Delta$) denotes the projection to the ground state with Fermi level lying in the gap. It is known that when $\mathcal{H}$ is periodic, the Hall conductance \eqref{eq_conduct_kubo} coincides with the Chern number \eqref{eq_chern_num} \cite{Avron1994charge}. More importantly, regardless of whether $\mathcal{H}$ is periodic, the Hall conductance can be expressed in the form \eqref{eq_Chern_number_expression}. Similarly, the edge index \eqref{eq_edge_index} remains valid in this setting by replacing $\mathcal{L}|_{\Omega_L}$ with the Hamiltonian $\mathcal{H}|_{\Omega_L} := \mathcal{H}\big|_{H_0^1(\Omega_L)}$ on the truncated domain. Physically, the edge index in this case represents the expectation value of the orbital angular momentum of in-gap electrons. With these adjustments, the proof of BEC for finite quantum Hall systems follows the same lines as sketched in Section \ref{sec_outline_proof}. 

We also expect that Theorem \ref{thm_bec} can be extended to a general second-order periodic self-adjoint elliptic operator of the form $\mathcal{L}=\sum_{|\alpha|\leq 2}c_{\alpha}(x)\partial^{\alpha}$, as long as the regularity estimates of the Green functions in Section 3 hold.

Another direction of generalization involves the consideration of disorder, which is common in realistic physical systems. This work focuses exclusively on the idealized case of a clean (disorder-free) system. However, we expect our results to extend to disordered systems. In such cases, the coefficient matrix $A$ may include a random perturbation of the form $A_{\omega}(x)=\sum_{n\in\mathbf{Z}^2}\omega_{n}s(x-n)$, where $\{\omega_{n}\}_{n}$ are i.i.d. random variables and $s(x)$ is a compactly supported single site function. To ensure that edge transport remains unaffected, it is generally assumed that the disorder induces only global localization without contributing to edge conduction phenomena \cite{Kellendonk02landau+ktheory,Kellendonk04landau+functional}. This scenario can be modeled, for example, by replacing the spectral gap $\Delta$ with a mobility gap \cite{elgart2005shortrange+functional,taarabt2014landau+Ktheory,cornean2021landau+functional}. Under these conditions, we expect Theorem \ref{thm_bec} to hold in the sense of expectation over randomness, a direction that we leave for future investigation.

Finally, we remark that Theorem \ref{thm_bec} is expected to remain valid when the smoothness assumption $A\in C^3(\mathbf{R}^2,\mathbf{C}^{2\times 2})$ is relaxed to allow $A$ to be piecewise constant, as is often the case in realistic physical systems. A possible approach to proving Theorem \ref{thm_bec} proceeds as follows. One starts with the divergence-form operator $\mathcal{L}$ with piecewise-constant coefficient matrix $A$, and construct a ``mollified'' operator $\mathcal{L}^{\epsilon}$, whose coefficient matrix $A^{\epsilon}$ is smooth and converges to $A$ as $\epsilon\to 0$. The bulk and edge indices associated with $\mathcal{L}^{\epsilon}$ are denoted as $\mathcal{C}^{\epsilon}$ and $EI_{L}^{\epsilon}$, respectively. With the correspondence between $\mathcal{C}^{\epsilon}$ and $EI_{L}^{\epsilon}$ established in this paper, it is sufficient to prove the convergence $\mathcal{C}^{\epsilon} \to \mathcal{C}$ and $EI_{L}^{\epsilon}\to EI_{L}$ as $\epsilon \to 0$. This can be achieved by establishing appropriate convergence of the eigenpairs of $\mathcal{L}^{\epsilon}$ and $\mathcal{L}^{\epsilon}\big|_{H_{0}^{1}(\Omega_L)}$ to those of $\mathcal{L}$ and $\mathcal{L}\big|_{H_{0}^{1}(\Omega_L)}$, respectively, which can be accomplished by standard perturbation analysis. We do not address this generalization here and leave this for interested readers.

\section{Preliminary}

\subsection{Notations and conventions}
In this paper, $B_{x_0,r}$ denotes an open ball in $\mathbf{R}^2$ centered at $x_0$ with radius $r$. When $x_0=0$, we abbreviate the notation as $B_r:=B_{0,r}$. For any $B_{x,r}$, $\chi_{x, r}$ denotes its indicator function.

The brackets $(\cdot,\cdot)_{\mathcal{H}}$ denote the inner product in the Hilbert space $\mathcal{H}$. If no subscript is present, $(\cdot,\cdot)$ denote the usual inner product in $\mathcal{H}=L^2(Y)$ with $Y=(0,1)\times (0,1)$ being the unit cell.

For $\kappa\in Y^*$ with $Y^*$ being the dual cell, we denote $L_{\kappa}^2(Y):=\{u\in L^2_{loc}(\mathbf{R}^2):\, u(x+n)=e^{i\kappa\cdot n}u(x)\,(\forall n\in\mathbf{Z}^2)\}$ which is equipped with $L^2(Y)$ inner product. The corresponding Sobolev spaces are denotes as $H^{m}_{\kappa}(Y):=\{u\in H^{m}_{loc}(\mathbf{R}^2):\, u\in L^2_{\kappa}(Y)\}$.

We denote $\mathcal{L}$ by the self-adjoint operator $-\nabla\cdot A\nabla$ that acts on $L^2(\mathbf{R}^2)$ and has the domain $\text{Dom}(\mathcal{L})=\{u\in L^2(\mathbf{R}^2):\,-\nabla\cdot A\nabla u\in L^2(\mathbf{R}^2)\}$. 

We denote $\mathcal{L}_{\Omega_L}:=\mathcal{L}\big|_{H_0^1(\Omega_L)}$ the restriction of $\mathcal{L}$ to $\Omega_L$ with the Dirichlet boundary conditions.

We denote $\mathcal{L}(\kappa)=\mathcal{L}|_{L_{\kappa}^2}$ the restriction of $\mathcal{L}$ to $L_{\kappa}^2(Y)$, which is referred to as
the Floquet transform of $\mathcal{L}$. 

We denote $\tilde{\mathcal{L}}(\kappa):=-(\nabla+i\kappa)\cdot A(\nabla+i\kappa)$, which acts on the space of periodic functions $L_{\kappa=0}^2(Y)$. $\tilde{\mathcal{L}}(\kappa)$ is 
unitarily equivalent to $\mathcal{L}(\kappa)$ . More precisely, $\mathcal{L}(\kappa)=e^{-i\kappa\cdot x}\tilde{\mathcal{L}}(\kappa)e^{i\kappa\cdot x}$.

For an integral operator $P$, we denote its associated integral kernel as $P(x,y)$. 

For a two-variable function $f(x, y)$ and a differential operator $H$, we write $H_x f$ (or $H_y f$) to indicate that the operator acts on the $x$ (or $y$) variable. By default, we refer to the case where the operator acts on the $x$ variable when no subscript is specified, i.e., $Hf := H_x f$.

For any multi-index $\alpha=(\alpha_1,\alpha_2)\in\mathbf{N}^2$, we denote the partial derivative $\partial^{\alpha}:=\partial^{\alpha_1}_{x_1}\partial^{\alpha_2}_{x_2}$. The order of $\partial^{\alpha}$ is denoted as $|\alpha|:=\alpha_1 +\alpha_2$.

We define the reduced Levi-Civita symbol $\tilde{\varepsilon}_{ij}:=\varepsilon_{ij3}$, where $\varepsilon_{ijk}$ is the standard Levi-Civita symbol.

We denote $F := \{n : \lambda_n(\kappa) < \lambda_{\text{low}}\}$ the set of indices of filled bands and $E := \{n : \lambda_n(\kappa) > \lambda_{\text{upp}}\} = \mathbf{N} \setminus F$ the set of indices of unfilled bands.

\subsection{Almost-analytic extension and Hellfer-Sjöstrand formula}

Let $g\in C_c^{\infty}(\mathbf{R})$ be the function in the formula \eqref{eq_edge_index}. A complex function $\tilde{g}$ is called an almost-analytic extension of $g$ if it satisfies
\begin{equation*}
\tilde{g}\in C_{c}^{\infty}(\mathbf{C}),\quad \tilde{g}|_{\mathbf{R}}=g,\quad \partial_{\overline{z}}\tilde{g}=\mathcal{O}(|\text{Im }z|^{\infty}).
\end{equation*}
The third condition above means that for any $N>0$ there exists $C_N>0$ such that $|\partial_{\overline{z}}\tilde{g}|\leq C_N|\text{Im }z|^{N}$. The following identities are obtained from Green's formula and Cauchy integral formula \cite{zworski2012semiclassical}
\begin{equation} \label{eq_domain_residue_formula}
g(t)=\frac{1}{\pi i}\int_{\mathbf{C}}\partial_{\overline{z}}\tilde{g}(z)(t-z)^{-1}dm ,\quad
g^{(n)}(t)=\frac{(-1)^n n!}{\pi i}\int_{\mathbf{C}}\partial_{\overline{z}}\tilde{g}(z)(t-z)^{-(n+1)}dm.
\end{equation}
Recall that $g(x)\equiv 1$ for $x<\lambda_{low}$ and $g(x)\equiv 0$ for $x>\lambda_{upp}$. The following statement follows directly from \eqref{eq_domain_residue_formula}:
\begin{corollary} \label{corol_residue_formula_for_g}
Suppose $f(z)$ is meromorphic in the support of $\tilde{g}$ with the Laurent expansion
\begin{equation*}
f(z)=\sum_{n=1}^{p_1}\frac{a_n}{z-z_n}+\sum_{n=p_1+1}^{p_1+p_2}\frac{b_n}{(z-z_n)^2}+\sum_{n=p_1+p_2+1}^{p_1+p_2+p_3}\frac{c_n}{(z-z_n)^3}+h(z),
\end{equation*}
where $h$ is holomorphic in the support of $\tilde{g}$. Assume the poles of $f$ satisfy
\begin{equation*}
    z_n\in \mathbf{R},\quad z_n<\lambda_{low}.
\end{equation*}
Then it holds that
\begin{equation*}
\frac{1}{\pi i}\int_{\mathbf{C}}\partial_{\overline{z}}\tilde{g}(z)f(z)dm
=\sum_{n=1}^{p_1}a_n.
\end{equation*}
\end{corollary}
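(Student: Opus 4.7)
The plan is to split $f$ by linearity into four pieces — the simple-pole part, the double-pole part, the triple-pole part, and the holomorphic remainder $h$ — and evaluate the integral against $\frac{1}{\pi i}\partial_{\overline{z}}\tilde g$ on each piece separately, using \eqref{eq_domain_residue_formula} for the polar pieces and a Stokes-type vanishing argument for $h$.

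First I would dispose of the holomorphic contribution. Since $h$ is holomorphic on $\mathrm{supp}\,\tilde g$, the product $\tilde g\cdot h$ belongs to $C_c^\infty(\mathbf{C})$ and satisfies $\partial_{\overline{z}}(\tilde g\cdot h)=(\partial_{\overline{z}}\tilde g)\cdot h$. Because the integral of $\partial_{\overline{z}}$ of any compactly supported smooth function over $\mathbf{C}$ vanishes (a direct consequence of Stokes' theorem applied to $\tilde g h\,dz$ over a large disk), this piece contributes zero. The only subtle requirement is that $h$ actually extend holomorphically to all of $\mathrm{supp}\,\tilde g$; this is built into the hypothesized Laurent decomposition.

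Next, for each pole $z_n$ I would apply the identities in \eqref{eq_domain_residue_formula} with $t=z_n$, which is legitimate precisely because $z_n\in\mathbf{R}$. The $n=0$ identity converts each simple-pole contribution $a_n/(z-z_n)$ into $\pm a_n\,g(z_n)$, the $n=1$ identity converts each double-pole contribution $b_n/(z-z_n)^2$ into a multiple of $b_n\,g'(z_n)$, and the $n=2$ identity converts each triple-pole contribution $c_n/(z-z_n)^3$ into a multiple of $c_n\,g''(z_n)$.

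The final step is to plug in the hypothesis $z_n<\lambda_{\text{low}}$ together with the definition of $g$. Since $g\equiv 1$ on $(-\infty,\lambda_{\text{low}})$ and each $z_n$ lies in the open interior of this region, we have $g(z_n)=1$ and $g^{(k)}(z_n)=0$ for every $k\ge 1$. The double- and triple-pole contributions therefore vanish identically, and the surviving simple-pole pieces collapse to $\sum_{n=1}^{p_1} a_n$. There is no genuine obstacle in the argument; the only piece requiring care is tracking the signs produced by the $(-1)^n n!/(\pi i)$ prefactors in \eqref{eq_domain_residue_formula} so that the simple-pole pieces assemble to exactly $\sum a_n$ rather than $-\sum a_n$.
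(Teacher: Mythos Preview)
Your approach is correct and is exactly what the paper intends: it states only that the corollary ``follows directly from \eqref{eq_domain_residue_formula}'', and your proposal is the natural unpacking of that pointer --- linearity, Stokes for the holomorphic remainder, and then \eqref{eq_domain_residue_formula} at each $t=z_n$ combined with $g(z_n)=1$, $g'(z_n)=g''(z_n)=0$. Your flag about tracking the sign from $(t-z)^{-1}$ versus $(z-z_n)^{-1}$ is the only place requiring care, and you have already noted it.
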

Finally, we recall the Hellfer-Sjöstrand formula as a tool in functional calculus (cf. Theorem 14.8 of \cite{zworski2012semiclassical}): the operator $g^{\prime}(\mathcal{L}_{\Omega_L})$ is expressed as
\begin{equation} \label{eq_hellfer_sjostrand}
g^{\prime}(\mathcal{L}_{\Omega_L})=\frac{1}{\pi i}\int_{\mathbf{C}}\frac{\partial^2\tilde{g}(z)}{\partial \overline{z}\partial z}(\mathcal{L}_{\Omega_L}-z)^{-1}dm.
\end{equation}

\subsection{Trace-class properties of singular integral operators}
We review some results on the trace-class properties of singular integral operators on a (bounded or not bounded) domain $O\subset \mathbf{R}^2$.

Let $P\in \mathcal{B}(L^2(O))$ with the associated integral kernel $P(x,y)$. We say $P$ is trace-class if $\sum_{k}(|P|e_k,e_k)_{L^2(O)}<\infty$, where $|P|=\sqrt{P^* P}$ and $e_k$ is an orthonormal basis of $L^2(O)$. The set of trace-class operators is denoted as $S_1(L^2(O))$. For $P\in S_1(L^2(O))$, its trace is defined as $\text{Tr }(P)=\sum_{k}(Pe_k,e_k)_{L^2(O)}$. $S_1(L^2(O))$ is an ideal of $\mathcal{B}(L^2(O))$ in the following sense: 
\begin{proposition}[\cite{reed1972methods}]
\label{prop_trace_cycle}
If $P\in S_1(L^2(O))$ and $Q\in \mathcal{B}(L^2(O))$, then $PQ,QP\in S_1(L^2(O))$ and $\text{Tr }(PQ)=\text{Tr }(QP)$.
\end{proposition}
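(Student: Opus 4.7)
The plan is to reduce the statement to the calculus of Hilbert--Schmidt operators via polar decomposition, which is the standard route for this classical result. First I would invoke the polar decomposition $P = U|P|$, where $|P| := (P^*P)^{1/2}$ is positive and trace-class and $U$ is a partial isometry. The key observation is that $|P|^{1/2} \in S_2(L^2(O))$: for any orthonormal basis $\{e_k\}$, one has $\sum_k \||P|^{1/2}e_k\|^2 = \sum_k (|P|e_k, e_k) = \mathrm{Tr}(|P|) < \infty$, which is exactly the Hilbert--Schmidt norm squared.

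For the ideal property, I would factor $PQ = (U|P|^{1/2})(|P|^{1/2}Q)$. Because $S_2(L^2(O))$ is a two-sided ideal in $\mathcal{B}(L^2(O))$, both factors lie in $S_2$, and the standard product inclusion $S_2 \cdot S_2 \subset S_1$ (proved via Cauchy--Schwarz applied to the expansion of the trace norm in an orthonormal basis) yields $PQ \in S_1(L^2(O))$. The symmetric factorization $QP = (QU|P|^{1/2})(|P|^{1/2})$ gives $QP \in S_1(L^2(O))$.

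For the cyclic property, the central lemma is that $\mathrm{Tr}(AB) = \mathrm{Tr}(BA)$ whenever $A, B \in S_2(L^2(O))$; this follows by expanding both traces in an orthonormal basis, applying Parseval's identity, and interchanging the resulting double sums, which is justified by absolute convergence from the Hilbert--Schmidt bound. Applying this lemma to both of the factorizations introduced above, one obtains
\begin{equation*}
\mathrm{Tr}(PQ) = \mathrm{Tr}\bigl((U|P|^{1/2})(|P|^{1/2}Q)\bigr) = \mathrm{Tr}\bigl(|P|^{1/2}QU|P|^{1/2}\bigr) = \mathrm{Tr}\bigl((QU|P|^{1/2})(|P|^{1/2})\bigr) = \mathrm{Tr}(QP),
\end{equation*}
which completes the argument.

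The only real subtlety is the bookkeeping in the cyclicity lemma for Hilbert--Schmidt operators, where one must verify absolute convergence before interchanging sums; once that is in hand, everything reduces to the polar decomposition and the ideal structure of $S_2$. Since the proposition is a textbook result and the paper cites Reed--Simon directly, I would expect no genuine obstacles and would likely simply quote the reference in a finished write-up rather than reproduce the argument in full detail.
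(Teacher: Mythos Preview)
Your proof sketch is correct and follows the standard textbook route (polar decomposition, factorization through $S_2$, cyclicity for Hilbert--Schmidt operators). The paper itself does not give a proof of this proposition at all: it is stated with a direct citation to Reed--Simon and used as a black box, exactly as you anticipated in your final remark.
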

A criterion for $P$ being trace-class is given by \cite[Theorem 3.1]{brislawn1988kernels}:
\begin{proposition}
\label{prop_trace_class_kernel}
If $\sup_{x,y\in O}|P(x,y)|<\infty$ and $O$ is bounded, then $P$ is trace-class. Moreover
\begin{equation*}
    \text{Tr}(P)=\int_{O}P(x,x)dx.
\end{equation*}
\end{proposition}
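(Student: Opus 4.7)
The plan is to proceed in two stages: first confirm that $P$ is trace-class, then derive the trace formula. For the first stage, since $O$ is bounded and $\sup_{x,y\in O}|P(x,y)|<\infty$, the kernel lies in $L^2(O\times O)$ with norm bounded by $|O|\sup|P|$, so $P$ is at least Hilbert–Schmidt, i.e.\ $P\in S_2(L^2(O))$. To upgrade to $S_1(L^2(O))$, I would exploit the setting in which this proposition is invoked in the paper, namely that $P$ is a finite sum of rank-one operators built from $H^2$ eigenfunctions of $\mathcal{L}_{\Omega_L}$, which is automatically trace-class. More generally, if $P$ admits a factorization $P = P_1 P_2$ with $P_1, P_2 \in S_2$ (which one may engineer, for instance, by splitting the kernel $P(x,y) = \int_O K_1(x,z)K_2(z,y)\,dz$ using the boundedness of $O$ and of $P$ itself), then the Schatten ideal property $S_2 \cdot S_2 \subset S_1$ delivers the conclusion.

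For the trace identity, the route is the singular value decomposition $P = \sum_k s_k(\cdot,\phi_k)_{L^2(O)}\psi_k$ available for any $P\in S_1$, which converges in trace norm. The trace equals $\text{Tr}(P) = \sum_k s_k (\psi_k,\phi_k)_{L^2(O)}$, while the kernel admits the representation
\begin{equation*}
P(x,y) = \sum_k s_k\,\psi_k(x)\,\overline{\phi_k(y)}\quad\text{in } L^2(O\times O).
\end{equation*}
Formally setting $y=x$ and integrating over $O$ produces the desired identity, and the absolute convergence $\sum_k s_k \int_O |\psi_k(x)\overline{\phi_k(x)}|\,dx \le \sum_k s_k \|\psi_k\|_2\|\phi_k\|_2 = \|P\|_{S_1}$ (via Cauchy–Schwarz) justifies interchanging the sum and the integral.

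The main obstacle, as anticipated, is making sense of $P(x,x)$: the kernel of a trace-class operator is intrinsically an $L^2(O\times O)$-object, and the diagonal $\{x=y\}$ has two-dimensional Lebesgue measure zero, so pointwise restriction is not a priori meaningful. The boundedness hypothesis $\sup_{x,y}|P(x,y)|<\infty$ is precisely the regularity that unlocks Brislawn's argument: one replaces $P(x,x)$ by the Lebesgue diagonal average
\begin{equation*}
\widetilde{P}(x) := \lim_{r\to 0}\frac{1}{|B_{x,r}|^2}\int_{B_{x,r}}\int_{B_{x,r}} P(u,v)\,du\,dv,
\end{equation*}
which, by the Lebesgue differentiation theorem applied to the bounded measurable kernel, exists almost everywhere and agrees with $P(x,x)$ a.e. Commuting the limit with the SVD (using trace-norm convergence and dominated convergence, legitimate thanks to the uniform kernel bound and the finite measure of $O$) produces $\int_O \widetilde{P}(x)\,dx = \sum_k s_k(\psi_k,\phi_k)_{L^2(O)} = \text{Tr}(P)$, which is the claimed identity.
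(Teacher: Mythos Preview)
The paper gives no self-contained proof here; it simply invokes Theorem~3.1 of \cite{brislawn1988kernels}. Your second-stage argument---recovering the diagonal via Lebesgue averaging and then matching it against the singular value decomposition---is exactly Brislawn's method and is correct \emph{for an operator already known to lie in $S_1$}.

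The genuine gap is in your first stage. You correctly obtain $P\in S_2$ from the $L^2(O\times O)$ bound on the kernel, but the upgrade to $S_1$ is never carried out: the factorization $P=P_1P_2$ with $P_i\in S_2$ is only asserted (``one may engineer''), not constructed. In fact no such construction can exist in general, because the proposition as literally stated is false. The Volterra kernel $K(x,y)=\mathbf{1}_{\{y\le x\}}$ on $[0,1]$ is bounded by $1$ on a bounded domain, yet its singular values $s_n=2/((2n-1)\pi)$ are not summable; the two-dimensional lift $K(x,y)=\mathbf{1}_{\{y_1\le x_1\}}$ on $[0,1]^2$ has the same nonzero singular values and shows the ambient dimension does not help. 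Brislawn's theorem in fact \emph{presupposes} $P\in S_1$ and delivers only the trace formula, not a membership criterion. Your instinct to ``exploit the setting in which this proposition is invoked'' is therefore the right one: at every place the paper uses this proposition, trace-class membership is available for separate reasons (finite rank after applying $g'(\mathcal{L}_{\Omega_L})$, or a factorization through two resolvents on the bounded domain in Proposition~\ref{prop_uniform_trace_interchange_z_trace}), and it is only the trace \emph{formula} that is actually needed from Brislawn.
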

Note that \cite[Theorem 3.1]{brislawn1988kernels} originally states that $P$ is trace-class if $\int_{O}(MP)(x,x)dx<\infty$, where $MP$ is the Hardy-Littlewood maximal function of the kernel $P(x,y)$. This condition is weaker than the uniform boundedness as stated in Proposition \ref{prop_trace_class_kernel}, which already serves the need of this paper.

The following proposition justifies the composition rule of integral kernels.
\begin{proposition} \label{prop_composition rule}
Suppose $P,Q\in \mathcal{B}(L^2(O))$ ($O$ is not necessarily bounded) whose integral kernels are measurable and satisfy
\begin{equation} \label{eq_composition_rule_cond_1}
    \sup_{x\in O}\int_{O}|P(x,y)|dy<\infty ,\quad
    \sup_{x\in O}\int_{O}|Q(x,y)|dy<\infty .
\end{equation}
Then the integral kernel of the composition $PQ$ is given by $(PQ)(x,y)=\int_{O}P(x,z)Q(z,y)dz$. The integral operators that satisfy \eqref{eq_composition_rule_cond_1} form an algebra of $\mathcal{B}(L^2(O))$ in the sense that
\begin{equation} \label{eq_composition_rule_cond_2}
    \sup_{x\in O}\int_{O}|(PQ)(x,y)|dy<\infty,
\end{equation}
whenever \eqref{eq_composition_rule_cond_1} is satisfied.
\end{proposition}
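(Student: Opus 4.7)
The plan is to treat both assertions as consequences of Tonelli's theorem applied to the row-sum bounds $M_P := \sup_{x\in O}\int_O |P(x,y)|\,dy$ and $M_Q := \sup_{x\in O}\int_O |Q(x,y)|\,dy$, both finite by hypothesis \eqref{eq_composition_rule_cond_1}. First I would define the candidate kernel $R(x,y) := \int_O P(x,z)Q(z,y)\,dz$ and, for each fixed $x$, apply Tonelli to the nonnegative measurable function $(z,y)\mapsto |P(x,z)||Q(z,y)|$:
\begin{equation*}
\int_O \int_O |P(x,z)||Q(z,y)|\,dz\,dy = \int_O |P(x,z)|\int_O|Q(z,y)|\,dy\,dz \leq M_P M_Q.
\end{equation*}
This already implies that the integral defining $R(x,y)$ converges absolutely for a.e. $y$ (for every $x$), and gives
\begin{equation*}
\sup_{x\in O}\int_O|R(x,y)|\,dy \leq M_P M_Q,
\end{equation*}
which is precisely the algebra statement \eqref{eq_composition_rule_cond_2}.

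Next, I would identify $R$ with the integral kernel of $PQ$ by verifying the identity $(PQf)(x) = \int_O R(x,y)f(y)\,dy$ first on the dense subclass of $f\in L^2(O)\cap L^{\infty}(O)$ with compact support. For such $f$, the refined bound
\begin{equation*}
\int_O\int_O |P(x,z)||Q(z,y)||f(y)|\,dz\,dy \leq \|f\|_{\infty}\,M_P M_Q
\end{equation*}
supplies the absolute integrability needed to apply Fubini to the triple integrand, allowing the order of the $z$ and $y$ integrals to be exchanged in
\begin{equation*}
(PQf)(x) = \int_O P(x,z)\left(\int_O Q(z,y)f(y)\,dy\right)dz
\end{equation*}
to yield $\int_O R(x,y)f(y)\,dy$ as required.

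The final step is a density argument: bounded compactly supported functions are dense in $L^2(O)$; the operator $PQ$ is bounded on $L^2$ as a composition of bounded operators; and the integral operator defined by $R$ coincides with $PQ$ on this dense set. Hence $R(x,y)$ is (almost everywhere) the integral kernel of $PQ$. The closest thing to an obstacle is that the hypotheses do not guarantee $L^2$-boundedness of the integral operators with kernels $|P|$ and $|Q|$, so one cannot directly apply Fubini to $(z,y)\mapsto P(x,z)Q(z,y)f(y)$ for arbitrary $f\in L^2(O)$ in one step. Restricting to bounded compactly supported $f$ as an intermediate class sidesteps this issue, and the Schur-type uniform bound $M_P M_Q$ — already obtained in the first step — controls everything else and simultaneously delivers the algebra property \eqref{eq_composition_rule_cond_2}.
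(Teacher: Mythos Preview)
Your proof is correct and follows essentially the same route as the paper's: both use Tonelli/Fubini with the bound $\int_{O\times O}|P(x,z)||Q(z,y)||f(y)|\,dz\,dy \le \|f\|_{\infty}M_P M_Q$ on a dense class of bounded test functions to justify the interchange of integrals, and both derive the algebra property \eqref{eq_composition_rule_cond_2} from the same Schur-type estimate. The only cosmetic differences are that the paper uses $C_c^\infty(O)$ rather than bounded compactly supported $L^2$ functions, and presents the kernel identification before the algebra bound rather than after.
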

\begin{proof}
To show $(PQ)(x,y)=\int_{O}P(x,z)Q(z,y)dz$, it's sufficient to show the following equality for $f\in C_c^{\infty}(O)$: 
\begin{equation*}
\int_{O}P(x,z)dz\int_{O}Q(z,y)f(y)dy=\int_{O}\big(
\int_{O}P(x,z)Q(z,y)dz \big)f(y)dy
,\quad\text{a.e. $x\in O$}.
\end{equation*}
The interchange of integrals above is justified by the Fubini theorem and the following inequality
\begin{equation*}
\begin{aligned}
\int_{O\times O}|P(x,z)||Q(z,y)||f(y)|dzdy
&\leq \|f\|_{L^{\infty}(O)}\int_{O\times O}|P(x,z)||Q(z,y)|dzdy \\
&\leq \big(\sup_{z\in O}\int_{O}|Q(z,y)|dy\big)\big(\sup_{x\in O}\int_{O}|P(x,z)|dz\big)\|f\|_{L^{\infty}(O)}.
\end{aligned}
\end{equation*}
Similarly, \eqref{eq_composition_rule_cond_2} is justified by the following inequality
\begin{equation*}
\begin{aligned}
\sup_{x\in O}\int_{O}|(PQ)(x,y)|dy
&\leq \sup_{x\in O}\int_{O}dy\big(\int_{O}|P(x,z)||Q(z,y)|dz\big) \\
&=\sup_{x\in O}\int_{O}|P(x,z)|dz\int_{O}|Q(z,y)|dy \\
&\leq \sup_{x\in O}\int_{O}|P(x,z)|dz
\cdot\sup_{z\in O}\int_{O}|Q(z,y)|dy.
\end{aligned} 
\end{equation*}

\end{proof}

\subsection{$L^p$-estimate of second-order divergence-form operators with complex coefficients} \label{sec_Lp}
We review some results on the $L^p$-estimate of second-order divergence-form operators with complex coefficients. Those estimates help to prove the exponential decay of Green functions, which lies in the center of the proof of Theorem \ref{thm_bec}. Note that we assume the coefficient matrix $A\in C^3(\mathbf{R}^2;\mathbf{C}^{2\times 2})$ such that all regularity estimates in this section hold.

We first recall the following interior De Giorgi-type estimate \cite{auscher96complex_de_giorgi}, where the author proves the following results under the weaker assumption that $A(x)$ is uniformly continuous.
\begin{proposition} \label{prop_de_giorgi_local}
Suppose $u\in H^1(\mathbf{R}^2)$ solves $\mathcal{L}u=f$ weakly in an open ball $B_{x_0,r}$ for some $f\in L^2(B_{x_0,r})$. Then, for any $r^{\prime}<r$, it holds
\begin{equation} \label{eq_de_giorgi_local}
    \|u\|_{L^{\infty}(B_{x_0,r^{\prime}})}\leq C \|f\|_{L^{2}(B_{x_0,r})},
\end{equation}
where $C>0$ depends only on $r^{\prime},r$ and the coefficient matrix $A(x)$.
\end{proposition}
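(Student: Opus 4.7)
The plan is to deduce the $L^{\infty}$ bound from interior $H^2$ regularity combined with the two-dimensional Sobolev embedding $H^2 \hookrightarrow L^{\infty}$, leveraging the fact that $A \in C^3(\mathbf{R}^2;\mathbf{C}^{2\times 2})$ is considerably smoother than what the De Giorgi--Nash--Moser theory strictly requires. Under this smoothness, one need not run a delicate Moser iteration adapted to complex coefficients; instead one can bootstrap by classical energy methods.

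First I would fix a nested chain $r' < r'' < r$ and a cutoff $\eta \in C_c^{\infty}(B_{x_0,r''})$ with $\eta \equiv 1$ on $B_{x_0,r'}$. Testing the weak equation $\mathcal{L} u = f$ against $\eta^2 \bar u$ and exploiting the uniform ellipticity $\mathrm{Re}\,(A\xi,\xi) \geq \lambda |\xi|^2$ (which follows from Hermiticity and positive definiteness of $A$), one obtains after applying Young's inequality the Caccioppoli-type bound
\begin{equation*}
\|\nabla u\|_{L^2(B_{x_0,r'})} \leq C \bigl(\|u\|_{L^2(B_{x_0,r''})} + \|f\|_{L^2(B_{x_0,r''})}\bigr).
\end{equation*}
Next, since $A \in C^3$, Nirenberg's method of difference quotients applied to $\mathcal{L}u = f$ on a chain of successively smaller balls produces $u \in H^2_{\mathrm{loc}}$ together with the interior estimate
\begin{equation*}
\|u\|_{H^2(B_{x_0,r'})} \leq C \bigl(\|u\|_{L^2(B_{x_0,r})} + \|f\|_{L^2(B_{x_0,r})}\bigr),
\end{equation*}
where $C$ depends on the $C^1$ norm of $A$ and on the chosen radii. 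That the entries of $A$ are complex is immaterial: the difference-quotient argument rests only on coercivity of the associated sesquilinear form, not on a maximum principle. Finally, the critical Sobolev embedding $H^2(B_{x_0,r'}) \hookrightarrow L^{\infty}(B_{x_0,r'})$ in dimension two (available since $2 > d/2 = 1$) converts the $H^2$ bound into the desired $L^{\infty}$ bound.

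The main obstacle, in my view, is bookkeeping rather than any single deep step: one must introduce enough intermediate radii and matching cutoffs so that the constants produced in the bootstrap remain under control, and one must verify that the derivatives of $A$ appearing through the commutator with difference quotients are cleanly absorbed into the leading-order term. A secondary subtlety is that the statement as printed suppresses the $\|u\|_{L^2(B_{x_0,r})}$ term on the right-hand side; the natural proof outlined above produces such a term, and removing it seems to require either reinterpreting the statement (e.g. via an implicit restriction to a subspace on which $\mathcal{L}$ is invertible) or invoking the global membership $u \in H^1(\mathbf{R}^2)$ together with a Fredholm-type bound for $\mathcal{L}$ that is not explicitly stated in the proposition.
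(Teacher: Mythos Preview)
The paper does not prove this proposition itself; it simply attributes the result to Auscher's work on De Giorgi-type estimates for divergence-form operators with complex, uniformly continuous coefficients. Your approach---Caccioppoli, then Nirenberg difference quotients to reach $H^2$, then the two-dimensional Sobolev embedding---is a genuinely different and more elementary route that exploits the paper's stronger hypothesis $A \in C^3$. Auscher's method runs a Moser-type iteration adapted to complex coefficients and therefore works under the weaker assumption of mere uniform continuity, whereas yours needs at least $C^1$ coefficients for the difference-quotient step to go through. Both are adequate here since $A \in C^3$ is assumed throughout, and yours has the advantage of avoiding a nontrivial external result.

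Your observation about the missing $\|u\|_{L^2(B_{x_0,r})}$ term on the right-hand side is well taken and applies equally to the cited result: local De Giorgi-type estimates (including Auscher's) bound $\|u\|_{L^\infty}$ by $C\bigl(\|u\|_{L^2} + \|f\|_{L^2}\bigr)$, not by $\|f\|_{L^2}$ alone. The inequality as literally printed cannot hold---any $u \in H^1(\mathbf{R}^2)$ with $\mathcal{L}u = 0$ on $B_{x_0,r}$ but $u \not\equiv 0$ there furnishes a counterexample. In every application the paper makes of this proposition, an $L^2$ bound on $u$ is already in hand from a preceding Combes--Thomas step, so the omission is harmless in context; but you are right that the statement as written should carry the $\|u\|_{L^2}$ term, and your proof produces the correct version.
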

This result also applies to a bounded domain with smooth boundary.
\begin{proposition} \label{prop_de_giorgi_local_boundary}
Let $\Omega\subset \mathbf{R}^2$ is bounded and $\partial \Omega$ is $C^{\infty}$. Suppose $u\in H^1(\Omega)$ solves $\mathcal{L}u=f$ weakly in  $B_{x_0,r}\cap \Omega$ for some $f\in L^2(B_{x_0,r})$ and $u\big|_{\partial \Omega\cap B_{x_0,r}}=0$. Then, for any $r^{\prime}<r$, it holds
\begin{equation} \label{eq_de_giorgi_local_boundary}
    \|u\|_{L^{\infty}(B_{x_0,r^{\prime}}\cap\Omega)}\leq C \|f\|_{L^{2}(B_{x_0,r}\cap\Omega)},
\end{equation}
where $C>0$ depends on $r^{\prime},r$, the coefficient matrix $A(x)$ and $\partial \Omega$.
\end{proposition}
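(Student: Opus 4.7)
The plan is to reduce the boundary estimate to the interior estimate of Proposition \ref{prop_de_giorgi_local} via boundary straightening followed by a reflection argument. If $\overline{B_{x_0,r}}\cap \partial \Omega=\emptyset$, there is nothing to prove beyond Proposition \ref{prop_de_giorgi_local}. Otherwise, a standard partition-of-unity argument reduces matters to proving the estimate in a neighborhood of each point $\tilde{x}_0\in \partial \Omega\cap \overline{B_{x_0,r'}}$, after which the interior estimate handles the remaining portion away from $\partial\Omega$.

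Near such a $\tilde{x}_0$, I would use the $C^{\infty}$ regularity of $\partial \Omega$ to build a $C^{\infty}$ diffeomorphism $\Phi$ with $\Phi(\tilde{x}_0)=0$ that sends a neighborhood of $\tilde{x}_0$ in $\overline{\Omega}$ onto a neighborhood of $0$ in $\{y_2\geq 0\}$. Setting $\tilde{u}(y):=u(\Phi^{-1}(y))$ and $\tilde{f}(y):=f(\Phi^{-1}(y))|\det D\Phi^{-1}(y)|$, the equation $\mathcal{L}u=f$ transforms into
$$-\nabla_y\cdot \tilde{A}(y)\nabla_y \tilde{u}(y)=\tilde{f}(y),$$
where $\tilde{A}$ is Hermitian, positive definite, and lies in $C^3$, and where $\tilde{u}\in H^1$ has zero trace on $\{y_2=0\}$. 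I would then extend by odd reflection in $y_2$: $\tilde{u}^*(y_1,y_2)=-\tilde{u}(y_1,-y_2)$ for $y_2<0$ (similarly for $\tilde{f}^*$), while extending the diagonal coefficients $\tilde{a}_{11},\tilde{a}_{22}$ evenly and the off-diagonal coefficients $\tilde{a}_{12},\tilde{a}_{21}$ oddly. The zero trace ensures $\tilde{u}^*\in H^1$ of the full ball, and a direct integration-by-parts check against test functions shows that $\tilde{u}^*$ weakly satisfies the divergence-form equation with the reflected coefficient matrix $\tilde{A}^*$ and right-hand side $\tilde{f}^*$ across $\{y_2=0\}$.

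To invoke Proposition \ref{prop_de_giorgi_local} on the reflected problem, one needs $\tilde{A}^*$ to be (uniformly) continuous, which is the essential subtlety: the even reflections of $\tilde{a}_{11},\tilde{a}_{22}$ are automatically continuous, but the odd reflections of $\tilde{a}_{12},\tilde{a}_{21}$ are continuous across $\{y_2=0\}$ only when $\tilde{a}_{12}(y_1,0)=0$. I would arrange this by choosing $\Phi$ so that the $y_2$-direction at the boundary coincides with the $A$-conormal to $\partial \Omega$; using the $C^3$ smoothness of $A$ and the $C^\infty$ smoothness of $\partial \Omega$, such a normalization is obtained by integrating the tangent ODE generated by the conormal vector field. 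With $\tilde{A}^*$ thus continuous, Proposition \ref{prop_de_giorgi_local} applied on balls $B_{0,s'}\subset B_{0,s}$ gives $\|\tilde{u}^*\|_{L^{\infty}(B_{0,s'})}\leq C\|\tilde{f}^*\|_{L^2(B_{0,s})}$, and pulling back through $\Phi$ together with the partition-of-unity reassembly yields \eqref{eq_de_giorgi_local_boundary}.

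The principal obstacle is the continuity of $\tilde{A}^*$ across the flat boundary, i.e.\ the construction of the $A$-conormal straightening map; once this geometric normalization is in place, the remainder is bookkeeping. As an alternative, should the conormal straightening be inconvenient, one could instead adapt the Moser iteration of \cite{auscher96complex_de_giorgi} directly to the boundary setting, using Caccioppoli-type energy inequalities with cutoff functions supported in $B_{x_0,r}\cap \overline{\Omega}$ and vanishing on $\partial B_{x_0,r}$ (but not on $\partial \Omega\cap B_{x_0,r}$), which is permissible precisely because $u$ has zero trace on $\partial \Omega\cap B_{x_0,r}$. Either route yields the stated boundary $L^\infty$-$L^2$ estimate with a constant depending on $r',r,A$, and the $C^\infty$ geometry of $\partial \Omega$.
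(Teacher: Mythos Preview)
The paper does not supply its own proof of this proposition; it simply states that the interior estimate ``also applies to a bounded domain with smooth boundary'' and records the result. So there is no paper proof to compare against, and your task is genuinely to furnish an argument.

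Your reflection strategy is standard and would be fine for real coefficients, but there is a real obstruction in the present setting. Write $A=A_R+iA_I$ with $A_R$ real symmetric positive definite and $A_I$ real antisymmetric (this is what Hermitian means here). After any real boundary-flattening diffeomorphism one has $\tilde a_{12}\propto(\nabla\Phi_1)^{T}A\,\nabla\Phi_2$, and on the boundary $\nabla\Phi_2$ is parallel to the Euclidean normal $n$. The imaginary part of $\tilde a_{12}$ is then proportional to $(\nabla\Phi_1)^{T}A_I\,n$, and since $A_I n$ is a nonzero multiple of the tangent $t$ whenever $A_I\neq 0$, this vanishes only if $\nabla\Phi_1\parallel n$. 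But then the real part $(\nabla\Phi_1)^{T}A_R n\propto n^{T}A_R n>0$ cannot vanish. Hence for genuinely complex $A$ (precisely the case of interest in this paper) no real change of coordinates makes $\tilde a_{12}$ vanish on the flat boundary, the odd reflection of the off-diagonal entries is discontinuous across $\{y_2=0\}$, and Auscher's interior result (which needs uniform continuity of the complex coefficients) does not apply to $\tilde A^{*}$. Your ``$A$-conormal'' normalization tacitly treats $A$ as real.

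Your alternative route is the correct one here: run the boundary Caccioppoli/Moser iteration directly on $B_{x_0,r}\cap\Omega$ with cutoffs that vanish on $\partial B_{x_0,r}$ but not on $\partial\Omega\cap B_{x_0,r}$, using the Dirichlet trace $u|_{\partial\Omega\cap B_{x_0,r}}=0$ to absorb the boundary terms. This reproduces Auscher's argument essentially verbatim in the half-ball model after a plain boundary flattening (no parity trick needed), with constants depending on the $C^\infty$ geometry of $\partial\Omega$; that is the argument you should carry out.
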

The solution to higher-order elliptic equations has improved regularity compared with the $L^{\infty}$ estimate in Proposition \ref{prop_de_giorgi_local}, as shown below. The proof is given in Lemma 3.2 of \cite{auscher1998heat} (see also Corollary 4.14 of \cite{auscher96complex_de_giorgi}).

\begin{proposition} \label{prop_de_giorgi_double_resolvent}
Suppose $u\in H^1(\mathbf{R}^2)$ solves $\mathcal{L}^2 u=f$ weakly in an open ball $B_{x_0,r}$ for some $f\in L^2(B_{x_0,r})$. Then, there exists $\nu\in (0,1/2)$ such that for any $r^{\prime}<r$ and $p\in[2,\frac{2}{1-2\nu})$, it holds
\begin{equation} \label{eq_de_giorgi_double_resolvent}
    \|u\|_{W^{1,p}(B_{x_0,r^{\prime}})}\leq C \|f\|_{L^{2}(B_{x_0,r})},
\end{equation}
where $C>0$ depends only on $r^{\prime},r$ and the coefficient matrix $A(x)$.
\end{proposition}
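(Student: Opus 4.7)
The plan is to introduce $v := \mathcal{L} u$ as an auxiliary function and exploit the two-layer structure $\mathcal{L}^2 u = f$ as the composition of two single-equation problems. First I would observe that $v$ satisfies $\mathcal{L} v = f$ weakly in $B_{x_0, r}$. A standard Caccioppoli / energy estimate, combined with the assumption $u \in H^1$ and $f \in L^2$, upgrades $v$ from $H^{-1}_{loc}$ to $L^2_{loc}(B_{x_0, r})$ with a quantitative bound. At this point Proposition \ref{prop_de_giorgi_local} applies on a slightly smaller ball $B_{x_0, r''}$ ($r'' < r$) and yields $\|v\|_{L^{\infty}(B_{x_0, r''})} \leq C \|f\|_{L^2(B_{x_0, r})}$.

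The next step is to strengthen this $L^{\infty}$ bound to a Hölder bound. This is exactly what Auscher's De Giorgi-type theorem for divergence-form operators with \emph{complex} coefficients provides: there exists $\nu \in (0, 1/2)$, depending only on $A$, such that $v \in C^{0,\nu}_{loc}$ with $\|v\|_{C^{0,\nu}(B_{x_0, r'''})} \leq C \|f\|_{L^2(B_{x_0, r})}$ for $r''' < r''$. The restriction $\nu < 1/2$ is intrinsic to the complex-coefficient setting (in contrast to the real-coefficient case where the exponent can approach $1$ in dimension $2$), and it is this exponent that will ultimately dictate the endpoint $p = 2/(1-2\nu)$.

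Having secured Hölder regularity of $v$, the task reduces to a $W^{1,p}$ estimate for the single divergence-form equation $\mathcal{L} u = v$ with Hölder data. Here I would invoke a Meyer-type self-improvement argument: Caccioppoli's inequality yields a reverse Hölder inequality for $\nabla u$, and Gehring's lemma produces some $\epsilon > 0$ for which $\nabla u \in L^{2+\epsilon}_{loc}$ with control by $\|v\|_{L^{\infty}}$ and hence by $\|f\|_{L^2}$. To upgrade from this qualitative $\epsilon$ to the quantitative threshold $p < 2/(1-2\nu)$, one interprets the Hölder bound on $v$ as membership in a Morrey-Campanato space and pairs this with the duality between $W^{-1,p'}$ and $W^{1,p}$: the mapping $\mathcal{L}^{-1} \colon W^{-1,p} \to W^{1,p}$ is bounded precisely in the range determined by the Hölder exponent of caloric-type functions for the adjoint operator, i.e.\ for $p < 2/(1-2\nu)$.

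The main obstacle is the final sharp matching of the exponent: Gehring alone only gives an unspecified $\epsilon > 0$, and reaching the critical $2/(1-2\nu)$ requires quantitatively relating the De Giorgi exponent $\nu$ to the failure of $L^p$-boundedness of the Riesz-type transforms associated with $\mathcal{L}$. This delicate tracking is precisely the content of Lemma 3.2 of \cite{auscher1998heat} (and Corollary 4.14 of \cite{auscher96complex_de_giorgi}); I would cite this machinery rather than reprove it, as it is peripheral to the main arguments of the present paper.
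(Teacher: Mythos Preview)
The paper does not give its own proof of this proposition; it simply cites Lemma 3.2 of \cite{auscher1998heat} and Corollary 4.14 of \cite{auscher96complex_de_giorgi}. Your proposal ends at exactly the same place---you outline the two-layer bootstrap (set $v=\mathcal{L}u$, gain $L^\infty$ then $C^{0,\nu}$ for $v$ via De Giorgi, then pull back $W^{1,p}$ for $u$), and then defer the sharp exponent matching $p<2/(1-2\nu)$ to the same Auscher references. So you are fully aligned with the paper's approach, and in fact give more detail than the paper does.

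One small caveat: your step ``Caccioppoli upgrades $v$ from $H^{-1}_{loc}$ to $L^2_{loc}$'' is not quite how this goes---Caccioppoli controls $\nabla u$, not $\mathcal{L}u$. The cleaner route (and the one implicitly available here, since $A\in C^3$) is interior $H^2$ regularity for $u$, which gives $v=\mathcal{L}u\in L^2_{loc}$ directly with a quantitative bound. This is a minor wording issue and does not affect the overall strategy.
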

\begin{remark}
The Sobolev embedding indicates $W^{1,p}(Y)$ functions are Hölder continuous and, therefore belong to $L^{\infty}(Y)$. This highlights the improvement provided by Proposition \ref{prop_de_giorgi_double_resolvent} compared to Proposition \ref{prop_de_giorgi_local}.
\end{remark}
The following $W^{1,p}$ estimate of Bloch eigenfunction follows from Proposition \ref{prop_de_giorgi_double_resolvent}.

\begin{corollary} \label{corol_eigenfunction_w1p}
Let $\{\lambda_n(\kappa),v_{n}(x;\kappa)\}$ be the Bloch eigenpairs of $\mathcal{L}(\kappa)$. Then, for any $p\in [2,2+\frac{2}{1-\nu})$ ($\nu>0$ is introduced in Proposition \ref{prop_de_giorgi_double_resolvent}), it holds
\begin{equation} \label{eq_eigenfunction_w1p}
    \|v_n(\cdot;\kappa)\|_{W^{1,p}(Y)}\leq C\lambda_{n}^{2-\frac{3}{p}}(\kappa),
\end{equation}
where $C>0$ only depends on $p$ and the coefficient matrix $A(x)$.
\end{corollary}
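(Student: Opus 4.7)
The estimate should follow from interpolating between two endpoint bounds on $v_n(\cdot;\kappa)$: the elementary $W^{1,2}$ energy estimate (with growth $\lambda_n^{1/2}$) and a higher-integrability $W^{1,q}$ bound (with growth $\lambda_n^{2}$) obtained by viewing $v_n$ as a solution of the squared eigenvalue equation $\mathcal{L}^2 v_n = \lambda_n^2(\kappa)\, v_n$ and invoking Proposition \ref{prop_de_giorgi_double_resolvent}.

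First I would normalize $\|v_n(\cdot;\kappa)\|_{L^2(Y)}=1$. The quasi-periodicity $v_n(x+e)=e^{i\kappa\cdot e}v_n(x)$ forces $|v_n|$ to be $\mathbf{Z}^2$-periodic, so $\|v_n\|_{L^2(B_{x_0,r})}\leq C_r$ uniformly in $n,\kappa$ for any fixed ball enclosing $Y$. Testing the weak form of $\mathcal{L}(\kappa)v_n=\lambda_n v_n$ against $v_n$ over $Y$ and using the uniform ellipticity of $A$ then produces the first endpoint $\|v_n\|_{W^{1,2}(Y)}\leq C\lambda_n^{1/2}$. For the second endpoint, I would apply Proposition \ref{prop_de_giorgi_double_resolvent} with $f=\lambda_n^2 v_n$ on a ball $B_{x_0,r}\supset Y$ and a finite-cover argument to get $\|v_n\|_{W^{1,q}(Y)}\leq C\lambda_n^{2}$ for every $q\in[2,\,2/(1-2\nu))$.

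For $p$ in the stated range and $\theta\in[0,1]$ with $1/p=\theta/2+(1-\theta)/q$, H\"older interpolation applied componentwise to $v_n$ and $\nabla v_n$ then gives
\begin{equation*}
\|v_n\|_{W^{1,p}(Y)}\leq \|v_n\|_{W^{1,2}(Y)}^{\theta}\|v_n\|_{W^{1,q}(Y)}^{1-\theta}\leq C\,\lambda_n^{\,\theta/2+2(1-\theta)}=C\,\lambda_n^{\,2-3\theta/2}.
\end{equation*}

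\textbf{Main obstacle.} The delicate point is to reconcile the interpolation exponent $2-3\theta/2$ with the claimed exponent $2-3/p$ and to pin down the precise admissible range $[2,\,2+2/(1-\nu))$ from the constraint $q<2/(1-2\nu)$ appearing in Proposition \ref{prop_de_giorgi_double_resolvent}: the two exponents coincide only in the limit $q\to\infty$, and reaching this limit rigorously requires leveraging the full assumption $A\in C^{3}$. Specifically, one can upgrade the $W^{1,q}$ endpoint by bootstrapping classical elliptic regularity (from $\|v_n\|_{H^2(Y)}\leq C\lambda_n$ to $\|v_n\|_{H^4(Y)}\leq C\lambda_n^{2}$ using $\mathcal{L}^2 v_n=\lambda_n^2 v_n$) and invoking the two-dimensional Sobolev embedding $H^4(Y)\hookrightarrow W^{1,\infty}(Y)$, after which the interpolation between $\lambda_n^{1/2}$ at $L^2$ and $\lambda_n^{2}$ at $L^\infty$ directly yields $\lambda_n^{2-3/p}$. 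Everything else is mechanical once Propositions \ref{prop_de_giorgi_local}--\ref{prop_de_giorgi_double_resolvent} are in place.
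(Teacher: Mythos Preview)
Your approach---the $W^{1,2}$ bound from the eigenvalue equation, the higher-integrability bound from $\mathcal{L}^2 v_n=\lambda_n^2 v_n$ via Proposition~\ref{prop_de_giorgi_double_resolvent}, then interpolation---is precisely the paper's. You are in fact more careful than the paper, which simply writes the interpolation with weights $(p-2)/p$ and $2/p$ (formally requiring $p_0=\infty$) without further comment; your bootstrap to $\|v_n\|_{H^4}\leq C\lambda_n^2$ and Sobolev embedding to $W^{1,\infty}$ is what legitimately delivers the exact exponent $2-3/p$.
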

\begin{proof}
Note that 
$$
\mathcal{L}v_n(x;\kappa)=\lambda_n(\kappa) v_n(\kappa),\quad x\in\mathbf{R}^2. 
$$
Taking inner product with $v_n$ at both sides and utilizing the ellipticity of $\mathcal{L}$, one obtains $\|v_n(\cdot;\kappa)\|_{W^{1,2}(Y)}\leq C\lambda_{n}^{\frac{1}{2}}(\kappa)$, which justifies \eqref{eq_eigenfunction_w1p} for $p=2$. On the other hand,
$$
\mathcal{L}^2v_n(x;\kappa)=\lambda_n^2 (\kappa) v_n(\kappa),\quad x\in\mathbf{R}^2.
$$
Hence, for any $p_0\in (2,\frac{2}{1-\nu})$, Proposition \ref{prop_de_giorgi_double_resolvent} indicates
\begin{equation*}
    \|v_n(\cdot;\kappa)\|_{W^{1,p_0}(Y)}\leq C\lambda_n^2(\kappa).
\end{equation*}
Then \eqref{eq_eigenfunction_w1p} follows by interpolation
\begin{equation*}
\|v_n(\cdot;\kappa)\|_{W^{1,p}(Y)}\leq C\|v_n(\cdot;\kappa)\|_{W^{1,p_0}(Y)}^{\frac{p-2}{p}}\cdot \|v_n(\cdot;\kappa)\|_{W^{1,2}(Y)}^{\frac{2}{p}}
\leq C\lambda_{n}^{\frac{2(p-2)}{p}}(\kappa)\cdot \lambda_{n}^{\frac{1}{p}}(\kappa).
\end{equation*}

\end{proof}

\section{Estimate of Green functions }

\subsection{Estimate of the Green function $G_{\sharp}(x,x^{\prime};z)$ and $G_{L}(x,x^{\prime};z)$}

Recall that the Green functions $G_{\sharp}(x,x^{\prime};z)$ and $G_{L}(x,x^{\prime};z)$ are defined as the integral kernels of $(\mathcal{L}-z)^{-1}$ and $(\mathcal{L}_{\Omega_L}-z)^{-1}$, respectively. We prove the exponential decay of $G_{\sharp}$ and $G_{L}$ when $z\notin\mathbf{R}$, an important ingredient of the proof of Theorem \ref{thm_bec}.
\begin{proposition}
\label{prop_G_sharp_decay}
For $z\in K$, where $K\subset\mathbf{C}$ is compact, there exists $q_1,p_1>0$ such that for $|x-x^{\prime}|\geq \frac{1}{2}$ and $|\alpha|,|\beta|\leq 1$, the following estimates hold: 
\begin{equation}
\label{eq_G_sharp_decay_1}
|\partial^{\alpha}_{x}G_{\sharp}(x,x^{\prime};z)|+|\partial^{\beta}_{x^{\prime}}G_{\sharp}(x,x^{\prime};z)|\leq \frac{C}{|\text{Im }z|^{q_1}}e^{-D|\text{Im }z|^{p_1}|x-x^{\prime}|},
\end{equation}
where $C,D>0$ depends only on the coefficient matrix $A$ and the domain $K$.
\end{proposition}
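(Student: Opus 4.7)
The plan is a Combes--Thomas argument to obtain exponential decay in a weighted $L^2$ sense, followed by the local elliptic estimates of Propositions \ref{prop_de_giorgi_local} and \ref{prop_de_giorgi_double_resolvent} to upgrade to pointwise bounds on $G_\sharp$ and its first derivatives. For $\rho\in\mathbf{R}^2$, I would introduce the conjugated operator
$$
\mathcal{L}_\rho:=e^{\rho\cdot x}\mathcal{L}e^{-\rho\cdot x}=-(\nabla-\rho)\cdot A(\nabla-\rho)=\mathcal{L}+B_\rho,
$$
where $B_\rho$ is a first-order differential operator with coefficients of order $|\rho|$. Working with the associated sesquilinear form $a_{\rho,z}[u,u]:=((\mathcal{L}_\rho-z)u,u)$, a direct computation shows that because $A$ is Hermitian, the order-$|\rho|$ cross terms $\rho\cdot (A\nabla u,u)-(A\rho u,\nabla u)$ are purely imaginary, so
$$
\mathrm{Re}\,a_{\rho,z}[u,u]\geq c\|\nabla u\|^2-C(|\mathrm{Re}\,z|+|\rho|^2)\|u\|^2,\qquad |\mathrm{Im}\,a_{\rho,z}[u,u]+\mathrm{Im}(z)\|u\|^2|\leq C|\rho|\|\nabla u\|\|u\|.
$$
Balancing these two bounds yields coercivity $|a_{\rho,z}[u,u]|\geq c|\mathrm{Im}\,z|\,\|u\|^2$ for $z\in K$ and $|\rho|\leq c_0|\mathrm{Im}\,z|^{p_1}$ with suitable $p_1>0$; combining this with a standard perturbation argument (based on the Neumann series for $I+(\mathcal{L}-z)^{-1}B_\rho$, whose $L^2\to L^2$ norm is controlled via the $H^{-1}\to H^1$ mapping of $(\mathcal{L}-z)^{-1}$) gives the resolvent bound $\|(\mathcal{L}_\rho-z)^{-1}\|_{L^2\to L^2}\leq C|\mathrm{Im}\,z|^{-q_1}$ for some $q_1>0$.

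Next, I would translate this operator bound into a weighted estimate on the kernel $G_\sharp$. Testing $e^{\rho\cdot x}(\mathcal{L}-z)^{-1}e^{-\rho\cdot x}$ against normalized indicator functions $\chi_{x,1/2}$ and $\chi_{x',1/2}$ of unit balls, and choosing $\rho=c_0|\mathrm{Im}\,z|^{p_1}(x-x')/|x-x'|$ to align the weight with $x-x'$, one obtains
$$
\|G_\sharp(\cdot,x';z)\|_{L^2(B_{x,1/2})}\leq \frac{C}{|\mathrm{Im}\,z|^{q_1}}\,e^{-D|\mathrm{Im}\,z|^{p_1}|x-x'|}\qquad\text{whenever }|x-x'|\geq\tfrac{1}{2}.
$$

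Finally, since $(\mathcal{L}-z)G_\sharp(\cdot,x';z)=0$ in $B_{x,1/2}$ for $|x-x'|\geq 1/2$, Proposition \ref{prop_de_giorgi_local} applied with $f=z\,G_\sharp(\cdot,x';z)$ gives $\|G_\sharp\|_{L^\infty(B_{x,1/4})}\leq C|z|\,\|G_\sharp\|_{L^2(B_{x,1/2})}$, which combined with the weighted bound yields \eqref{eq_G_sharp_decay_1} in the case $|\alpha|=|\beta|=0$. For first derivatives, the identity $(\mathcal{L}-z)^2 G_\sharp=0$ holds in the same region, so Proposition \ref{prop_de_giorgi_double_resolvent} with $f=z^2 G_\sharp$ provides $W^{1,p}(B_{x,1/4})$ control for some $p>2$, and the 2D Sobolev embedding $W^{1,p}\hookrightarrow L^\infty$ upgrades this to a pointwise bound on $\partial_x^\alpha G_\sharp$. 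Derivatives in $x'$ are handled by applying the same analysis to the adjoint operator $\mathcal{L}^*=-\nabla\cdot A^*\nabla$, whose Green function equals $\overline{G_\sharp(x',x;\bar z)}$.

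The main technical difficulty I anticipate is the careful bookkeeping of the exponents $p_1,q_1$ and the uniformity of all constants in $z\in K$. In particular, the $|\mathrm{Im}\,z|^{-q_1}$ prefactor from Combes--Thomas must survive every subsequent local upgrade (De Giorgi, Sobolev) without being further degraded, which requires applying the local estimates on balls of \emph{fixed} radius independent of $|x-x'|$ and carefully absorbing the $|z|$-dependent factors arising from the identity $\mathcal{L}G_\sharp=zG_\sharp$. The periodicity of $A$ also enters in ensuring that the De Giorgi constant in Proposition \ref{prop_de_giorgi_local} can be taken uniform over the center $x_0$ of the ball, which is needed because $x-x'$ can be arbitrarily large.
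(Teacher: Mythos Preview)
Your overall strategy---Combes--Thomas followed by local elliptic regularity---matches the paper's, and the treatment of the case $|\alpha|=|\beta|=0$ is essentially identical. However, your argument for $|\alpha|=1$ contains a gap: the Sobolev embedding $W^{1,p}\hookrightarrow L^\infty$ (for $p>2$ in two dimensions) gives a pointwise bound on $G_\sharp$ itself, \emph{not} on $\nabla_x G_\sharp$. Proposition~\ref{prop_de_giorgi_double_resolvent} only places $\nabla_x G_\sharp$ in $L^p$, and there is no Sobolev embedding $L^p\hookrightarrow L^\infty$. So as written, the final step does not yield the desired pointwise estimate on first derivatives.

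The paper closes this gap differently. After the Combes--Thomas $L^2$ bound, it first obtains an interior $H^2$ estimate $\|u\|_{H^2(B_{x_0,1/12})}\leq C|z|\,\|u\|_{L^2(B_{x_0,1/6})}$ from standard elliptic regularity. Then it differentiates the equation: since $\mathcal{L}u=zu$ in the ball, the function $\partial_{x_i}u$ satisfies
\[
\mathcal{L}(\partial_{x_i}u)=z\,\partial_{x_i}u-\nabla\cdot(\partial_{x_i}A)\nabla u,
\]
where the right-hand side lies in $L^2$ by the $H^2$ bound and the assumption $A\in C^3$. One can then apply the De~Giorgi estimate of Proposition~\ref{prop_de_giorgi_local} directly to $\partial_{x_i}u$, obtaining $\|\partial_{x_i}u\|_{L^\infty}\lesssim (|z|+1)\|u\|_{H^2}$, which feeds back into the exponential decay. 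Your use of the adjoint to handle $\partial_{x'}^\beta$ is correct and is what the paper does implicitly; the only repair needed is the derivative-in-$x$ step.
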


\begin{proposition}
\label{prop_G_L_decay}
For $z\in K$, where $K\subset\mathbf{C}$ is compact, there exists $q_1,p_1>0$ such that for $|x-x^{\prime}|\geq \frac{1}{2}$ and $|\alpha|,|\beta|\leq 1$, the following estimates hold: 
\begin{equation}
\label{eq_G_L_decay_1}
|\partial^{\alpha}_{x}G_{L}(x,x^{\prime};z)|+|\partial^{\beta}_{x^{\prime}}G_{L}(x,x^{\prime};z)|\leq \frac{C}{|\text{Im }z|^{q_1}}e^{-D|\text{Im }z|^{p_1}|x-x^{\prime}|},
\end{equation}
where $C$ depends only on the coefficient matrix $A$, the domain $K$ and shape of $\partial \Omega$.
\end{proposition}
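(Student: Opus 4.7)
The proof will follow the same approach as Proposition \ref{prop_G_sharp_decay}, combining a Combes-Thomas estimate to obtain exponential decay in a weighted $L^2$-norm with the elliptic regularity results of Section 2.4 to upgrade this to a pointwise estimate on $G_L$ and its first derivatives. The only substantive difference from the periodic case is that one must handle points near the boundary $\partial\Omega_L$, using the boundary version Proposition \ref{prop_de_giorgi_local_boundary} in place of the interior Proposition \ref{prop_de_giorgi_local}.

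The first step is a Combes-Thomas estimate for the Dirichlet resolvent. Let $f \in C^{\infty}(\mathbf{R}^2)$ be bounded with $\|\nabla f\|_{L^{\infty}} \leq 1$, and consider the conjugated operator
\begin{equation*}
\mathcal{L}^{\rho}_{\Omega_L} := e^{\rho f}\, \mathcal{L}_{\Omega_L}\, e^{-\rho f} = \mathcal{L}_{\Omega_L} + \rho B_1(f) + \rho^2 B_0(f),
\end{equation*}
where a direct computation shows that $B_1(f)$ is a first-order differential operator and $B_0(f) = \nabla f \cdot A\nabla f$ is a bounded multiplication operator, both with coefficients controlled by $\|A\|_{C^1}$ and $\|\nabla f\|_{W^{1,\infty}}$ independently of $L$. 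Since multiplication by $e^{\pm\rho f}$ is a bounded bijection of $H_0^1(\Omega_L)$, the conjugated operator retains the Dirichlet boundary condition. For $z\in K\setminus\mathbf{R}$, the unperturbed resolvent satisfies $\|(\mathcal{L}_{\Omega_L}-z)^{-1}\|_{L^2\to L^2}\leq |\text{Im } z|^{-1}$, while the quadratic-form inequality yields $\|(\mathcal{L}_{\Omega_L}-z)^{-1}\|_{L^2\to H^1}\leq C|\text{Im } z|^{-1/2}$ on $K$. A Neumann-series argument then shows that for $\rho \leq c|\text{Im } z|^{p_1}$, with $c,p_1 > 0$ depending only on $A$ and $K$, the operator $\mathcal{L}^{\rho}_{\Omega_L}-z$ is invertible and
\begin{equation*}
\|(\mathcal{L}^{\rho}_{\Omega_L}-z)^{-1}\|_{L^2(\Omega_L)\to L^2(\Omega_L)} \leq \frac{C}{|\text{Im } z|^{q_1}}.
\end{equation*}
Choosing $f$ to be a smooth, bounded approximation of $|\cdot - y_0|$ differing by at least $|x-x'| - O(1)$ between $x$ and $x'$, and unwinding the conjugation, I obtain the weighted localized bound
\begin{equation*}
\|\chi_{x,1}(\mathcal{L}_{\Omega_L}-z)^{-1}\chi_{x',1}\|_{L^2\to L^2}\leq \frac{C}{|\text{Im } z|^{q_1}}\,e^{-D|\text{Im } z|^{p_1}|x-x'|}.
\end{equation*}

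To upgrade this weighted $L^2$ bound to a pointwise estimate on $G_L$ and its gradient, I would fix $x'$ and view $u := G_L(\cdot,x';z)$ as a weak solution of $\mathcal{L}u = z u + \delta_{x'}$ in $\Omega_L$ with homogeneous Dirichlet data. On any ball of radius $1/4$ around a point $x$ with $|x-x'|\geq 1/2$, the delta source is absent, so $\mathcal{L}^2 u = z^2 u$ weakly on that ball. Proposition \ref{prop_de_giorgi_double_resolvent} then yields a $W^{1,p}$ bound ($p>2$) on $u$ in terms of its $L^2$-norm on a slightly larger ball, which by Sobolev embedding controls $|u(x)|$ and $|\nabla u(x)|$. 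Combining this with the weighted $L^2$ estimate above produces the desired decay of $G_L$ and $\partial_x^\alpha G_L$ for $|\alpha|\leq 1$; the derivatives in $x'$ are obtained symmetrically by applying the same argument to the adjoint resolvent. When $x$ or $x'$ lies near $\partial\Omega_L$, the interior estimate is replaced by its boundary analogue, which is obtained by combining the boundary De Giorgi estimate (Proposition \ref{prop_de_giorgi_local_boundary}) with the same higher-order bootstrapping used in \cite{auscher1998heat} for the interior case.

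The main obstacle is keeping all constants uniform in $L$. The Combes-Thomas step is $L$-independent once one verifies that multiplication by $e^{\pm\rho f}$ preserves $H_0^1(\Omega_L)$ with $L$-independent norms, which is immediate since $f$ has globally bounded derivatives. The delicate point is the boundary regularity: the constant in Proposition \ref{prop_de_giorgi_local_boundary} depends on local parametrizations of $\partial\Omega$, but $\partial\Omega_L = L\cdot\partial\Omega$ only becomes flatter as $L$ grows, so after a standard rescaling these constants are controlled uniformly by those for $\partial\Omega$. With this observation in hand, the whole argument produces \eqref{eq_G_L_decay_1} with constants depending only on $A$, $K$, and the shape of $\partial\Omega$, as required.
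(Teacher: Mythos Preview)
Your proposal is correct and follows essentially the same approach as the paper: the paper explicitly states that the proof of Proposition \ref{prop_G_L_decay} is obtained by adapting verbatim the Combes--Thomas plus De~Giorgi argument for Proposition \ref{prop_G_sharp_decay}, replacing the interior estimate Proposition \ref{prop_de_giorgi_local} by the boundary version Proposition \ref{prop_de_giorgi_local_boundary}. The one minor variation is that, to pass from the weighted $L^2$ bound to pointwise control of the gradient, you invoke $\mathcal{L}^2 u = z^2 u$ and Proposition \ref{prop_de_giorgi_double_resolvent} (then Sobolev embedding), whereas the paper bootstraps via an interior $H^2$ cut-off estimate followed by another application of Proposition \ref{prop_de_giorgi_local} to the equation satisfied by $\partial_{x_i}u$; both routes are standard and equivalent in spirit, though note that a boundary analogue of Proposition \ref{prop_de_giorgi_double_resolvent} is not stated in the paper, so the paper's iterated use of Proposition \ref{prop_de_giorgi_local_boundary} is the more self-contained choice near $\partial\Omega_L$.
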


The proof of Proposition \ref{prop_G_sharp_decay} is provided at the end of this section. All the arguments in that proof can be adapted to establish Proposition \ref{prop_G_L_decay}, with the only difference being the use of Proposition \ref{prop_de_giorgi_local_boundary} in place of Proposition \ref{prop_de_giorgi_local} to obtain the pointwise estimate.

\begin{remark}
The exponential decay demonstrated here is not new in principle. However, we provide a detailed proof because it constitutes a crucial technical result leading to our main theorem, and the existing estimates in the literature are not directly applicable to our specific framework. The exponential decay of Green functions has been extensively studied in various contexts, including $L^2$-estimates for $2D$ classical wave systems \cite{figotin1996localization} and electronic system \cite{agmon82decay,lu11green,fefferman2020continuum}, pointwise estimates for divergence-form operators via various methods such as the heat kernel method \cite{auscher1998heat,auscher2019gaussian,auscher2000equivalence}, iteration method \cite{blanc2013asymptotic_green,miyazaki2004lp_resolvent,kim2013green} and Floquet transform method \cite{kuchment2012green,elst2001periodic_heat_kernel}. See also \cite{gruter1982green,hofmann2007green,alfonseca2011analyticity,kang2010green_global} for results in higher dimensions.   
\end{remark}

We also require estimates on the singularities of Green functions, as proposed below. Similar estimates are provided in Theorems 4.8 and 4.9 of \cite{wang2023uniform} for the case where the coefficient matrix $A(\bm{x})$ is real-valued. While it is reasonable to expect that the arguments in that paper could be extended to our setting ( 2D elliptic operators with $C^3$ complex-valued coefficients), at least when $\text{Im }A(x)$ is sufficiently small, we choose to retain these as hypotheses for the sake of caution. Proving them in full detail is beyond the scope of this paper, as it is not the primary focus of our work.

\begin{hypothesis}
\label{prop_GL_singularities}
For $z\in K$, where $K\subset\mathbf{C}$ is compact, and any $L>1$, $s\in (0,1)$, there exists $q_2>0$ such that
\begin{equation}
\label{eq_GL_singularities_1}
|G_{L}(x,x^{\prime};z)|\leq \frac{C}{|\text{Im }z|^{q_2}}\Big\{
L^s+\log|x-x^{\prime}|\Big\},
\end{equation}
\begin{equation}
\label{eq_GL_singularities_2}
|\partial^{\alpha}_{x} G_{L}(x,x^{\prime};z)|+|\partial^{\beta}_{x^{\prime}} G_{L}(x,x^{\prime};z)|\leq \frac{C}{|\text{Im }z|^{q_2}}\frac{1}{|x-x^{\prime}|},
\end{equation}
for $|\alpha|,|\beta|=1$, 
where $C$ depends on $s$, the coefficient matrix $A$, the domain $K$ and the shape of $\partial \Omega$.
\end{hypothesis}
\begin{remark}
Note that the estimate \eqref{eq_GL_singularities_1} depends on the domain size $L$ as a consequence of the boundary condition. This dependence can be understood by considering the Green function $\log\frac{L}{|x|}$ for the Laplacian on the open ball $B_L$. On the other hand, the polynomial dependence of the estimate on $|\text{Im }z|$ is a consequence of the resolvent estimate $\|(\mathcal{L}_{\Omega_L}-z)^{-1}\|_{\mathcal{B}(L^2(\Omega_L))}=\mathcal{O}(|\text{Im }z|^{-1})$, which is implicitly used in the proof in \cite{wang2023uniform}. Roughly speaking, 
the pointwise estimates stated in Hypothesis \ref{prop_GL_singularities} are obtained by starting from the \(L^2\)-bound of the resolvent. An iterative argument, employing De Giorgi-type estimates, is then applied to transition from the \(L^2\)-bound to a pointwise estimate. Importantly, this second step does not introduce any additional singularity with respect to \(|\text{Im } z|\) (see \cite{auscher96complex_de_giorgi,dindos19regularity}). Consequently, the resulting estimate exhibits, at worst, polynomial growth as \(|\text{Im } z| \to 0\).
\end{remark}

We also need the estimation of the logarithmic behavior of the whole-space Green function $G_{\sharp}(x,x^{\prime};z)$. Note that we do not have the $L-$factor since we work in the whole space.
\begin{hypothesis}
\label{prop_G_sharp_singularities}
For $z\in K$, where $K\subset\mathbf{C}$ is compact, there exists $q_3>0$ such that
\begin{equation}
\label{eq_G_sharp_singularities_1}
|G_{\sharp}(x,x^{\prime};z)|\leq \frac{C}{|\text{Im }z|^{q_3}}\big(1+\log|x-x^{\prime}|\big),
\end{equation}
\begin{equation}
\label{eq_G_sharp_singularities_2}
|\partial^{\alpha}_{x} G_{\sharp}(x,x^{\prime};z)|+|\partial^{\beta}_{x^{\prime}} G_{\sharp}(x,x^{\prime};z)|\leq \frac{C}{|\text{Im }z|^{q_3}}\frac{1}{|x-x^{\prime}|} \quad \mbox{for}\,\,|\alpha|,|\beta|=1, 
\end{equation}
where $C$ depends only on the coefficient matrix $A$ and the domain $K$.
\end{hypothesis}

\begin{proof}[Proof of Proposition \ref{prop_G_sharp_decay}]
The proof is structured as follows: In Step 1, we use a standard Combes-Thomas argument to obtain exponential decay in the $L^2$-norm. In Step 2, we apply a De Giorgi-type estimate to extend this result to the pointwise estimate \eqref{eq_G_sharp_decay_1} for $\alpha = \beta = 0$. In Step 3, we outline the proof of \eqref{eq_G_sharp_decay_1} for $|\alpha| = |\beta|=1$.

{\color{blue}Step 1.} Fix $x_0, x_0^{\prime}$ such that $|x_0 - x_0^{\prime}| \geq 1/2$. Using the Combes-Thomas argument (as in Lemma 12 of \cite{figotin1996localization}), we show:
\begin{equation}
\label{eq_G_sharp_decay_proof_0}
\|f\cdot \chi_{x_0,1/6}(\mathcal{L}-z)^{-1}(g\cdot \chi_{x^{\prime}_0,1/6})\|_{L^2(\mathbf{R}^2)}\leq \frac{C}{|\text{Im }z|^{q}}e^{-D|\text{Im }z|^{p}|x_0-x^{\prime}_0|}\|f\|_{L^{\infty}(\mathbf{R}^2)}\|g\cdot \chi_{x^{\prime}_0,1/6}\|_{L^2(\mathbf{R}^2)}
\end{equation}
for any $f,g\in L^{\infty}(\mathbf{R}^2)$,  where $\chi_{x, 1/6}$ is the indicator function of the ball $B_{x, 1/6}$, and $C, D$ depend only on the coefficient matrix $A$ and the range of $z$.

To prove this, we define $\mathcal{L}_a := e^{a \cdot x} \mathcal{L} e^{-a \cdot x}$, where $|a| < 1$, as the self-adjoint realization of the quadratic form:
\begin{equation*}
L_a[\psi]:=\big(
A\nabla e^{-a\cdot x}\psi,\nabla e^{a\cdot x}\psi \big)_{(L^2(\mathbf{R}^2))^3}
=\big(
A(\nabla -a) \psi,(\nabla +a)\psi \big)_{(L^2(\mathbf{R}^2))^3}
,\quad \psi\in C_0^1(\mathbf{R}^2).
\end{equation*}
Note that $L_0$ is the quadratic form associated with $\mathcal{L}$. We now estimate
\begin{equation} \label{eq_G_sharp_decay_proof_1}
\begin{aligned}
|L_a[\psi]-L_0[\psi]|
\leq \big|\big(
A\cdot a \psi,\nabla\psi \big)_{(L^2(\mathbf{R}^2))^3} \big|
+\big|\big(
A\nabla \psi,a\psi \big)_{(L^2(\mathbf{R}^2))^3} \big|
+\big|\big(
A\cdot a \psi,a\psi \big)_{(L^2(\mathbf{R}^2))^3} \big|. 
\end{aligned}
\end{equation}
Using the Cauchy-Schwarz inequality, we can derive that
\begin{equation}
\label{eq_G_sharp_decay_proof_2}
\begin{aligned}
|L_a[\psi]-L_0[\psi]|
&\leq C\Big(|a|(1+|a|)\|\psi\|_{L^2(\mathbf{R}^2)}^2+|a|\|\nabla\psi\|_{(L^2(\mathbf{R}^2))^3}^2\Big) \\
&\leq C\Big(|a|\|\psi\|_{L^2(\mathbf{R}^2)}^2+|a|L_0[\psi]\Big),
\end{aligned}
\end{equation}
where $C>0$ depends only on $\|A\|_{\infty}$. Here the ellipticity of $L_0$ and the fact that $|a|<1$ is used in the second inequality. Hence, for $|a|<1$ being sufficiently small, \eqref{eq_G_sharp_decay_proof_2} implies that the quadratic form $L_a$ is relatively bounded by $L_0$, ensuring the invertibility of $\mathcal{L}_a - z$. More precisely, by Theorem 3.9, Chapter VI of \cite{kato2013perturbation}, when the following holds for $z\notin \sigma(\mathcal{L})$: 
\begin{equation*}
\label{eq_G_sharp_decay_proof_3}
t_{a,z}:=C|a|\|(1+\mathcal{L})(\mathcal{L}-z)^{-1}\|<1, 
\end{equation*}
the operator $\mathcal{L}_a-z$ is invertible with the estimate
\begin{equation*}
\|(\mathcal{L}_a-z)^{-1}\|\leq \big(
1+\frac{4t_{a,z}}{(1-t_{a,z})^2}
\big)\|(\mathcal{L}-z)^{-1}\|.
\end{equation*}
Using the standard bound $|(\mathcal{L} - z)^{-1}| \leq 1 / |\text{Im } z|$, we can set
\begin{equation*}
|a| \leq m_z := \frac{|\text{Im } z|}{2C(2 + |z|)},
\end{equation*}
to ensure 
\begin{equation*}
t_{a,z}=C|a|\|(\mathcal{L}-z)^{-1}+1+z(\mathcal{L}-z)^{-1} \|
\leq C|a|\big(1+\frac{1+|z|}{|\text{Im } z|} \big) \leq \frac{1}{2}.
\end{equation*}
This yields the estimate
\begin{equation}
\label{eq_G_sharp_decay_proof_4}
\|(\mathcal{L}_a-z)^{-1}\|\leq \frac{9}{|\text{Im }z|}.
\end{equation}
Now, we take $a=m_z \frac{x_0-x_0^{\prime}}{|x_0-x_0^{\prime}|}$. By the identity $(\mathcal{L}-z)^{-1}=e^{-a\cdot x}(\mathcal{L}_a-z)^{-1}e^{a\cdot x}$, we calculate
\begin{equation*}
\begin{aligned}
f\cdot \chi_{x_0,1/6}(\mathcal{L}-z)^{-1}(g\cdot \chi_{x^{\prime}_0,1/6})
&=f\cdot \chi_{x_0,1/6}e^{-a\cdot x}(\mathcal{L}_a-z)^{-1}e^{a\cdot x}(g\cdot \chi_{x^{\prime}_0,1/6}) \\
&=e^{-m_z|x_0-x^{\prime}_0|}f\cdot \chi_{x_0,1/6}e^{-a\cdot (x-x_0)}(\mathcal{L}_a-z)^{-1}e^{a\cdot (x-x^{\prime}_0)}(g\cdot \chi_{x^{\prime}_0,1/6}).
\end{aligned}
\end{equation*}
Hence
\begin{equation*}
\begin{aligned}
&\|f\cdot \chi_{x_0,1/6}(\mathcal{L}-z)^{-1}(g\cdot \chi_{x^{\prime}_0,1/6})\|_{L^2(\mathbf{R}^2)} \\
&\leq e^{-m_z|x_0-x^{\prime}_0|}
\|f\cdot \chi_{x_0,1/6}e^{-a\cdot (x-x_0)}\|_{L^\infty}\cdot \|(\mathcal{L}_a-z)^{-1}\|\cdot \|e^{a\cdot (x-x^{\prime}_0)}(g\cdot \chi_{x^{\prime}_0,1/6})\|_{L^2} \\
&\leq \frac{C}{|\text{Im }z|}\cdot e^{-m_z|x_0-x^{\prime}_0|} \|f\|_{L^{\infty}(\mathbf{R}^2)}\|g\cdot \chi_{x^{\prime}_0,1/6}\|_{L^2(\mathbf{R}^2)}.
\end{aligned}   
\end{equation*}
This proves \eqref{eq_G_sharp_decay_proof_0} with $p=q=1$ and $D= \frac{1}{2C(2+ \min_{z\in K}|z|)}$. 

{\color{blue}Step 2.} We prove the pointwise estimate \eqref{eq_G_sharp_decay_1} with $\alpha=\beta=0$ using a De Giorgi-type estimate. Define $u=(\mathcal{L}-z)^{-1}(g\cdot \chi_{x^{\prime}_0,1/6})$. By taking $f\equiv 1$ in \eqref{eq_G_sharp_decay_proof_0}, we have
\begin{equation} \label{eq_G_sharp_decay_proof_6}
\|u\|_{L^2(B_{x_0,\frac{1}{6}})}\leq \frac{C}{|\text{Im }z|^{q}}e^{-D|\text{Im }z|^{p}|x_0-x^{\prime}_0|}\|g\cdot \chi_{x^{\prime}_0,1/6}\|_{L^2(\mathbf{R}^2)}.
\end{equation}
Since $B_{x_0,1/6}\cap B_{x_0^{\prime},1/6}=\emptyset$, the function $u$ satisfies the following equation in the weak sense: 
\begin{equation} \label{eq_G_sharp_decay_proof_solu}
    \mathcal{L}u=zu+ g\cdot \chi_{x^{\prime}_0,1/6} =zu \quad \text{in $B_{x_0,1/6}$}.
\end{equation}
By Proposition \ref{prop_de_giorgi_local}, we can deduce that
\begin{equation*}
\begin{aligned}
|u(x_0)|\leq \|u\|_{L^{\infty}(B(x_0,1/12))}\leq \frac{C|z|}{|\text{Im }z|^{q}}e^{-D|\text{Im }z|^{p}|x_0-x^{\prime}_0|}\|g\cdot \chi_{x^{\prime}_0,1/6}\|_{L^2(\mathbf{R}^2)},
\end{aligned}
\end{equation*}
or equivalently
\begin{equation*}
\begin{aligned}
\Big|\int_{B_{x^{\prime}_0,1/6}}dx^{\prime}G_{\sharp}(x_0,x^{\prime};z)g(x^{\prime})\Big|
\leq \frac{C|z|}{|\text{Im }z|^{q}}e^{-D|\text{Im }z|^{p}|x_0-x^{\prime}_0|}\|g\|_{L^2(B_{x^{\prime}_0,1/6})}.
\end{aligned}
\end{equation*}
Now, take $g(x^{\prime}):=\overline{G_{\sharp}(x_0,x^{\prime};z)}$, which lies in $L^{\infty}$ since $B_{x_0,1/6}\cap B_{x_0^{\prime},1/6}=\emptyset$. Substituting this into the inequality above, we find
\begin{equation} \label{eq_G_sharp_decay_proof_5}
\begin{aligned}
\|G_{\sharp}(x_0,\cdot;z)\|_{L^2(B_{x^{\prime}_0,1/6})}
\leq \frac{C|z|}{|\text{Im }z|^{q}}e^{-D|\text{Im }z|^{p}|x_0-x^{\prime}_0|}.
\end{aligned}
\end{equation}
Then the pointwise estimate \eqref{eq_G_sharp_decay_1} with $\alpha=\beta=0$ follows from 
Proposition \ref{prop_de_giorgi_local}.


{\color{blue}Step 3.} The proof of \eqref{eq_G_sharp_decay_1} for $|\alpha| = 1$ and $\beta = 0$ follows a similar argument as in the case of $\alpha = \beta = 0$.
First, using \eqref{eq_G_sharp_decay_proof_solu} and a standard cut-off argument for estimating the interior $H^2$-norm of solutions to elliptic equations, we obtain
\begin{equation*}
    \|u\|_{H^2(B_{x_0,1/12})}
    \leq C|z| \|u\|_{L^2(B_{x_0,1/6})}.
\end{equation*}
Substituting the decay estimate from \eqref{eq_G_sharp_decay_proof_6}, we derive
\begin{equation} \label{eq_G_sharp_decay_proof_7}
\|u\|_{H^2(B_{x_0,1/12})}
\leq \frac{C|z|}{|\text{Im }z|^{q}}e^{-D|\text{Im }z|^{p}|x_0-x^{\prime}_0|}\|g\cdot \chi_{x^{\prime}_0,1/6}\|_{L^2(\mathbf{R}^2)}.
\end{equation}
Next, note that $\partial_{x_i} u$ (for $i = 1, 2$) satisfies the following equation in $B_{x_0, 1/6}$:
\begin{equation*}
\mathcal{L}_x(\partial_{x_i}u)=z\partial_{x_i}u-\nabla\cdot (\partial_{x_i}A)\nabla u.
\end{equation*}
Since $\partial_{x_i} A \in C^1$ by assumption, we can apply Proposition \ref{prop_de_giorgi_local} to obtain
\begin{equation*}
\begin{aligned}
|\partial_{x_i}u(x_0)|
\leq \|\partial_{x_i}u(x_0)\|_{L^{\infty}(B_{x_0,1/24})}
&\leq C\big(|z|\|\partial_{x_i}u\|_{L^2(B_{x_0,1/12})}+\|\nabla\cdot (\partial_{x_i}A)\nabla u\|_{L^2(B_{x_0,1/12})} \big) \\
&\leq C(|z|+1)\|u\|_{H^2(B_{x_0,1/12})}.
\end{aligned}
\end{equation*}
Substituting the exponential decay of $\|u\|_{H^2(B_{x_0, 1/12})}$ from \eqref{eq_G_sharp_decay_proof_7}, we arrive at
\begin{equation*}
|\partial_{x_i}u(x_0)|
\leq  \frac{C(|z|^2+|z|)}{|\text{Im }z|^{q}}e^{-D|\text{Im }z|^{p}|x_0-x^{\prime}_0|}\|g\cdot \chi_{x^{\prime}_0,1/6}\|_{L^2(\mathbf{R}^2)}.
\end{equation*}
Recalling that $u=(\mathcal{L}-z)^{-1}(g\cdot \chi_{x^{\prime}_0,1/6})$, the above inequality becomes
\begin{equation*}
\Big|\int_{B_{x^{\prime}_0,1/6}}dx^{\prime}\partial_{x_i}G_{\sharp}(x_0,x^{\prime};z)g(x^{\prime})\Big|
\leq  \frac{C(|z|^2+|z|)}{|\text{Im }z|^{q}}e^{-D|\text{Im }z|^{p}|x_0-x^{\prime}_0|}\|g\cdot \chi_{x^{\prime}_0,1/6}\|_{L^2(\mathbf{R}^2)}.
\end{equation*}
Finally, the pointwise estimate \eqref{eq_G_sharp_decay_1} for $|\alpha|=1$ and $\beta=0$ follows by choosing $g(x^{\prime})=\partial_{x_i}\overline{G_{\sharp}(x_0,x^{\prime};z)}$ in the above formula and applying Proposition \ref{prop_de_giorgi_local}, 
as in Step 2. The proof for the other case $\alpha=0, |\beta| = 1$, follows similarly.
\end{proof}

In the sequel of this paper, we set $q=\min\{q_1,q_2,q_3\}$ and $p=\min\{p_1,p_2,p_3\}$ such that the estimates in Proposition \ref{prop_G_sharp_decay}, \ref{prop_G_L_decay}, Hypothesis \ref{prop_GL_singularities} and \ref{prop_G_sharp_singularities} hold with the exponents $q_i$ and $p_i$ are replaced by $q$ and $p$, respectively. Also, we fix the constant $s=\frac{1}{3}$ in Hypothesis \ref{prop_GL_singularities}. Since we focus on a compact subset in the complex plane (i.e. $\text{supp }\tilde{g}$), we can set $D=1$ in Proposition \ref{prop_G_sharp_decay} and \ref{prop_G_L_decay} without loss of generality. Recalling that $\mathcal{V}_i=-e_{i}\cdot \nabla A-\nabla\cdot A e_i$ are first-order differential operators with uniformly bounded coefficients, we now summarize the results of this section into the following Corollary:

\begin{corollary} \label{corol_estimates_Vi_G}
Let $z\in\text{supp }\tilde{g}$. For any $x\neq x^{\prime}$ and $i,j\in\{1,2\}$, the following holds
\footnotesize
\begin{equation} \label{eq_estimates_Vi_G_1}
\begin{aligned}
|G_{\sharp}(x,x^{\prime};z)|+
|\mathcal{V}_{i,x}G_{\sharp}(x,x^{\prime};z)|+|\mathcal{V}_{j,x^{\prime}}G_{\sharp}(x,x^{\prime};z)|&\leq \frac{C}{|\text{Im }z|^q}\big(1+|\log|x-x^{\prime}||+\frac{1}{|x-x^{\prime}|}\big), \\
|G_{L}(x,x^{\prime};z)|+
|\mathcal{V}_{i,x}G_{L}(x,x^{\prime};z)|+|\mathcal{V}_{j,x^{\prime}}G_{L}(x,x^{\prime};z)|&\leq \frac{C}{|\text{Im }z|^q}\big(L^{\frac{1}{3}}+|\log|x-x^{\prime}||+\frac{1}{|x-x^{\prime}|}\big).
\end{aligned}
\end{equation}
\normalsize
When $|x-x^{\prime}|\geq \frac{1}{2}$, we have the following exponential decay estimates:
\footnotesize
\begin{equation} \label{eq_estimates_Vi_G_2}
\begin{aligned}
|G_{\sharp}(x,x^{\prime};z)|+
|\mathcal{V}_{i,x}G_{\sharp}(x,x^{\prime};z)|+|\mathcal{V}_{j,x^{\prime}}G_{\sharp}(x,x^{\prime};z)|&\leq \frac{C}{|\text{Im }z|^q}e^{-|\text{Im }z|^p |x-x^{\prime}|}, \\
|G_{L}(x,x^{\prime};z)|+
|\mathcal{V}_{i,x}G_{L}(x,x^{\prime};z)|+|\mathcal{V}_{j,x^{\prime}}G_{L}(x,x^{\prime};z)|&\leq \frac{C}{|\text{Im }z|^q}e^{-|\text{Im }z|^p |x-x^{\prime}|}.
\end{aligned}
\end{equation}
\normalsize
\end{corollary}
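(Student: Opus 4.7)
The plan is to assemble this corollary as a direct consequence of Propositions \ref{prop_G_sharp_decay} and \ref{prop_G_L_decay}, together with Hypotheses \ref{prop_GL_singularities} and \ref{prop_G_sharp_singularities}, by leveraging the elementary structure of the commutator $\mathcal{V}_i$. The key observation is that
\begin{equation*}
\mathcal{V}_{i,x}\psi(x) \;=\; -\bigl(e_i\cdot\nabla A(x)\bigr)\psi(x)\;-\;(A(x)e_i)\cdot\nabla\psi(x)\;-\;\bigl(\nabla\cdot(Ae_i)\bigr)(x)\,\psi(x),
\end{equation*}
so that $\mathcal{V}_{i,x}$ is a first-order differential operator whose coefficients are uniformly bounded on $\mathbf{R}^2$ (because $A\in C^3(\mathbf{R}^2,\mathbf{C}^{2\times 2})$ makes both $A$ and $\nabla A$ globally $L^\infty$). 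Applied to either Green function, this gives
\begin{equation*}
|\mathcal{V}_{i,x}G(x,x';z)|\;\le\; C(\|A\|_{C^1})\Bigl(\,|G(x,x';z)|\;+\;\sum_{|\alpha|=1}|\partial^{\alpha}_{x}G(x,x';z)|\,\Bigr),
\end{equation*}
and analogously for $\mathcal{V}_{j,x'}$ acting in the second variable.

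First I would fix $p:=\min\{p_1,p_2,p_3\}$ and $q:=\min\{q_1,q_2,q_3\}$, as the paragraph just preceding the corollary already prescribes, so that all four antecedent statements remain valid with the common exponents $p$ and $q$. To obtain the near-diagonal bound \eqref{eq_estimates_Vi_G_1} for $G_{\sharp}$, I would simply substitute \eqref{eq_G_sharp_singularities_1} and \eqref{eq_G_sharp_singularities_2} into the above inequality, producing the three terms $1$, $|\log|x-x'||$, and $1/|x-x'|$. The analogous bound for $G_L$ follows by instead substituting \eqref{eq_GL_singularities_1} and \eqref{eq_GL_singularities_2}, with the value $s=\frac{1}{3}$ fixed in the same preceding paragraph, so that $L^{s}=L^{1/3}$.

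For the far-field exponential estimate \eqref{eq_estimates_Vi_G_2}, I would invoke Propositions \ref{prop_G_sharp_decay} and \ref{prop_G_L_decay} directly: these control both $G$ itself (as the $|\alpha|=0$ case, which is contained in the stated bounds once one notes the estimates hold for $|\alpha|,|\beta|\le 1$) and its first-order $x$- and $x'$-derivatives by $C|\text{Im}\,z|^{-q}\exp(-|\text{Im}\,z|^p|x-x'|)$ in the regime $|x-x'|\ge \tfrac12$. Combined with the coefficient boundedness of $\mathcal{V}_i$ noted above, this produces the desired exponential bound after absorbing $C(\|A\|_{C^1})$ into the final constant; since $z$ ranges over the compact set $\mathrm{supp}\,\tilde g$, the prefactor $D$ in Propositions \ref{prop_G_sharp_decay} and \ref{prop_G_L_decay} may be normalized to $1$ without loss of generality, as already noted in the paragraph preceding the corollary.

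There is no genuine obstacle: the corollary is essentially a repackaging statement, and the only bookkeeping point is to verify that the various domain-dependent constants (depending on $A$, on $\partial\Omega$, and on $K=\mathrm{supp}\,\tilde g$) can be unified into a single constant $C$ of the stated form, and that in passing from the first set of estimates to the second one loses nothing because the logarithmic and reciprocal singular terms $|\log|x-x'||$ and $1/|x-x'|$ are uniformly bounded in the regime $|x-x'|\ge\tfrac12$, so there they are automatically dominated by the exponentially decaying right-hand side of \eqref{eq_estimates_Vi_G_2}.
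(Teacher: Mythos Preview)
Your proposal is correct and matches the paper's treatment exactly: the paper does not give a separate proof but simply remarks in the paragraph preceding the corollary that $\mathcal{V}_i=-e_i\cdot\nabla A-\nabla\cdot Ae_i$ is a first-order differential operator with uniformly bounded coefficients, fixes $p,q,s,D$ as you describe, and states the corollary as a direct summary of Propositions~\ref{prop_G_sharp_decay}, \ref{prop_G_L_decay} and Hypotheses~\ref{prop_GL_singularities}, \ref{prop_G_sharp_singularities}.
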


\subsection{Estimate of the quasi-periodic Green function $G_{\sharp}(x,x^{\prime};z,\kappa)$}
In this section, we estimate the quasi-periodic Green function $G_{\sharp}(x,x^{\prime};z,\kappa)$ ($\kappa\in Y^{*}$), which is defined as the integral kernel of $(\mathcal{L}(\kappa)-z)^{-1}$. Here $\mathcal{L}(\kappa)=\mathcal{L}|_{L_{\kappa}^2}$ is the Floquet transform of $\mathcal{L}$. $G_{\sharp}(x,x^{\prime};z,\kappa)$ can be written as the Floquet transform on $G_{\sharp}(x,x^{\prime};z)$
\begin{equation} \label{eq_floquet_expansion_1}
    G_{\sharp}(x,x^{\prime};z,\kappa)=\sum_{n\in\mathbf{Z}^2}e^{-i\kappa\cdot n}G_{\sharp}(x+n,x^{\prime};z).
\end{equation}
The above series converges pointwisely for $z\notin \mathbf{R}$ by the exponential decay of $G_{\sharp}(x+n,x^{\prime};z)$ in Proposition \ref{prop_G_sharp_decay}. For the same reason, when $x$ and $x^{\prime}$ are restricted in the unit cell, the summation of $n\neq 0$ terms in \eqref{eq_floquet_expansion_1} is uniformly bounded. Hence the local estimate of $G_{\sharp}(x,x^{\prime};z,\kappa)$ is determined by the $n=0$ term in \eqref{eq_floquet_expansion_1}, which is in turn given by Hypothesis \ref{prop_G_sharp_singularities}. More precisely, we have
\begin{proposition} \label{prop_G_sharp_kappa_local_estimate}
The following estimates hold for $G_{\sharp}(x,x^{\prime};z,\kappa)$ and are uniform for $x,x^{\prime}\in Y,\kappa\in Y^{*}$
\begin{equation}
\label{eq_G_sharp_kappa_singularities_1}
|G_{\sharp}(x,x^{\prime};z,\kappa)|\leq \frac{C}{|\text{Im }z|^{q}}\big(1+\log|x-x^{\prime}|\big),
\end{equation}
\begin{equation}
\label{eq_G_sharp_kappa_singularities_2}
|\partial^{\alpha}_{x} G_{\sharp}(x,x^{\prime};z,\kappa)|+|\partial^{\beta}_{x^{\prime}} G_{\sharp}(x,x^{\prime};z,\kappa)|\leq \frac{C}{|\text{Im }z|^{q}}\frac{1}{|x-x^{\prime}|}, \quad \mbox{for}\,\,|\alpha|,|\beta|=1. 
\end{equation}
\end{proposition}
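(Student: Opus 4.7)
The plan is to exploit the Floquet representation \eqref{eq_floquet_expansion_1} and split the series into a singular $n=0$ term and an exponentially small tail. Specifically, I write
\begin{equation*}
G_{\sharp}(x,x';z,\kappa) = G_{\sharp}(x,x';z) + \sum_{n \in \mathbf{Z}^2 \setminus \{0\}} e^{-i\kappa \cdot n} G_{\sharp}(x+n, x'; z).
\end{equation*}
Absolute convergence of the tail for $x, x' \in Y$ is supplied by Proposition \ref{prop_G_sharp_decay}: for $|n|$ large enough, $|x+n-x'| \geq |n| - \sqrt{2}$, so the summands decay exponentially in $|n|$ and the series defines a function that is $C^1$ away from lattice shifts of $x'$. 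This separation is what isolates the singularity at $x = x'$.

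The $n = 0$ contribution is handled directly by Hypothesis \ref{prop_G_sharp_singularities}: \eqref{eq_G_sharp_singularities_1} gives the logarithmic bound \eqref{eq_G_sharp_kappa_singularities_1}, while \eqref{eq_G_sharp_singularities_2} gives the $1/|x-x'|$ bound on first-order derivatives. These terms carry the entire local singularity of $G_{\sharp}(x, x'; z, \kappa)$ as $|x - x'| \to 0$, so what remains is to argue that the rest of the Floquet series is uniformly bounded.

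For the tail, I apply Proposition \ref{prop_G_sharp_decay} (and its derivative version) term by term. For every $n$ with $|n| \geq 2$ one has $|x + n - x'| \geq |n| - \sqrt{2} \geq 1/2$, hence
\begin{equation*}
|\partial^{\alpha}_x G_{\sharp}(x+n, x'; z)| \leq \frac{C}{|\text{Im } z|^{q}} \, e^{-|\text{Im } z|^{p}(|n| - \sqrt{2})}, \qquad |\alpha| \leq 1,
\end{equation*}
and since $|e^{-i\kappa \cdot n}| = 1$ the sum over $n$ is controlled by a convergent two-dimensional geometric series, contributing a quantity bounded by $C/|\text{Im } z|^{q'}$ uniformly in $x, x' \in Y$ and $\kappa \in Y^*$. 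This is absorbed into the constant part of the right-hand side of \eqref{eq_G_sharp_kappa_singularities_1}--\eqref{eq_G_sharp_kappa_singularities_2}. The nearest-neighbor translates $|n| = 1$ are only finitely many; whenever their separation $|x+n-x'| \geq 1/2$ the same exponential bound applies, and in the complementary regime the geometric constraint forces $|x-x'|$ to be of order one, so the bound stays intact.

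The main obstacle is precisely this handling of the boundary configurations of $x, x' \in Y$ that bring some $|x+n-x'|$ close to zero for a nonzero $n$: there one must match the local logarithmic singularity coming from the translate against the factor $(1+\log|x-x'|)$ on the right-hand side, effectively using that the quasi-periodic Green function sees $|x-x'|$ as a torus distance. Once this matching is made explicit, the estimates \eqref{eq_G_sharp_kappa_singularities_1} and \eqref{eq_G_sharp_kappa_singularities_2} follow from Hypothesis \ref{prop_G_sharp_singularities} (for the $n=0$ term) and Proposition \ref{prop_G_sharp_decay} (for the tail), with constants independent of $\kappa$.
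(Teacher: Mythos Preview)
Your approach matches the paper's exactly: the paper's justification (the paragraph immediately preceding the proposition) splits the Floquet series \eqref{eq_floquet_expansion_1} into the $n=0$ term handled by Hypothesis \ref{prop_G_sharp_singularities} and the $n\neq 0$ tail declared uniformly bounded via Proposition \ref{prop_G_sharp_decay}. You are in fact more scrupulous than the paper about the boundary configurations where some $|x+n-x'|$ with $n\neq 0$ becomes small while $|x-x'|$ stays of order one; the paper simply asserts uniform boundedness of the tail without addressing this, and indeed the estimate as literally stated (Euclidean $|x-x'|$ rather than torus distance) is delicate in that regime, though this has no effect on the $L^p$-type corollaries that follow.
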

Note that both $\log|x-x^{\prime}|$ and $\frac{1}{|x-x^{\prime}|}$ belongs to $L^p(Y)$ as functions of $x^{\prime}$ for $p<2$. Moreover, they belong to the following space
\begin{equation*}
L^{\infty}(Y;L^{p}(Y)):=\Big\{\text{$f:Y\to L^{p}(Y)$ is measurable with }\, \text{ess}\sup_{x\in Y}\|f(x,\cdot)\|_{L^p(Y)}<\infty   \Big\}.
\end{equation*}
Note that $L^{\infty}(Y;L^{p}(Y))\subset L^{2}(Y;L^{p}(Y))$, where
\begin{equation*}
L^{2}(Y;L^{p}(Y)):=\Big\{\text{$f:Y\to L^{p}(Y)$ is measurable with }\, \Big(\int_{Y}\|f(x,\cdot)\|_{L^p(Y)}^2 dx\Big)^{\frac{1}{2}}<\infty   \Big\}.
\end{equation*}
We adopt this notation for Banach space-valued functions from \cite{conca1995fluids}. For $p>1$, $L^{2}(Y;L^{p}(Y))$ is the dual of $L^{2}(Y;L^{q}(Y))$ where $1/q+1/p=1$ with the natural pairing
\begin{equation} \label{eq_banach_valued_dual_pair}
\langle f,g \rangle:=\int_{Y}\Big(\int_{Y}f(x,x^{\prime})\overline{g(x,x^{\prime})}dx^{\prime}\Big)dx,\quad
f\in L^{2}(Y;L^{p}(Y)),g\in L^{2}(Y;L^{q}(Y)).
\end{equation}
Here and henceforth, we adopt the convention that $x^{\prime}$ denotes the `fiber' variable, and $x$ denotes the `base' variable. Consequently, Proposition \ref{prop_G_sharp_kappa_local_estimate} indicates
\begin{equation*}
\partial^{\alpha} G_{\sharp}(x,x^{\prime};z,\kappa)\in L^{\infty}(Y;L^{p}(Y))\subset L^{2}(Y;L^{p}(Y)). 
\end{equation*}
We also introduce the Sobolev space
\begin{equation*}
H^1(Y;L^2(Y)):=\big\{f\in L^{2}(Y;L^{p}(Y)):\, \partial^{\alpha}_{x}f(x,x^{\prime})\in L^{2}(Y;L^{p}(Y))\quad \forall |\alpha|\leq 1\big\}. 
\end{equation*}
The dual space of $H^1(Y;L^2(Y))$ induced by the pairing \eqref{eq_banach_valued_dual_pair} is denoted as $(H^1)^{*}(Y;L^2(Y))$. Note that $L^{2}(Y;L^{2}(Y))\subset (H^1)^{*}(Y;L^2(Y))$. In conclusion,
\begin{corollary} \label{corol_banach_valued_Gsharp_1}
For $z\notin \mathbf{R}$, $|\alpha|\leq 1$ and $1\leq p\leq 2$,
\begin{equation} \label{eq_banach_valued_Gsharp_1}
\partial^{\alpha} G_{\sharp}(x,x^{\prime};z,\kappa)\in L^{\infty}(Y;L^{p}(Y))\subset L^{2}(Y;L^{p}(Y))\cap (H^1)^{*}(Y;L^2(Y)). 
\end{equation}
\end{corollary}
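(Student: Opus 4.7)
The plan is to verify both memberships directly from the pointwise bounds in Proposition \ref{prop_G_sharp_kappa_local_estimate}, combined with elementary polar-coordinate integration and, for the derivative case, a distributional/integration-by-parts argument.

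First, I would establish $\partial^{\alpha} G_{\sharp}(\cdot,\cdot;z,\kappa) \in L^{\infty}(Y; L^{p}(Y))$ by reducing to the log/inverse-distance majorants. For fixed $x \in Y$, letting $R$ be an upper bound for the diameter of $Y$, polar coordinates give
\[
\int_Y \bigl(1+|\log|x - x'||\bigr)^p \, dx' \leq C \int_0^R \bigl(1+|\log r|\bigr)^p \, r \, dr < \infty
\]
for all finite $p$, and $\int_Y |x - x'|^{-p}\, dx' \leq 2\pi \int_0^R r^{1-p}\, dr < \infty$ for $p < 2$. Both bounds are uniform in $x \in Y$. Applying Proposition \ref{prop_G_sharp_kappa_local_estimate} to majorize $|\partial^{\alpha} G_{\sharp}|$ by $|\text{Im } z|^{-q}(1+|\log|x-x'||)$ when $|\alpha|=0$ and by $|\text{Im } z|^{-q}/|x-x'|$ when $|\alpha|=1$, one concludes the $L^{\infty}(Y; L^{p}(Y))$ membership in the stated range (with the derivative case requiring $p<2$ strictly, consistent with the preceding remark in the text).

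Second, the inclusion $L^{\infty}(Y; L^{p}(Y)) \subset L^{2}(Y; L^{p}(Y))$ is immediate from the finite measure of $Y$, since $\|f\|_{L^2(Y;L^p)}^2 \leq |Y|\,\|f\|_{L^{\infty}(Y;L^p)}^2$. For the inclusion into $(H^1)^{*}(Y; L^2(Y))$, I would argue case-by-case. For $|\alpha|=0$, the log bound gives $G_{\sharp} \in L^2(Y\times Y) = L^2(Y; L^2(Y))$, which already embeds into $(H^1)^{*}(Y; L^2(Y))$ via the stated duality. For $|\alpha|=1$, a direct Hölder pairing in $L^p\times L^q$ fails, because the conjugate exponent $q>2$ is not controlled by the $L^2$-fiber norm in $H^1(Y; L^2(Y))$. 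Instead I would define the functional $\partial^{\alpha}_{x} G_{\sharp}$ on $H^1(Y; L^2(Y))$ distributionally by moving the derivative onto the test function:
\[
\langle \partial^{\alpha}_{x} G_{\sharp},\, g\rangle := -\int_{Y}\int_{Y} G_{\sharp}(x, x';z,\kappa)\, \partial^{\alpha}_{x}\overline{g(x, x')}\, dx'\,dx.
\]
The right-hand side is bounded by Cauchy–Schwarz, using $G_{\sharp} \in L^2(Y\times Y)$ (from the log estimate) and $\partial^{\alpha}_{x} g \in L^2(Y\times Y)$ by definition of $H^1(Y; L^2(Y))$, which yields $|\langle \partial^{\alpha}_{x} G_{\sharp}, g\rangle| \leq C\|g\|_{H^1(Y; L^2(Y))}$.

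The main technical point is the $(H^1)^{*}$ inclusion for the derivative case, where the raw $L^{\infty}(Y; L^p(Y))$ information with $p<2$ is insufficient for a direct Hölder pairing against $H^1(Y; L^2(Y))$; the workaround is precisely to shift the derivative onto the test function and invoke the $L^2$-Cauchy–Schwarz estimate secured by the logarithmic bound on $G_{\sharp}$ itself. The remaining arguments are routine bookkeeping of the polar-integration estimates and the embedding $L^{\infty}(Y;L^p(Y)) \hookrightarrow L^2(Y;L^p(Y))$.
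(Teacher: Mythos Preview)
Your proof is correct and follows the paper's approach: the pointwise bounds of Proposition~\ref{prop_G_sharp_kappa_local_estimate} plus polar integration give the $L^{\infty}(Y;L^{p}(Y))$ membership, and the remaining inclusions follow. Your explicit integration-by-parts argument for the $(H^1)^{*}$ membership when $|\alpha|=1$ is more careful than the paper's terse text preceding the corollary, and it is exactly the mechanism the paper itself invokes in the proof of Proposition~\ref{prop_Gsharp_floquet_series_convergence}.
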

We need an analogous estimate for $\partial_{z}G_{\sharp}(x,x^{\prime};z,\kappa)$, the integral kernel of the operator $\partial_{z}(\mathcal{L}(\kappa)-z)^{-1}=(\mathcal{L}(\kappa)-z)^{-2}$. By Proposition \ref{prop_composition rule} and \ref{prop_G_sharp_kappa_local_estimate}, for $|\alpha|\leq 1$, $\partial^{\alpha}\partial_{z}G_{\sharp}(x,x^{\prime};z,\kappa)$ can be expanded as 
\begin{equation*}
\int_{Y}\partial^{\alpha}G_{\sharp}(x,x^{\prime\prime};z,\kappa)G_{\sharp}(x^{\prime\prime},x^{\prime};z,\kappa)dx^{\prime\prime}. 
\end{equation*}
Using Proposition \ref{prop_G_sharp_kappa_local_estimate}, the above integral is bounded by
\begin{equation*}
\int_{Y}\frac{1+|\log|x^{\prime\prime}-x^{\prime}||}{|x^{\prime\prime}-x|}dx^{\prime\prime},
\end{equation*}
which is uniformly bounded for $x,x^{\prime}\in Y$. This implies:
\begin{corollary} \label{corol_banach_valued_2}
For $z\notin \mathbf{R}$, $|\alpha|\leq 1$ and $q>2$,
\begin{equation} \label{eq_banach_valued_2}
\partial^{\alpha}\partial_{z} G_{\sharp}(x,x^{\prime};z,\kappa),\partial^{\alpha}\partial_{z} G_{\sharp}(x^{\prime},x;z,\kappa)\in L^{\infty}(Y\times Y)\subset L^{\infty}(Y;L^{p}(Y)).
\end{equation}
\end{corollary}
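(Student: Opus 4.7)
My plan is to use the operator identity $\partial_z(\mathcal{L}(\kappa)-z)^{-1}=(\mathcal{L}(\kappa)-z)^{-2}$, represent the right-hand side as a composition of two resolvents on the kernel level via Proposition \ref{prop_composition rule}, and then bound the resulting integral pointwise in $x,x'\in Y$ using the local singularity estimates from Proposition \ref{prop_G_sharp_kappa_local_estimate}. The essential observation is that in two dimensions the singular factor $1/|x-x''|$ is integrable, and the logarithmic singularity is harmless under H\"older's inequality.

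First, I would check the hypothesis of Proposition \ref{prop_composition rule} on $O=Y$. By \eqref{eq_G_sharp_kappa_singularities_1}, $|G_\sharp(x,y;z,\kappa)|\leq C|\mathrm{Im}\,z|^{-q}(1+|\log|x-y||)$ uniformly on $Y\times Y$, and since $\log$ is locally integrable in $\mathbf{R}^2$, we have $\sup_{x\in Y}\int_Y |G_\sharp(x,y;z,\kappa)|\,dy<\infty$. The composition rule then yields the kernel identity
\begin{equation*}
\partial_z G_\sharp(x,x';z,\kappa)=\int_Y G_\sharp(x,x'';z,\kappa)\,G_\sharp(x'',x';z,\kappa)\,dx'',
\end{equation*}
which already gives $\partial_z G_\sharp\in L^\infty(Y\times Y)$. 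Next, for $|\alpha|=1$, I would move $\partial^\alpha_x$ onto the first factor of the integrand (the second factor does not depend on $x$), and justify the exchange of differentiation and integration by dominated convergence applied to the difference quotient, using \eqref{eq_G_sharp_kappa_singularities_2} as an integrable majorant. This gives
\begin{equation*}
\partial^\alpha_x \partial_z G_\sharp(x,x';z,\kappa)=\int_Y \partial^\alpha_x G_\sharp(x,x'';z,\kappa)\,G_\sharp(x'',x';z,\kappa)\,dx'',
\end{equation*}
whose integrand is dominated, by \eqref{eq_G_sharp_kappa_singularities_1}-\eqref{eq_G_sharp_kappa_singularities_2}, by $C|\mathrm{Im}\,z|^{-2q}(1+|\log|x''-x'||)/|x-x''|$.

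Finally, I would estimate $\int_Y (1+|\log|x''-x'||)/|x-x''|\,dx''$ uniformly in $x,x'\in Y$ via H\"older's inequality: for any $r\in(1,2)$ with conjugate $r'$, $\|1/|x-\cdot|\,\|_{L^r(Y)}$ is uniformly bounded in $x\in Y$ and $\|1+|\log|{\cdot}-x'||\,\|_{L^{r'}(Y)}$ is uniformly bounded in $x'\in Y$. This yields $\partial^\alpha\partial_z G_\sharp\in L^\infty(Y\times Y)$, and the twin bound for $\partial^\alpha\partial_z G_\sharp(x',x;z,\kappa)$ follows by swapping the roles of $x,x'$ throughout (the argument being symmetric, since one may instead push the derivative onto the second factor when $\partial^\alpha$ acts on $x'$). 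The embedding $L^\infty(Y\times Y)\subset L^\infty(Y;L^p(Y))$ is automatic from the boundedness of $Y$. The main technical point, rather than a genuine obstacle, is the careful justification of differentiation under the integral and the bookkeeping to ensure the bounds from Proposition \ref{prop_G_sharp_kappa_local_estimate} are applied with the correct variables; once that is set up, the planar integrability of $1/|x-y|$ does all the work.
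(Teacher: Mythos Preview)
Your proposal is correct and follows essentially the same route as the paper: use the resolvent identity $\partial_z(\mathcal{L}(\kappa)-z)^{-1}=(\mathcal{L}(\kappa)-z)^{-2}$, invoke Proposition~\ref{prop_composition rule} together with Proposition~\ref{prop_G_sharp_kappa_local_estimate} to write $\partial^\alpha\partial_z G_\sharp$ as the convolution-type integral $\int_Y \partial^\alpha G_\sharp(x,x'';z,\kappa)\,G_\sharp(x'',x';z,\kappa)\,dx''$, and then bound this by $\int_Y (1+|\log|x''-x'||)/|x-x''|\,dx''$, which is uniformly bounded on $Y\times Y$. Your version is in fact more careful than the paper's, since you explicitly verify the hypothesis of the composition rule and justify differentiation under the integral sign, and you spell out the final bound via H\"older rather than simply asserting uniform boundedness.
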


On the other hand, $\partial^{\alpha}G_{\sharp}(x,x^{\prime};z,\kappa)$ can be expanded as 
\begin{equation} \label{eq_Gsharp_floquet_series}
\partial^{\alpha}G_{\sharp}(x,x^{\prime};z,\kappa)
=\sum_{n\geq 1}\frac{\partial^{\alpha}v_{n}(x;\kappa)\overline{v_{n}(x^{\prime};\kappa)}}{z-\lambda_n(\kappa)},
\end{equation}
where $\{\lambda_n(\kappa),v_{n}(x;\kappa)\}$ are the Bloch eiganpairs of $\mathcal{L}(\kappa)$.
It is straightforward to verify that the above Floquet series \eqref{eq_Gsharp_floquet_series} converges in the space of distributions $\mathcal{D}^{\prime}(Y\times Y)$. We further show that the convergence also holds in  $(H^1)^*(Y,L^2(Y))$, in accordance to Corollary \ref{corol_banach_valued_Gsharp_1}.
To see this, we first recall the following result. 
\begin{proposition}[Corollary 3.2 of \cite{serov2010green}] \label{prop_lambda_n_kappa_estimate}
For any $z\notin\{\lambda_n(\kappa),\, n\geq 1\}$ and $s>1$,
\begin{equation} \label{eq_lambda_n_kappa_estimate}
    \sum_{n\geq 1}\frac{1}{|z-\lambda_{n}(\kappa)|^s}<\infty.
\end{equation}
\end{proposition}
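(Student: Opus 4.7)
The plan is to exploit the Weyl asymptotic law for the eigenvalues of the uniformly elliptic operator $\mathcal{L}(\kappa)$, which in two dimensions yields a linear lower bound $\lambda_n(\kappa) \gtrsim n$. Combined with the summability of $\sum_n n^{-s}$ for $s>1$, this gives the desired estimate. The extra condition $z \notin \{\lambda_n(\kappa)\}_{n\geq 1}$ only serves to handle the finitely many small-$n$ terms where $z$ might be close to a $\lambda_n(\kappa)$.

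First I would establish a lower bound $\lambda_n(\kappa) \geq c\,n$ for $n \geq 1$ and some $c>0$. Via the unitary equivalence $\mathcal{L}(\kappa) = e^{-i\kappa\cdot x}\tilde{\mathcal{L}}(\kappa)e^{i\kappa\cdot x}$, it suffices to analyze $\tilde{\mathcal{L}}(\kappa) = -(\nabla+i\kappa)\cdot A(\nabla+i\kappa)$ on the periodic space $L^2_{\kappa=0}(Y)$. By the ellipticity of $A$ (with ellipticity constant $\alpha > 0$) and the min–max principle,
\begin{equation*}
\lambda_n(\kappa) \;\geq\; \alpha\, \mu_n(\kappa),
\end{equation*}
where $\mu_n(\kappa)$ are the eigenvalues of $-(\nabla+i\kappa)^2$ on $L^2_{\kappa=0}(Y)$. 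These are given explicitly as the increasing rearrangement of $\{|2\pi m + \kappa|^2 : m \in \mathbf{Z}^2\}$, and a direct lattice-point count yields $\mu_n(\kappa) \geq c_0\, n$ for some $c_0 > 0$, uniformly in $\kappa \in Y^*$.

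Next, for the given $z$ fix $N_0$ large enough that $\lambda_n(\kappa) \geq 2|z|$ for $n > N_0$; by the previous step one may take $N_0 = \lceil 2|z|/(\alpha c_0)\rceil$. Then for $n > N_0$ the triangle inequality gives $|z - \lambda_n(\kappa)| \geq \lambda_n(\kappa)/2 \geq \alpha c_0 n/2$, so
\begin{equation*}
\sum_{n > N_0} \frac{1}{|z-\lambda_n(\kappa)|^s} \;\leq\; \Big(\frac{2}{\alpha c_0}\Big)^{\!s} \sum_{n > N_0} \frac{1}{n^s} \;<\; \infty,
\end{equation*}
since $s > 1$. The remaining finitely many terms with $n \leq N_0$ are each bounded by $\delta^{-s}$, where $\delta := \min_{n \leq N_0}|z-\lambda_n(\kappa)| > 0$ by hypothesis.

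The only genuine point of care is ensuring the constant $c = \alpha c_0$ in the lower bound $\lambda_n(\kappa) \geq c\,n$ is independent of $\kappa$; this reduces to counting lattice points in a disk of radius $\sqrt{\lambda}$ shifted by $\kappa$, whose cardinality differs from the unshifted count by a boundary term of lower order, so uniformity in $\kappa \in Y^*$ follows. Once the linear lower bound is in place, the rest of the argument is a routine tail estimate.
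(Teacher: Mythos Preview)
Your argument is correct. The paper does not supply its own proof of this proposition; it simply cites Corollary~3.2 of \cite{serov2010green} and remarks immediately afterward that the result ``indicates asymptotics $\lambda_n(\kappa)\sim n$, similar to Weyl law.'' Your proof makes exactly this heuristic precise: a min--max comparison with the shifted Laplacian $-(\nabla+i\kappa)^2$ on the torus yields the linear lower bound $\lambda_n(\kappa)\geq c\,n$, uniformly in $\kappa$, and the tail estimate is then immediate. So you are filling in the argument the paper only gestures at, by the expected route.
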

Heuristically, Proposition \ref{prop_lambda_n_kappa_estimate} indicates asymptotics $\lambda_{n}(\kappa)\sim n$, similar to Weyl law. 
This further leads to 

\begin{proposition}\label{prop_Gsharp_floquet_series_convergence}
For $z\notin \mathbf{R}$, $|\alpha|\leq 1$, the Floquet series \eqref{eq_Gsharp_floquet_series} of $\partial^{\alpha}G_{\sharp}(x,x^{\prime};z,\kappa)$ converges in $(H^1)^*(Y,L^2(Y))$.
\end{proposition}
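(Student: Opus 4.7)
The plan is to reduce the desired $(H^1)^*$-convergence to the more accessible $L^2$-convergence of the spectral expansion for $G_{\sharp}$ itself, then transfer to the derivative via boundedness of the weak derivative operator between appropriate spaces. The starting observation is that $(\mathcal{L}(\kappa) - z)^{-1}$ is Hilbert--Schmidt on $L^2(Y)$: by Proposition \ref{prop_lambda_n_kappa_estimate} applied with $s = 2$, the Hilbert--Schmidt norm squared equals $\sum_{n \geq 1} |z - \lambda_n(\kappa)|^{-2} < \infty$. Consequently, $G_{\sharp}(\cdot, \cdot; z, \kappa)$, being the integral kernel of a Hilbert--Schmidt operator, lies in $L^2(Y \times Y) = L^2(Y; L^2(Y))$, and the partial sums
\[
S_N(x, x') := \sum_{n=1}^{N} \frac{v_n(x;\kappa)\overline{v_n(x';\kappa)}}{z - \lambda_n(\kappa)}
\]
converge to $G_{\sharp}(\cdot, \cdot; z, \kappa)$ in the $L^2(Y; L^2(Y))$ norm, with tail norm squared given exactly by $\sum_{n > N} |z - \lambda_n(\kappa)|^{-2}$.

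The second ingredient I would establish is that, for $|\alpha| \leq 1$, the weak derivative $\partial^{\alpha}_{x}$, defined distributionally via $\langle \partial^{\alpha}_{x} f, \psi \rangle := (-1)^{|\alpha|} \langle f, \partial^{\alpha}_{x} \psi \rangle$, extends to a bounded operator from $L^2(Y; L^2(Y))$ into $(H^1)^*(Y; L^2(Y))$. The bound is immediate: for any $\psi \in H^1(Y; L^2(Y))$, the definition of $H^1(Y; L^2(Y))$ guarantees $\partial^{\alpha}_{x} \psi \in L^2(Y; L^2(Y))$, and Cauchy--Schwarz yields
\[
|\langle \partial^{\alpha}_{x} f, \psi \rangle| \leq \|f\|_{L^2(Y;L^2(Y))}\, \|\psi\|_{H^1(Y;L^2(Y))}.
\]
Combining this continuity with the $L^2$-convergence from the previous step, $\partial^{\alpha}_{x} S_N \to \partial^{\alpha}_{x} G_{\sharp}$ in $(H^1)^*(Y; L^2(Y))$.

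To close the argument, it remains to identify $\partial^{\alpha}_{x} S_N$ with the $N$-th partial sum of the Floquet series \eqref{eq_Gsharp_floquet_series}. Since each $v_n \in C^{\infty}(\mathbf{R}^2)$ by standard interior elliptic regularity applied to $\mathcal{L}(\kappa) v_n = \lambda_n(\kappa) v_n$ with $A \in C^3$, the distributional derivative of the smooth finite sum $S_N$ agrees with the classical termwise derivative, giving
\[
\partial^{\alpha}_{x} S_N(x, x') = \sum_{n=1}^{N} \frac{\partial^{\alpha} v_n(x;\kappa)\overline{v_n(x';\kappa)}}{z - \lambda_n(\kappa)},
\]
which is precisely the $N$-th partial sum of \eqref{eq_Gsharp_floquet_series}. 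The main technical point to verify is that $\partial^{\alpha}_{x} G_{\sharp}$, viewed as an element of $(H^1)^*(Y; L^2(Y))$ through the distributional pairing above, is consistent with the pointwise/$L^p$-valued notion used in Corollary \ref{corol_banach_valued_Gsharp_1} and with the already-established $\mathcal{D}'(Y \times Y)$-convergence of the series; this consistency is automatic since both interpretations agree on test functions supported in the interior of $Y \times Y$, so the limit in $(H^1)^*$ must coincide with the distribution $\partial^{\alpha}_{x} G_{\sharp}$ obtained from the classical function. The whole argument thus bypasses any delicate $L^p$ estimates on the individual modes $\partial^{\alpha} v_n$ and rests only on the Hilbert--Schmidt summability in Proposition \ref{prop_lambda_n_kappa_estimate}.
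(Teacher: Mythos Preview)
Your proposal is correct and follows essentially the same route as the paper: both reduce the $(H^1)^*$-convergence to the summability $\sum_n |z-\lambda_n(\kappa)|^{-2}<\infty$ of Proposition~\ref{prop_lambda_n_kappa_estimate} after moving $\partial^{\alpha}$ onto the test function by integration by parts. The only difference is packaging --- you factor the argument through the bounded map $\partial^{\alpha}_x: L^2(Y;L^2(Y))\to (H^1)^*(Y;L^2(Y))$ applied to the $L^2$-convergent Hilbert--Schmidt expansion of $G_\sharp$, whereas the paper verifies the Cauchy criterion directly by pairing the tail with $g\in C_c^{\infty}(Y\times Y)$, integrating by parts, and applying Parseval.
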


\begin{proof}
We prove Proposition \ref{prop_Gsharp_floquet_series_convergence} by the Cauchy criterion. With the duality $(H^1)^*(Y,L^2(Y))=(H^1(Y,L^2(Y)))^{\prime}$, it's sufficient to prove: for any $\epsilon>0$ there exists $N>0$ such that
\begin{equation} \label{eq_floquet_series_convergence_proof_1}
\sup_{\substack{g\in C_c^{\infty}(Y\times Y) \\ \|g\|_{H^1(Y;L^{2}(Y))}=1}}\Big|\Big\langle \sum_{n=N_1}^{N_2}\frac{\partial^{\alpha}v_{n}(x;\kappa)\overline{v_{n}(x^{\prime};\kappa)}}{z-\lambda_n(\kappa)},g(x,x^{\prime})  \Big\rangle \Big|\leq \epsilon,
\end{equation}
for any $N_2>N_1>N$. Integrating by parts and utilizing the Cauchy-Schwarz inequality yield
\footnotesize
\begin{equation*}
\begin{aligned}
\Big|\Big\langle\sum_{n=N_1}^{N_2}\frac{\partial^{\alpha}v_{n}(x;\kappa)\overline{v_{n}(x^{\prime};\kappa)}}{z-\lambda_n(\kappa)},g(x,x^{\prime})  \Big\rangle \Big|
&=\Big|\Big\langle\sum_{n=N_1}^{N_2}\frac{v_{n}(x;\kappa)\overline{v_{n}(x^{\prime};\kappa)}}{z-\lambda_n(\kappa)},\partial^{\alpha}g(x,x^{\prime})  \Big\rangle \Big| \\
&= \Big|\sum_{n=N_1}^{N_2}\frac{1}{z-\lambda_n(\kappa)}\int_{Y}v_{n}(x;\kappa)
\Big(\int_{Y}\overline{\partial^{\alpha}g(x,x^{\prime})}\cdot \overline{v_{n}(x^{\prime};\kappa)}dx^{\prime} \Big)dx \Big|\\
&\leq \sum_{n=N_1}^{N_2}\frac{1}{|z-\lambda_n(\kappa)|} \Big( \int_{Y}
\Big|\int_{Y}\overline{\partial^{\alpha}g(x,x^{\prime})}\cdot \overline{v_{n}(x^{\prime};\kappa)}dx^{\prime} \Big|^2 dx\Big)^{1/2} \\
&\leq \Big(\sum_{n=N_1}^{N_2}\frac{1}{|z-\lambda_n(\kappa)|^2} \Big)^{1/2}
\Big(\sum_{n=N_1}^{N_2}\int_{Y}
\Big|\int_{Y}\overline{\partial^{\alpha}g(x,x^{\prime})}\cdot \overline{v_{n}(x^{\prime};\kappa)}dx^{\prime} \Big|^2 dx \Big)^{1/2}. 
\end{aligned}
\end{equation*}
\normalsize
By the Parseval identity, 
\footnotesize
\begin{equation*}
\begin{aligned}
\Big(\sum_{n=N_1}^{N_2}\int_{Y}
\Big|\int_{Y}\overline{\partial^{\alpha}g(x,x^{\prime})}\cdot \overline{v_{n}(x^{\prime};\kappa)}dx^{\prime} \Big|^2 dx \Big)^{1/2}
&\leq \Big(\int_{Y} \|\partial^{\alpha}g(x,\cdot)\|_{L^2(Y)}^2 dx \Big)^{1/2}
\leq \|g\|_{H^1(Y;L^{2}(Y))}.
\end{aligned}
\end{equation*}
\normalsize
Hence we conclude that
\begin{equation*}
\begin{aligned}
\Big|\Big\langle\sum_{n=N_1}^{N_2}\frac{\partial^{\alpha}v_{n}(x;\kappa)\overline{v_{n}(x^{\prime};\kappa)}}{z-\lambda_n(\kappa)},g(x,x^{\prime})  \Big\rangle \Big|
&\leq \Big(\sum_{n=N_1}^{N_2}\frac{1}{|z-\lambda_n(\kappa)|^2} \Big)^{1/2}\|g\|_{H^1(Y;L^{2}(Y))}.
\end{aligned}
\end{equation*}
Then \eqref{eq_floquet_series_convergence_proof_1} follows by Proposition \ref{prop_lambda_n_kappa_estimate}. 
\end{proof}

We finally consider the Floquet expansion of $\partial^{\alpha} \partial_z G_{\sharp} (x^{\prime},x;z,\kappa)$:
\begin{equation} \label{eq_Gsharp_z_floquet_series}
\partial^{\alpha} \partial_z G_{\sharp}(x,x^{\prime};z,\kappa)
=-\sum_{n\geq 1}\frac{\partial^{\alpha}v_{n}(x^{\prime};\kappa)\overline{v_{n}(x;\kappa)}}{(z-\lambda_n(\kappa))^2}. 
\end{equation}
We show that the series above converges in the space $L^{2}(Y;L^{p}(Y))$, i.e.,

\begin{proposition}\label{prop_Gsharp_z_floquet_series_convergence}
For $z\notin \mathbf{R}$, $|\alpha|\leq 1$ and $p\in [2,3)$, the Floquet series \eqref{eq_Gsharp_z_floquet_series} of $\partial^{\alpha} \partial_z G_{\sharp}(x^{\prime},x;z,\kappa)$ converges in $L^{2}(Y;L^{p}(Y))$.
\end{proposition}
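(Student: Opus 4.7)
The plan is to prove absolute convergence of the Floquet series in the Banach space $L^2(Y;L^p(Y))$ by bounding the norm of each summand
\begin{equation*}
K_n(x',x):=-\frac{\partial^{\alpha}v_{n}(x^{\prime};\kappa)\overline{v_{n}(x;\kappa)}}{(z-\lambda_n(\kappa))^2}.
\end{equation*}
Because $K_n$ factorizes as a tensor product of a function of $x'$ and a function of $x$, its $L^2(Y_x;L^p(Y_{x'}))$-norm factorizes as well. Using $\|v_n(\cdot;\kappa)\|_{L^2(Y)}=1$, a direct computation yields $\|K_n\|_{L^2(Y;L^p(Y))}=\|\partial^{\alpha}v_n(\cdot;\kappa)\|_{L^p(Y)}/|z-\lambda_n(\kappa)|^2$.

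Next I would apply Corollary \ref{corol_eigenfunction_w1p} (note that the range $p\in[2,3)$ is contained in $[2,2+2/(1-\nu))$ for any $\nu\in(0,1/2)$) to obtain $\|\partial^{\alpha}v_n(\cdot;\kappa)\|_{L^p(Y)}\le C\lambda_n(\kappa)^{2-3/p}$. Writing $\lambda_n=(\lambda_n-z)+z$ and using the elementary inequality $(a+b)^{\beta}\le C_\beta(a^{\beta}+b^{\beta})$ valid for $\beta=2-3/p\in[1/2,1)$, I would estimate
\begin{equation*}
\frac{\lambda_n(\kappa)^{2-3/p}}{|z-\lambda_n(\kappa)|^{2}}
\le \frac{C}{|z-\lambda_n(\kappa)|^{3/p}}+\frac{C|z|^{2-3/p}}{|z-\lambda_n(\kappa)|^{2}}.
\end{equation*}
Since $p<3$ is equivalent to $3/p>1$, Proposition \ref{prop_lambda_n_kappa_estimate} guarantees that both $\sum_n 1/|z-\lambda_n(\kappa)|^{3/p}$ and $\sum_n 1/|z-\lambda_n(\kappa)|^{2}$ are finite. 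Summing over $n$ then yields $\sum_n\|K_n\|_{L^2(Y;L^p(Y))}<\infty$, so the series $\sum_n K_n$ converges absolutely, and hence in norm, in $L^2(Y;L^p(Y))$.

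To identify the resulting limit with $\partial^{\alpha}\partial_z G_\sharp(x',x;z,\kappa)$, I would exploit the continuous embedding $L^2(Y;L^p(Y))\hookrightarrow L^1(Y\times Y)\hookrightarrow \mathcal{D}'(Y\times Y)$, so that the $L^2(Y;L^p(Y))$-limit must coincide with the distributional limit of the partial sums. The latter equals $\partial^{\alpha}\partial_z G_\sharp$ by termwise differentiation of the pointwise convergent Floquet expansion of $\partial_z G_\sharp$, whose convergence is ensured by the uniform boundedness recorded in Corollary \ref{corol_banach_valued_2}.

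The main delicacy is the balance between the polynomial growth $\lambda_n^{2-3/p}$ of $\|\partial^{\alpha}v_n\|_{L^p}$ supplied by Corollary \ref{corol_eigenfunction_w1p} and the decay $|z-\lambda_n|^{-2}$ coming from the double resolvent: the range $p<3$ is exactly the threshold that makes $\lambda_n^{-3/p}$ summable under the Weyl-type asymptotics of Proposition \ref{prop_lambda_n_kappa_estimate}, while $p\ge 3$ would break the summation. In contrast to the duality-based proof of Proposition \ref{prop_Gsharp_floquet_series_convergence}, here the $L^p$-norm on the fiber accommodates $\partial^{\alpha}v_n$ directly, so no Bessel-type orthogonality argument is needed.
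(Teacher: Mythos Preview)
Your proof is correct and uses the same two key inputs as the paper---Corollary \ref{corol_eigenfunction_w1p} for $\|\partial^{\alpha}v_n\|_{L^p}\le C\lambda_n^{2-3/p}$ and Proposition \ref{prop_lambda_n_kappa_estimate} for the summability of $|z-\lambda_n|^{-3/p}$---but the organization differs. The paper argues by duality: it pairs the tail $\sum_{n\ge N}K_n$ against an arbitrary $g\in L^2(Y;L^q(Y))$, applies H\"older on the fiber, and bounds the result by $\|g\|\sum_{n\ge N}|z-\lambda_n|^{-3/p}$. You instead exploit the tensor-product structure of each $K_n$ to compute $\|K_n\|_{L^2(Y;L^p(Y))}$ exactly, then sum the norms to get absolute convergence. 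Your route is more elementary here because the target space is a concrete Bochner space whose norm factorizes on rank-one kernels; the duality device of the paper is really only forced in the companion Proposition \ref{prop_Gsharp_floquet_series_convergence}, where the target $(H^1)^*(Y;L^2(Y))$ is defined by duality and no such direct norm computation is available. Your identification of the limit via the embedding into $\mathcal{D}'(Y\times Y)$ is also a clean addition that the paper leaves implicit.
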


\begin{proof}
By the duality $L^{2}(Y;L^{p}(Y))=\big(L^{2}(Y;L^{q}(Y)) \big)^{\prime}$ ($1/p+1/q=1$), it is sufficient to show 
\begin{equation} \label{eq_floquet_series_convergence_proof_2}
\sup_{\substack{g\in C_c^{\infty}(Y\times Y) \\ \|g\|_{L^{2}(Y;L^{q}(Y))}=1}}\Big|\Big\langle \sum_{n\geq N}\frac{\partial^{\alpha}v_{n}(x^{\prime};\kappa)\overline{v_{n}(x;\kappa)}}{(z-\lambda_n(\kappa))^2},g(x,x^{\prime})  \Big\rangle \Big|\to 0,
\end{equation}
as $N\to \infty$. Using the Hölder inequality,
\footnotesize
\begin{equation*}
\begin{aligned}
\Big|\Big\langle \sum_{n\geq N}\frac{\partial^{\alpha}v_{n}(x^{\prime};\kappa)\overline{v_{n}(x;\kappa)}}{(z-\lambda_n(\kappa))^2},g(x,x^{\prime})  \Big\rangle \Big|
&\leq \sum_{n\geq N}\frac{1}{|z-\lambda_n(\kappa)|^2}\int_{Y}|v_{n}(x;\kappa)|
\Big(\int_{Y}|g(x,x^{\prime})|\cdot |\partial^{\alpha}v_{n}(x^{\prime};\kappa)|dx^{\prime} \Big)dx \\
&\leq \sum_{n\geq N}\frac{1}{|z-\lambda_n(\kappa)|^2}\int_{Y}|v_{n}(x;\kappa)| \|g(x,\cdot)\|_{L^q(Y)} \|v_{n}(\cdot;\kappa)\|_{W^{1,p}(Y)}dx \\
&\leq \sum_{n\geq N}\frac{1}{|z-\lambda_n(\kappa)|^2}\|g\|_{L^{2}(Y;L^{q}(Y))} \|v_{n}(\cdot;\kappa)\|_{W^{1,p}(Y)}.
\end{aligned}
\end{equation*}
\normalsize
Since $p<3<2+\frac{2}{1-\nu}$ for any $\nu\in (0,1)$, using the estimate in Corollary \ref{corol_eigenfunction_w1p}, we obtain
\begin{equation*}
\begin{aligned}
\Big|\Big\langle \sum_{n\geq N}\frac{\partial^{\alpha}v_{n}(x^{\prime};\kappa)\overline{v_{n}(x;\kappa)}}{(z-\lambda_n(\kappa))^2},g(x,x^{\prime})  \Big\rangle \Big|
&\leq C\|g\|_{L^{2}(Y;L^{q}(Y))}\sum_{n\geq N}\frac{1}{|z-\lambda_n(\kappa)|^2}\lambda_{n}^{2-3/p}(\kappa) \\
&\leq C\|g\|_{L^{2}(Y;L^{q}(Y))}\sum_{n\geq N} \frac{1}{|z-\lambda_n(\kappa)|^{3/p}},
\end{aligned}
\end{equation*}
for any fixed $z\notin\mathbf{R}$. Hence \eqref{eq_floquet_series_convergence_proof_2} follows by Proposition \ref{prop_lambda_n_kappa_estimate}. 
\end{proof}

\section{Representation of bulk index using Green function}
In this section, we prove Theorem \ref{prop_Chern_number_expression} by deriving the representation formula of the gap Chern number using the Green function $G_{\sharp}$. For this purpose, we first present several useful identities. Recall that $\tilde{\mathcal{L}}(\kappa)=-(\nabla+i\kappa)\cdot A(\nabla+i\kappa)$ and $\mathcal{L}(\kappa)=e^{-i\kappa\cdot x}\tilde{\mathcal{L}}(\kappa)e^{i\kappa\cdot x}$,
\footnotesize
\begin{equation} \label{eq_partial_k_L}
\begin{aligned}
\big(\partial_{\kappa_n}\tilde{\mathcal{L}}(\kappa)u,v\big)
&=\Big(\big(-i e_n\cdot A(\nabla+i\kappa)-(\nabla+i\kappa)\cdot A(i e_n) \big)u,v\Big) \\
&=\Big(\big(-i e_n\cdot A\nabla-\nabla\cdot A(i e_n) \big)(e^{i\kappa\cdot x}u),e^{i\kappa\cdot x}v\Big)
=i\big(\mathcal{V}_n(e^{i\kappa\cdot x}u),e^{i\kappa\cdot x}v\big)
\end{aligned}
\end{equation}
\normalsize
for any $u,v\in H^1_{\kappa=0}(Y)$ and $n=1,2$. Since $\tilde{\mathcal{L}}(k)$ is self-adjoint on $L_{\kappa=0}^2(Y)$, we know $\partial_{\kappa_n}\tilde{\mathcal{L}}(k)$ is also self-adjoint. From \eqref{eq_partial_k_L}, we conclude that $\mathcal{V}_n$ is anti-self-adjoint, meaning:
\begin{equation} \label{eq_V_i_anti_selfadj}
\begin{aligned}
\big(\mathcal{V}_n u,v\big)
=-\big(u,\mathcal{V}_n v\big),
\end{aligned}
\end{equation}
for any $u,v\in H^1_{\kappa}(Y)$.
This observation will play an important role in the derivation of the representation formula.

Now we prove Theorem \ref{prop_Chern_number_expression} in four steps.

{\color{blue}Step 1.} We first prove the following expansion
\begin{equation} \label{eq_proof_chern_expression_2}
\begin{aligned}
&\frac{1}{2i}\sum_{n\in F}
\text{Im}\big(\partial_{\kappa_1}u_{n}(\cdot;\kappa),\partial_{\kappa_2}u_{n}(\cdot;\kappa)\big) \\
&=\sum_{j,k\in\{1,2\}}\tilde{\varepsilon}_{jk}\sum_{n\in F}\sum_{m\in E}
\frac{1}{(\lambda_{n}(\kappa)-\lambda_{m}(\kappa))^2}\big(\mathcal{V}_j v_{n}(\cdot;\kappa),v_{m}(\cdot;\kappa)\big)
\big(v_{m}(\cdot;\kappa),\mathcal{V}_k v_{n}(\cdot;\kappa)\big).
\end{aligned}
\end{equation}
Recall that $F=\{n:\,\lambda_n(\kappa)<\lambda_{\text{low}}\}$ denotes the index of filled bands and $E=\{n:\,\lambda_n(\kappa)>\lambda_{\text{upp}}\}=\mathbf{N}\backslash F$ denotes the unfilled bands. To begin, we note by the completeness of $\{u_n(x;\kappa)\}$ in $L^2(Y)$
\begin{equation*}
\begin{aligned}
\frac{1}{2i}\text{Im }\big(\partial_{\kappa_1}u_{n}(\cdot;\kappa),\partial_{\kappa_2}u_{n}(\cdot;\kappa)\big)
&=\sum_{j,k\in\{1,2\}}\tilde{\varepsilon}_{jk}\big(\partial_{\kappa_j}u_{n}(\cdot;\kappa),\partial_{\kappa_k}u_{n}(\cdot;\kappa)\big) \\
&=\sum_{j,k\in\{1,2\}}\tilde{\varepsilon}_{jk}\sum_{m\neq n}\big(\partial_{\kappa_j}u_{n}(\cdot;\kappa),u_{m}(\cdot;\kappa)\big)
\big(u_{m}(\cdot;\kappa),\partial_{\kappa_k}u_{n}(\cdot;\kappa)\big).
\end{aligned}
\end{equation*}
The $m=n$ term above vanishes because 
\footnotesize
\begin{equation*}
\begin{aligned}
\sum_{j,k\in\{1,2\}}\tilde{\varepsilon}_{jk}\big(\partial_{\kappa_j}u_{n}(\cdot;\kappa),u_{n}(\cdot;\kappa)\big)
\big(u_{n}(\cdot;\kappa),\partial_{\kappa_k}u_{n}(\cdot;\kappa)\big)
&=2i\text{Im }\big(\partial_{\kappa_1}u_{n}(\cdot;\kappa),u_{n}(\cdot;\kappa)\big)\big(u_{n}(\cdot;\kappa),\partial_{\kappa_2}u_{n}(\cdot;\kappa)\big)=0,
\end{aligned}
\end{equation*}
\normalsize
where we used the fact that $\text{Re }\big(\partial_{\kappa_j}u_{n}(\cdot;\kappa),u_{n}(\cdot;\kappa)\big)=0$, a consequence of the normalization condition $\|u_{n}(\cdot;\kappa)\|\equiv 1$. By differentiating the identity $\tilde{\mathcal{L}}(\kappa)u_{n}(x;\kappa)=\lambda_{n}(\kappa)u_{n}(x;\kappa)$ and taking inner product with $u_{m}(x;\kappa)$, we find
\begin{equation*}
\begin{aligned}
\big(\partial_{\kappa_j}u_{n}(\cdot;\kappa),u_{m}(\cdot;\kappa)\big)
&=\frac{1}{\lambda_{n}(\kappa)-\lambda_{m}(\kappa)}\big((\partial_{\kappa_j}\tilde{\mathcal{L}}(\kappa))u_{n}(\cdot;\kappa),u_{m}(\cdot;\kappa)\big) \\
&=\frac{i}{\lambda_{n}(\kappa)-\lambda_{m}(\kappa)}\big(\mathcal{V}_j v_{n}(\cdot;\kappa),v_{m}(\cdot;\kappa)\big),
\end{aligned}
\end{equation*}
where $v_{n}(x;\kappa)=e^{i\kappa\cdot x}u_{n}(x;\kappa)$, and \eqref{eq_partial_k_L} is applied. Hence
\begin{equation} \label{eq_proof_chern_expression_1}
\begin{aligned}
&\frac{1}{2i}\sum_{n\in F}
\text{Im}\big(\partial_{\kappa_1}u_{n}(\cdot;\kappa),\partial_{\kappa_2}u_{n}(\cdot;\kappa)\big) \\
&=\sum_{j,k\in\{1,2\}}\tilde{\varepsilon}_{jk}\sum_{n\in F}\sum_{m\neq n}
\frac{1}{(\lambda_{n}(\kappa)-\lambda_{m}(\kappa))^2}\big(\mathcal{V}_j v_{n}(\cdot;\kappa),v_{m}(\cdot;\kappa)\big)
\big(v_{m}(\cdot;\kappa),\mathcal{V}_k v_{n}(\cdot;\kappa)\big).
\end{aligned}
\end{equation}
Note that \eqref{eq_V_i_anti_selfadj} gives
\begin{equation*}
\begin{aligned}
&\sum_{j,k\in\{1,2\}}\tilde{\varepsilon}_{jk}\big(\mathcal{V}_j v_{n}(\cdot;\kappa),v_{m}(\cdot;\kappa)\big)
\big(v_{m}(\cdot;\kappa),\mathcal{V}_k v_{n}(\cdot;\kappa)\big) \\
&=\sum_{j,k\in\{1,2\}}\tilde{\varepsilon}_{jk}\big( v_{n}(\cdot;\kappa),\mathcal{V}_j v_{m}(\cdot;\kappa)\big)
\big(\mathcal{V}_k v_{m}(\cdot;\kappa),v_{n}(\cdot;\kappa)\big) \\
&=-\sum_{j,k\in\{1,2\}}\tilde{\varepsilon}_{jk}\big(\mathcal{V}_j v_{m}(\cdot;\kappa),v_{n}(\cdot;\kappa)\big)
\big(v_{n}(\cdot;\kappa),\mathcal{V}_kv_{m}(\cdot;\kappa)\big).
\end{aligned}
\end{equation*}
This implies the double sum within the filled bands equals zero, i.e.
\begin{equation*}
\sum_{j,k\in\{1,2\}}\tilde{\varepsilon}_{jk}\sum_{n\in F}\sum_{m\in F,\, m\neq n}
\frac{1}{(\lambda_{n}(\kappa)-\lambda_{m}(\kappa))^2}\big(\mathcal{V}_j v_{n}(\cdot;\kappa),v_{m}(\cdot;\kappa)\big)
\big(v_{m}(\cdot;\kappa),\mathcal{V}_kv_{n}(\cdot;\kappa)\big)=0.
\end{equation*}
This identity together with \eqref{eq_proof_chern_expression_1} leads to \eqref{eq_proof_chern_expression_2}.

{\color{blue}Step 2.}
In this step, we prove the following expansion based on \eqref{eq_proof_chern_expression_2}:
\begin{equation} \label{eq_proof_chern_expression_3}
\begin{aligned}
&\frac{1}{2i}\sum_{n\in F}
\text{Im}\big(\partial_{\kappa_1}u_{n}(\cdot;\kappa),\partial_{\kappa_2}u_{n}(\cdot;\kappa)\big) \\
&=\frac{1}{2\pi i}\sum_{j,k\in\{1,2\}}\tilde{\varepsilon}_{jk}\int_{\mathbf{C}}
dm\partial_{\overline{z}}\tilde{g}(z)
\sum_{n,m\geq 1}
\int_{Y}dx\frac{v_n(x;\kappa)\overline{(\mathcal{V}_j v_m)(x;\kappa)}}{z-\lambda_{n}(\kappa)}
\int_{Y}dx^{\prime}\frac{\overline{v_n(x^{\prime};\kappa)}(\mathcal{V}_k v_m)(x^{\prime};\kappa)}{(z-\lambda_{m}(\kappa))^2}.
\end{aligned}
\end{equation}
To do this, we introduce the following interband and intraband transition functions
\footnotesize
\begin{equation*}
\begin{aligned}
&M_{FE}(z;\kappa):=
\sum_{j,k\in\{1,2\}}\tilde{\varepsilon}_{jk}\sum_{n\in F,m\in E}\frac{1}{z-\lambda_{n}(\kappa)}\frac{1}{(z-\lambda_{m}(\kappa))^2}\big(\mathcal{V}_j v_{n}(\cdot;\kappa),v_{m}(\cdot;\kappa)\big)
\big(v_{m}(\cdot;\kappa),\mathcal{V}_k v_{n}(\cdot;\kappa)\big) ,\\
&M_{EF}(z;\kappa):=\sum_{j,k\in\{1,2\}}\tilde{\varepsilon}_{jk}
\sum_{n\in E,m\in F}\frac{1}{z-\lambda_{n}(\kappa)}\frac{1}{(z-\lambda_{m}(\kappa))^2}\big(\mathcal{V}_j v_{n}(\cdot;\kappa),v_{m}(\cdot;\kappa)\big)
\big(v_{m}(\cdot;\kappa),\mathcal{V}_k v_{n}(\cdot;\kappa)\big) ,\\
&M_{EE}(z;\kappa):=\sum_{j,k\in\{1,2\}}\tilde{\varepsilon}_{jk}
\sum_{n\in E,m\in E}\frac{1}{z-\lambda_{n}(\kappa)}\frac{1}{(z-\lambda_{m}(\kappa))^2}\big(\mathcal{V}_j v_{n}(\cdot;\kappa),v_{m}(\cdot;\kappa)\big)
\big(v_{m}(\cdot;\kappa),\mathcal{V}_k v_{n}(\cdot;\kappa)\big) ,\\
&M_{FF}(z;\kappa):=\sum_{j,k\in\{1,2\}}\tilde{\varepsilon}_{jk}
\sum_{n\in F,m\in F}\frac{1}{z-\lambda_{n}(\kappa)}\frac{1}{(z-\lambda_{m}(\kappa))^2}\big(\mathcal{V}_j v_{n}(\cdot;\kappa),v_{m}(\cdot;\kappa)\big)
\big(v_{m}(\cdot;\kappa),\mathcal{V}_k v_{n}(\cdot;\kappa)\big).
\end{aligned}
\end{equation*}
\normalsize
Physically, the in-gap energy excitation in photonic systems described by the gap Chern number in our case results from the interband transition of modes \cite{winn1999interband}. This is indeed the case as is shown in Lemma \ref{lem_four_mero_functions} at the end of this section. By Lemma \ref{lem_four_mero_functions}, 
\footnotesize
\begin{equation*}
\begin{aligned}
&\frac{1}{2i}\sum_{n\in F}
\text{Im}\big(\partial_{\kappa_1}u_{n}(\cdot;\kappa),\partial_{\kappa_2}u_{n}(\cdot;\kappa)\big) \\
&=\frac{1}{2\pi i}\int_{\mathbf{C}}
dm\partial_{\overline{z}}\tilde{g}(z)\big(M_{FE}(z;\kappa)+M_{EF}(z;\kappa)+M_{FF}(z;\kappa)+M_{EE}(z;\kappa) \big) \\
&=\frac{1}{2\pi i}\sum_{j,k\in\{1,2\}}\tilde{\varepsilon}_{jk}\int_{\mathbf{C}}
dm\partial_{\overline{z}}\tilde{g}(z)
\sum_{n,m\geq 1}\frac{1}{z-\lambda_{n}(\kappa)}\frac{1}{(z-\lambda_{m}(\kappa))^2}\big(\mathcal{V}_j v_{n}(\cdot;\kappa),v_{m}(\cdot;\kappa)\big)
\big(v_{m}(\cdot;\kappa),\mathcal{V}_k v_{n}(\cdot;\kappa)\big)\\
&=\frac{1}{2\pi i}\sum_{j,k\in\{1,2\}}\tilde{\varepsilon}_{jk}\int_{\mathbf{C}}
dm\partial_{\overline{z}}\tilde{g}(z)
\sum_{n,m\geq 1}\frac{1}{z-\lambda_{n}(\kappa)}\frac{1}{(z-\lambda_{m}(\kappa))^2}\big( v_{n}(\cdot;\kappa),\mathcal{V}_j v_{m}(\cdot;\kappa)\big)
\big(\mathcal{V}_k v_{m}(\cdot;\kappa),v_{n}(\cdot;\kappa)\big) \\
&=\frac{1}{2\pi i}\sum_{j,k\in\{1,2\}}\tilde{\varepsilon}_{jk}\int_{\mathbf{C}}
dm\partial_{\overline{z}}\tilde{g}(z)
\sum_{n,m\geq 1}
\int_{Y}dx\frac{v_n(x;\kappa)\overline{(\mathcal{V}_j v_m)(x;\kappa)}}{z-\lambda_{n}(\kappa)}
\int_{Y}dx^{\prime}\frac{\overline{v_n(x^{\prime};\kappa)}(\mathcal{V}_k v_m)(x^{\prime};\kappa)}{(z-\lambda_{m}(\kappa))^2}, 
\end{aligned}
\end{equation*}
\normalsize
where \eqref{eq_V_i_anti_selfadj} is applied to derive the third equality. This gives \eqref{eq_proof_chern_expression_3}.

{\color{blue}Step 3.} We interchange the integrals and summations in \eqref{eq_proof_chern_expression_3} to show that
\begin{equation} \label{eq_proof_chern_expression_3_7}
\begin{aligned}
&\frac{1}{2i}\sum_{n\in F}
\text{Im}\big(\partial_{\kappa_1}u_{n}(\cdot;\kappa),\partial_{\kappa_2}u_{n}(\cdot;\kappa)\big) \\
&=\frac{1}{2\pi i}\sum_{j,k\in\{1,2\}}\tilde{\varepsilon}_{jk}\int_{\mathbf{C}}
dm\partial_{\overline{z}}\tilde{g}(z)
\int_{Y}dx\int_{Y}dx^{\prime}(\mathcal{V}_j G_{\sharp})(x,x^\prime;z,\kappa)(\mathcal{V}_k \partial_z G_{\sharp})(x^\prime,x;z,\kappa).
\end{aligned}
\end{equation}
We proceed in two steps.

{\color{blue}Step 3.1} The first step is to interchange the spatial integral and $n$-summation of \eqref{eq_proof_chern_expression_3}. To begin, by a integration by part in $x$,  \eqref{eq_proof_chern_expression_3} leads to
\begin{equation} \label{eq_proof_chern_expression_7}
\begin{aligned}
&\frac{1}{2i}\sum_{n\in F}
\text{Im}\big(\partial_{\kappa_1}u_{n}(\cdot;\kappa),\partial_{\kappa_2}u_{n}(\cdot;\kappa)\big) \\
&=-\frac{1}{2\pi i}\sum_{j,k\in\{1,2\}}\tilde{\varepsilon}_{jk}\int_{\mathbf{C}}
dm\partial_{\overline{z}}\tilde{g}(z)
\sum_{n,m\geq 1}
\int_{Y\times Y}dxdx^{\prime}\frac{\mathcal{V}_j v_n(x;\kappa)\overline{v_n(x^{\prime};\kappa)}}{z-\lambda_{n}(\kappa)}\frac{(\mathcal{V}_k v_m)(x^{\prime};\kappa)\overline{v_m(x;\kappa)}}{(z-\lambda_{m}(\kappa))^2}.
\end{aligned}
\end{equation}
Note that for any fixed $m\geq 1$ and $z\notin\mathbf{R}$, $$\frac{(\mathcal{V}_k v_m)(x^{\prime};\kappa)\overline{v_m(x;\kappa)}}{(z-\lambda_{m}(\kappa))^2}\in H^1(Y,L^2(Y))
$$ by the fact that $v_m\in H^1(Y)$. On the other hand, $\sum_{n\geq 1}\frac{\mathcal{V}_j v_n(x;\kappa)\overline{v_n(x^{\prime};\kappa)}}{z-\lambda_{n}(\kappa)}$ converges in the dual space $(H^1)^{*}(Y,L^2(Y))$ by Proposition \ref{prop_Gsharp_floquet_series_convergence}. These imply
\footnotesize
\begin{equation*}
\begin{aligned}
&\Big|\int_{Y\times Y}dxdx^{\prime}\sum_{n\geq N}\frac{\mathcal{V}_j v_n(x;\kappa)\overline{v_n(x^{\prime};\kappa)}}{z-\lambda_{n}(\kappa)}\frac{(\mathcal{V}_k v_m)(x^{\prime};\kappa)\overline{v_m(x;\kappa)}}{(z-\lambda_{m}(\kappa))^2}\Big| \\
&\leq \big\|\sum_{n\geq N}\frac{\mathcal{V}_j v_n(x;\kappa)\overline{v_n(x^{\prime};\kappa)}}{z-\lambda_{n}(\kappa)}\big\|_{(H^1)^{*}(Y,L^2(Y))}
\big\|\frac{(\mathcal{V}_k v_m)(x^{\prime};\kappa)\overline{v_m(x;\kappa)}}{(z-\lambda_{m}(\kappa))^2}\big\|_{H^1(Y,L^2(Y))} \to 0
\end{aligned}
\end{equation*}
\normalsize
as $N\to\infty$. Hence we can interchange the spatial integral and $n$-summation of \eqref{eq_proof_chern_expression_7} to obtain
\begin{equation} \label{eq_proof_chern_expression_8}
\begin{aligned}
&\frac{1}{2i}\sum_{n\in F}
\text{Im}\big(\partial_{\kappa_1}u_{n}(\cdot;\kappa),\partial_{\kappa_2}u_{n}(\cdot;\kappa)\big) \\
&=-\frac{1}{2\pi i}\sum_{j,k\in\{1,2\}}\tilde{\varepsilon}_{jk}\int_{\mathbf{C}}
dm\partial_{\overline{z}}\tilde{g}(z)
\sum_{m\geq 1}
\int_{Y\times Y}dxdx^{\prime}\sum_{n\geq 1}\frac{\mathcal{V}_j v_n(x;\kappa)\overline{v_n(x^{\prime};\kappa)}}{z-\lambda_{n}(\kappa)}\frac{(\mathcal{V}_k v_m)(x^{\prime};\kappa)\overline{v_m(x;\kappa)}}{(z-\lambda_{m}(\kappa))^2} \\
&=-\frac{1}{2\pi i}\sum_{j,k\in\{1,2\}}\tilde{\varepsilon}_{jk}\int_{\mathbf{C}}
dm\partial_{\overline{z}}\tilde{g}(z)
\sum_{m\geq 1}
\int_{Y\times Y}dxdx^{\prime}\mathcal{V}_j G_{\sharp}(x,x^{\prime};z,\kappa)\frac{(\mathcal{V}_k v_m)(x^{\prime};\kappa)\overline{v_m(x;\kappa)}}{(z-\lambda_{m}(\kappa))^2}.
\end{aligned}
\end{equation}

{\color{blue}Step 3.2} We next interchange the spatial integral and $m$-summation of \eqref{eq_proof_chern_expression_8} to conclude the proof of \eqref{eq_proof_chern_expression_3_7}. Indeed, by Proposition \ref{prop_Gsharp_z_floquet_series_convergence}, there exists $p>2$ such that 
$$-\sum_{m\geq 1}\frac{(\mathcal{V}_k v_m)(x^{\prime};\kappa)\overline{v_m(x;\kappa)}}{(z-\lambda_{m}(\kappa))^2}\longrightarrow (\mathcal{V}_k \partial_z G_{\sharp})(x^\prime,x;z,\kappa) \quad\text{in $L^2(Y,L^p(Y))$}.
$$
In addition, $(\mathcal{V}_j G_{\sharp})(x,x^\prime;z,\kappa)\in L^2(Y,L^q(Y))$ for $1/q+1/p=1$ by Corollary \ref{corol_banach_valued_Gsharp_1}. By a similar argument as in Step 3.1, we have
\begin{equation} \label{eq_proof_chern_expression_9}
\begin{aligned}
&\frac{1}{2i}\sum_{n\in F}
\text{Im}\big(\partial_{\kappa_1}u_{n}(\cdot;\kappa),\partial_{\kappa_2}u_{n}(\cdot;\kappa)\big) \\
&=-\frac{1}{2\pi i}\sum_{j,k\in\{1,2\}}\tilde{\varepsilon}_{jk}\int_{\mathbf{C}}
dm\partial_{\overline{z}}\tilde{g}(z)
\int_{Y\times Y}dxdx^{\prime}\mathcal{V}_j G_{\sharp}(x,x^{\prime};z,\kappa)\sum_{m\geq 1}\frac{(\mathcal{V}_k v_m)(x^{\prime};\kappa)\overline{v_m(x;\kappa)}}{(z-\lambda_{m}(\kappa))^2} \\
&=\frac{1}{2\pi i}\sum_{j,k\in\{1,2\}}\tilde{\varepsilon}_{jk}\int_{\mathbf{C}}
dm\partial_{\overline{z}}\tilde{g}(z)
\int_{Y}dx\int_{Y}dx^{\prime}(\mathcal{V}_j G_{\sharp})(x,x^\prime;z,\kappa)(\mathcal{V}_k \partial_z G_{\sharp})(x^\prime,x;z,\kappa). 
\end{aligned}
\end{equation}

{\color{blue}Step 4.} Finally, we use \eqref{eq_proof_chern_expression_3_7} to calculate the gap Chern number to conclude the proof of the theorem. By applying the Floquet transform in $x^{\prime}$, we obtain 
{\footnotesize
\begin{equation*} 
\begin{aligned}
\mathcal{C}_{\Delta}
&=\frac{-1}{4\pi}\int_{\mathbf{T}^2}d\kappa\sum_{n\in F}
\text{Im}\big(\partial_{\kappa_1}u_{n}(\cdot;\kappa),\partial_{\kappa_2}u_{n}(\cdot;\kappa)\big) \\
&=-\frac{1}{4\pi^2}\sum_{i,j\in\{1,2\}}\tilde{\varepsilon}_{ij}\int_{\mathbf{T}^2}d\kappa\int_{\mathbf{C}}
dm\partial_{\overline{z}}\tilde{g}(z)
\int_{Y}dx\int_{Y}dx^{\prime}(\mathcal{V}_i G_{\sharp})(x,x^\prime;z,\kappa)(\mathcal{V}_j \partial_z G_{\sharp})(x^\prime,x;z,\kappa).
\end{aligned}
\end{equation*}
}
By interchanging the integral in $dz$  and $d\kappa$, which is justified by the Fubini theorem and the uniform boundedness of the integral $$\partial_{\overline{z}}\tilde{g}(z)
\int_{Y\times Y}dxdx^{\prime}\big|(\mathcal{V}_i G_{\sharp})(x,x^\prime;z,\kappa)(\mathcal{V}_j \partial_z G_{\sharp})(x^\prime,x;z,\kappa)\big|$$ as shown in 
Proposition \ref{prop_interchange_dkappa_dz} in Subsection \ref{sec-42}, we further get 
\begin{equation*}
\begin{aligned}
\mathcal{C}_{\Delta}=-\frac{1}{4\pi^2}\sum_{i,j\in\{1,2\}}\tilde{\varepsilon}_{ij}\int_{\mathbf{C}}
dm\partial_{\overline{z}}\tilde{g}(z)\int_{\mathbf{T}^2}d\kappa
\int_{Y}dx\int_{Y}dx^{\prime}(\mathcal{V}_i G_{\sharp})(x,x^\prime;z,\kappa)(\mathcal{V}_j \partial_z G_{\sharp})(x^\prime,x;z,\kappa) .
\end{aligned}
\end{equation*}
The Parseval identity for Floquet transform in $x^{\prime}$ variable ($\int_{\mathbf{R}^2}u(x^{\prime})\overline{v(x^{\prime})}dx^{\prime}=\int_{\mathbf{T}^2}d\kappa\int_{Y}u(x^{\prime};\kappa)\overline{v(x^{\prime};\kappa)}dx^{\prime}$) then yields
\begin{equation} \label{eq_proof_chern_expression_5}
\begin{aligned}
\mathcal{C}_{\Delta}=-\sum_{i,j\in\{1,2\}}\tilde{\varepsilon}_{ij}\int_{\mathbf{C}}
dm\partial_{\overline{z}}\tilde{g}(z)
\int_{Y\times \mathbf{R}^2}dxdx^{\prime}
(\mathcal{V}_i G_{\sharp})(x,x^\prime;z)(\mathcal{V}_j \partial_{z}G_{\sharp})(x^\prime,x;z).
\end{aligned}
\end{equation}
Since the structure is invariant under $\mathbf{Z}^2-$translation, 
\begin{equation*}
G_{\sharp}(x+e,x^\prime+e;z)=G_{\sharp}(x,x^\prime;z),\quad \forall e\in\mathbf{Z}^2.
\end{equation*}
This implies that the $dx-$integral in \eqref{eq_proof_chern_expression_5} can be performed over any cell $Y_{e}:=Y+e$ without changing the result. Hence we can replace the integral in $Y$ by the average over $\Omega_L$:
\begin{equation*} \label{eq_proof_chern_expression_6}
\begin{aligned}
\mathcal{C}_{\Delta}
&=-\frac{1}{\pi}\sum_{i,j\in\{1,2\}}\tilde{\varepsilon}_{ij}\lim_{L\to \infty}\frac{1}{|\Omega_L|}\int_{\mathbf{C}}
dm\partial_{\overline{z}}\tilde{g}(z)
\int_{\Omega_L\times \mathbf{R}^2}dxdx^{\prime}
(\mathcal{V}_i G_{\sharp})(x,x^\prime;z)(\mathcal{V}_j\partial_{z}G_{\sharp})(x^\prime,x;z).
\end{aligned}
\end{equation*}
Using Proposition \ref{prop_interchange_dkappa_dz} again, the dominated convergence theorem yields
\begin{equation*}
\begin{aligned}
\mathcal{C}_{\Delta}
&=-\frac{1}{\pi}\sum_{i,j\in\{1,2\}}\tilde{\varepsilon}_{ij}\lim_{L\to \infty}\frac{1}{|\Omega_L|}\int_{\mathbf{C}}
dm\partial_{\overline{z}}\tilde{g}(z)
\int_{\Omega_L\times \Omega_L}dxdx^{\prime}
(\mathcal{V}_i G_{\sharp})(x,x^\prime;z)(\mathcal{V}_j \partial_{z}G_{\sharp})(x^\prime,x;z).
\end{aligned}
\end{equation*}
This establishes \eqref{eq_Chern_number_expression}.

\subsection{Band transition functions}
In this section, we present the basic properties of the interband and intraband transition functions introduced in Step 2 in the proof of Theorem \ref{prop_Chern_number_expression}.

\begin{lemma} \label{lem_four_mero_functions}
$M_{FE}(z;\kappa),M_{EF}(z;\kappa),M_{EE}(z;\kappa),M_{FF}(z;\kappa)$ are well-defined and meromorphic. Moreover,
\begin{equation} \label{eq_four_mero_functions_1}
\begin{aligned}
&\frac{1}{\pi i}\int_{\mathbf{C}}dm\partial_{\overline{z}}\tilde{g}(z)M_{FE}(z;\kappa)
=\frac{1}{\pi i}\int_{\mathbf{C}}dm\partial_{\overline{z}}\tilde{g}(z)M_{EF}(z;\kappa) \\
&=\sum_{j,k\in\{1,2\}}\tilde{\varepsilon}_{jk}\sum_{n\in F}\sum_{m\in E}
\frac{1}{(\lambda_{n}(\kappa)-\lambda_{m}(\kappa))^2}\big(\mathcal{V}_j v_{n}(\cdot;\kappa),v_{m}(\cdot;\kappa)\big)
\big(v_{m}(\cdot;\kappa),\mathcal{V}_k v_{n}(\cdot;\kappa)\big),
\end{aligned}
\end{equation}
and
\begin{equation} \label{eq_four_mero_functions_2}
\frac{1}{\pi i}\int_{\mathbf{C}}dm\partial_{\overline{z}}\tilde{g}(z)M_{FF}(z;\kappa)
=\frac{1}{\pi i}\int_{\mathbf{C}}dm\partial_{\overline{z}}\tilde{g}(z)M_{EE}(z;\kappa)
=0.
\end{equation}
\end{lemma}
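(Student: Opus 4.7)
The plan is to handle all four meromorphic functions by the same mechanism: partial fractions in $z$ followed by the contour-residue identity in Corollary \ref{corol_residue_formula_for_g}, which retains only the simple-pole residues at points strictly below $\lambda_{\text{low}}$. An analogous observation shows that any pole lying above $\lambda_{\text{upp}}$ also contributes zero, since $g$ together with all of its derivatives vanishes there. The key algebraic input for matching $M_{EF}$ to $M_{FE}$ is the anti-self-adjointness of $\mathcal{V}_j$ in \eqref{eq_V_i_anti_selfadj}, precisely as already exploited in Step 1 of the proof of Theorem \ref{prop_Chern_number_expression}.

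For well-definedness, note that $F$ is finite, so in $M_{FE}$, $M_{EF}$ and $M_{FF}$ only one of the two summations is genuinely infinite. For that summation, the Parseval identity $\sum_{m\ge 1}|(\mathcal{V}_j v_n(\cdot;\kappa),v_m(\cdot;\kappa))|^2 = \|\mathcal{V}_j v_n(\cdot;\kappa)\|_{L^2(Y)}^2 \le C\lambda_n(\kappa)$, combined with Cauchy-Schwarz and Proposition \ref{prop_lambda_n_kappa_estimate} (with exponent $s>1$), gives absolute convergence that is locally uniform on $\mathbf{C}\setminus \{\lambda_n(\kappa)\}_{n\ge 1}$. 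For $M_{EE}$, both indices run over the infinite set $E$ and one must first use the commutator identity $(\lambda_m-\lambda_n)(\mathcal{V}_j v_n,v_m) = (v_n,[\mathcal{L},\mathcal{V}_j]v_m)$, obtained by commuting $\mathcal{L}$ across the inner product via self-adjointness, to trade a factor $|\lambda_n-\lambda_m|$ for the norm of $[\mathcal{L},\mathcal{V}_j]v_m$ (bounded in terms of a power of $\lambda_m$ using the eigenvalue equation for $\mathcal{L}^2$ and the $W^{1,p}$ estimate of Corollary \ref{corol_eigenfunction_w1p}), before invoking Proposition \ref{prop_lambda_n_kappa_estimate} in both indices. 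In each case the resulting function is manifestly meromorphic with poles contained in $\{\lambda_n(\kappa)\}_{n\ge 1}$.

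The integrals are then computed via the partial fraction decomposition
\[
\frac{1}{(z-\lambda_n)(z-\lambda_m)^2} = \frac{1}{(\lambda_n-\lambda_m)^2}\frac{1}{z-\lambda_n} - \frac{1}{(\lambda_n-\lambda_m)^2}\frac{1}{z-\lambda_m} - \frac{1}{\lambda_n-\lambda_m}\frac{1}{(z-\lambda_m)^2}
\]
when $n\ne m$, together with a triple pole $1/(z-\lambda_n)^3$ when $n=m$. For $M_{FE}$, only the simple pole at $\lambda_n\in F$ survives, with residue $1/(\lambda_n-\lambda_m)^2$, producing \eqref{eq_four_mero_functions_1} directly. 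For $M_{EF}$, only the simple pole at $\lambda_m\in F$ survives, with coefficient $-1/(\lambda_n-\lambda_m)^2$; after relabelling $n\leftrightarrow m$, the identity $\sum_{j,k}\tilde{\varepsilon}_{jk}(\mathcal{V}_j v_n,v_m)(v_m,\mathcal{V}_k v_n) = -\sum_{j,k}\tilde{\varepsilon}_{jk}(\mathcal{V}_j v_m,v_n)(v_n,\mathcal{V}_k v_m)$ derived in Step 1 of the proof of Theorem \ref{prop_Chern_number_expression} absorbs the minus sign and reproduces the same right-hand side. For $M_{FF}$, the two surviving simple-pole residues at $\lambda_n$ and $\lambda_m$ are equal and opposite and cancel termwise when $n\ne m$, while the triple pole at $n=m$ annihilates by Corollary \ref{corol_residue_formula_for_g}; hence \eqref{eq_four_mero_functions_2} for $M_{FF}$. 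For $M_{EE}$ all poles sit above $\lambda_{\text{upp}}$, so every contribution is zero.

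The main obstacle is the absolute convergence of $M_{EE}$: the naive Cauchy-Schwarz bound produces a series like $\sum_n \lambda_n(\kappa)/|z-\lambda_n(\kappa)|$, which diverges, so one must extract the extra decay in $n$ from the commutator identity above in order to bring Proposition \ref{prop_lambda_n_kappa_estimate} to bear on both summations. Once this convergence is secured, the remaining steps are algebraic bookkeeping of pole positions and residue contributions.
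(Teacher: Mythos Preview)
Your computation of the four integrals via partial fractions and Corollary \ref{corol_residue_formula_for_g} is correct and matches the paper's Step 2 exactly, including the use of the anti-self-adjointness identity to reconcile $M_{EF}$ with $M_{FE}$.

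The divergence is in the convergence argument for $M_{EE}$. The paper does \emph{not} use the commutator $[\mathcal{L},\mathcal{V}_j]$. Instead, after reducing by Cauchy--Schwarz to $\sum_{n,m}\frac{|(\mathcal{V}_i v_n,v_m)|^2}{|z-\lambda_n|\,|z-\lambda_m|^2}$, it fixes $m$, applies Cauchy--Schwarz in $n$ \emph{twice} (splitting $\frac{1}{\mu+\lambda_n}$ into $1^{1/2}\cdot\frac{1}{(\mu+\lambda_n)^{1/2}}$ and then again), and uses Parseval $\sum_n|(v_n,\mathcal{V}_i v_m)|^2=\|\mathcal{V}_i v_m\|^2$ at each stage. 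This produces the bound $\sum_m \frac{\|\mathcal{V}_i v_m\|^{3/2}}{(\mu+\lambda_m)^2}\bigl(\sum_n\frac{\|\mathcal{V}_i v_n\|^2}{(\mu+\lambda_n)^4}\bigr)^{1/4}$, which with $\|\mathcal{V}_i v_m\|\lesssim\lambda_m^{1/2}$ becomes $\sum_m(\mu+\lambda_m)^{-5/4}<\infty$ by Proposition \ref{prop_lambda_n_kappa_estimate}. The entire argument uses only the $H^1$ bound on eigenfunctions.

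Your commutator route is plausible in spirit but has a gap as written. The operator $[\mathcal{L},\mathcal{V}_j]$ is second order, so bounding $\|[\mathcal{L},\mathcal{V}_j]v_m\|_{L^2}$ requires $\|v_m\|_{H^2}\lesssim\lambda_m$ from elliptic regularity, not the $W^{1,p}$ estimate of Corollary \ref{corol_eigenfunction_w1p}. More seriously, even granting the weighted Parseval bound $\sum_n|\lambda_n-\lambda_m|^2|(\mathcal{V}_j v_n,v_m)|^2\lesssim\lambda_m^2$, it is not immediate how to combine this with the unweighted Parseval bound to control $\sum_n\frac{|(\mathcal{V}_j v_n,v_m)|^2}{|z-\lambda_n|}$ uniformly in $m$; a naive splitting into $\lambda_n\gtrless\lambda_m$ still leaves a dangerous tail. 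The paper's iterated Cauchy--Schwarz sidesteps this entirely and is both shorter and requires less regularity.
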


\begin{proof}
{\color{blue}Step 1.} We show the unfilled-unfilled transition function $M_{EE}(z;\kappa)$ is a well-defined meromorphic function. This function involves two infinite sums, requiring careful handling. The proofs for the other transition functions follow a similar approach.

To establish this, it is sufficient to show that the double sum defining $M_{EE}(z; \kappa)$ converges absolutely and uniformly on any compact region $K$ in the complex plane that excludes the eigenvalues $\lambda_n(\kappa)$. Using the Cauchy-Schwarz inequality, the analysis reduces to proving the uniform convergence of
\begin{equation} \label{eq_mero_1}
    \sum_{n,m\geq 1}\frac{1}{|z-\lambda_{n}(\kappa)|}\frac{1}{|z-\lambda_{m}(\kappa)|^2}|\big(\mathcal{V}_i v_{n}(\cdot;\kappa),v_{m}(\cdot;\kappa)\big)|^2
    \quad i=1,2.
\end{equation}
Since $z$ is fixed within the compact region $K$ and $\lambda_n(\kappa) \to \infty$ as $n \to \infty$, we can establish the following bound for any $\mu > 0$:
\begin{equation} \label{eq_mero_2}
\begin{aligned}
&\sum_{n,m\geq 1}\frac{1}{|z-\lambda_{n}(\kappa)|}\frac{1}{|z-\lambda_{m}(\kappa)|^2}|\big(\mathcal{V}_i v_{n}(\cdot;\kappa),v_{m}(\cdot;\kappa)\big)|^2 \\
&\leq C^2_{\mu}\sum_{n,m\geq 1}\frac{1}{\mu+\lambda_{n}(\kappa)}\frac{1}{(\mu+\lambda_{m}(\kappa))^2}|\big(\mathcal{V}_i v_{n}(\cdot;\kappa),v_{m}(\cdot;\kappa)\big)|^2 ,
\end{aligned}
\end{equation}
where $C_{\mu}=\max\{\sup_{z\in K,\, n\geq 1}\frac{|\lambda_n(\kappa)-z|}{\lambda_n(\kappa)+\mu},1 \}<\infty$ depends only on $\mu$ and the range of $z$. By repeatedly applying the Cauchy-Schwarz inequality, we find that 
\footnotesize
\begin{equation*}
\begin{aligned}
&\sum_{n,m\geq 1}\frac{1}{|z-\lambda_{n}(\kappa)|}\frac{1}{|z-\lambda_{m}(\kappa)|^2}|\big(\mathcal{V}_i v_{n}(\cdot;\kappa),v_{m}(\cdot;\kappa)\big)|^2 \\
&\leq C_{\mu}^2\sum_{m\geq 1}\frac{1}{(\mu+\lambda_{m}(\kappa))^2}
\Big(\sum_{n\geq 1}\big|( v_{n}(\cdot;\kappa),\mathcal{V}_i v_{m}(\cdot;\kappa)\big)|^2\Big)^{\frac{1}{2}}
\Big(\sum_{n\geq 1}\frac{\big|(v_{n}(\cdot;\kappa),\mathcal{V}_i v_{m}(\cdot;\kappa)\big)|^2}{(\mu+\lambda_{n}(\kappa))^2}\Big)^{\frac{1}{2}} \\
&\leq C_{\mu}^2\sum_{m\geq 1}\frac{1}{(\mu+\lambda_{m}(\kappa))^2}
\Big(\sum_{n\geq 1}\big|( v_{n}(\cdot;\kappa),\mathcal{V}_i v_{m}(\cdot;\kappa)\big)|^2\Big)^{\frac{3}{4}}
\Big(\sum_{n\geq 1}\frac{\big|(\mathcal{V}_i v_{n}(\cdot;\kappa),v_{m}(\cdot;\kappa)\big)|^2}{(\mu+\lambda_{n}(\kappa))^4}\Big)^{\frac{1}{4}} \\
&\leq C_{\mu}^2\sum_{m\geq 1}\frac{\|\mathcal{V}_i v_{m}(\cdot;\kappa)\|^{\frac{3}{2}}_{L^2(Y)}}{(\mu+\lambda_{m}(\kappa))^2}
\Big(\sum_{n\geq 1}\frac{\|\mathcal{V}_i v_{n}(\cdot;\kappa)\|^2_{L^2(Y)}}{(\mu+\lambda_{n}(\kappa))^4}\Big)^{\frac{1}{4}},
\end{aligned}
\end{equation*}
\normalsize
where the Parseval identity $\sum_{n\geq 1}\big|( v_{n}(\cdot;\kappa),\mathcal{V}_i v_{m}(\cdot;\kappa)\big)|^2=\|\mathcal{V}_i v_{m}(\cdot;\kappa)\|^2$ is used in the last inequality above. Note that $\|\mathcal{V}_i v_m(x;\kappa)\|_{L^2(Y)}\lesssim \| v_m(x;\kappa)\|_{H^1(Y)}=\mathcal{O}(\lambda_{m}^{\frac{1}{2}}(\kappa))$ by Corollary \ref{corol_eigenfunction_w1p}. Hence
\footnotesize
\begin{equation*}
\begin{aligned}
\sum_{n,m\geq 1}\frac{1}{|z-\lambda_{n}(\kappa)|}\frac{1}{|z-\lambda_{m}(\kappa)|^2}|\big(\mathcal{V}_i v_{n}(\cdot;\kappa),v_{m}(\cdot;\kappa)\big)|^2 
\lesssim C_{\mu}^2\sum_{m\geq 1}\frac{1}{(\mu+\lambda_{m}(\kappa))^{\frac{5}{4}}}
\Big(\sum_{n\geq 1}\frac{1}{(\mu+\lambda_{n}(\kappa))^3}\Big)^{\frac{1}{4}}. 
\end{aligned}
\end{equation*}
\normalsize
The right side above is finite by Proposition \ref{prop_lambda_n_kappa_estimate}. This concludes the uniform convergence of \eqref{eq_mero_1}.

{\color{blue}Step 2.} \eqref{eq_four_mero_functions_1} and \eqref{eq_four_mero_functions_2} are derived by identifying the poles of the meromorphic transition functions and applying Corollary \ref{corol_residue_formula_for_g}. Consider, for example, the filled-unfilled transition function $M_{FE}(z;\kappa)$. Within the support of $\tilde{g}$, $M_{FE}(z;\kappa)$ has simple poles at $\lambda_{n}(\kappa)$ ($n\in F$) with coefficient
\begin{equation*}
\sum_{i,j\in\{1,2\}}\tilde{\varepsilon}_{ij}\sum_{m\in E}
\frac{1}{(\lambda_{n}(\kappa)-\lambda_{m}(\kappa))^2}\big(\mathcal{V}_i v_{n}(\cdot;\kappa),v_{m}(\cdot;\kappa)\big)
\big(v_{m}(\cdot;\kappa),\mathcal{V}_j v_{n}(\cdot;\kappa)\big)
\end{equation*}
and no other poles. Combining this with Corollary \ref{corol_residue_formula_for_g} establishes \eqref{eq_four_mero_functions_1} for $M_{FE}$. The remaining three identities are derived in a similar manner by identifying their poles and using the following identity:
\begin{equation*}
\frac{1}{z-\lambda_n}\frac{1}{(z-\lambda_m)^2}=\frac{1}{(z-\lambda_m)^2}\frac{1}{\lambda_m-\lambda_n}+\frac{1}{(\lambda_n-\lambda_m)^2}\Big[\frac{1}{z-\lambda_n}-\frac{1}{z-\lambda_m}\Big]\quad \text{for $\lambda_n\neq\lambda_m$}. 
\end{equation*}
\end{proof}

\subsection{Uniform boundedness of integrals involving Green functions} \label{sec-42}
In this section, we show the uniform boundedness of the integrals involving Green functions in Step 4 in the proof of 
Theorem \ref{prop_Chern_number_expression}.

\begin{proposition} \label{prop_interchange_dkappa_dz}
For $i,j\in\{1,2\}$, the following integral is uniformly bounded for $z\in\text{supp }\partial_{\overline{z}}\tilde{g}$ and $\kappa\in\mathbf{T}^2$
\begin{equation*}
\partial_{\overline{z}}\tilde{g}(z)
\int_{Y\times Y}dxdx^{\prime}\big|(\mathcal{V}_i G_{\sharp})(x,x^\prime;z,\kappa)(\mathcal{V}_j \partial_z G_{\sharp})(x^\prime,x;z,\kappa)\big|. 
\end{equation*}
On the other hand, the following integral is uniformly bounded and converges to zero as $L\to\infty$ for $z\in\text{supp }\partial_{\overline{z}}\tilde{g}\,\backslash \mathbf{R}$
\begin{equation*}
\frac{1}{|\Omega_L|}\cdot \partial_{\overline{z}}\tilde{g}(z)\int_{\Omega_L\times \mathbf{R}^2\backslash \Omega_L}dxdx^{\prime}\big|
(\mathcal{V}_iG_{\sharp})(x,x^\prime;z)(\mathcal{V}_j\partial_{z}G_{\sharp})(x^\prime,x;z)\big|.
\end{equation*}
\end{proposition}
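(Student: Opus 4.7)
The plan is to treat the two statements separately. Both rely on the local and global Green-function estimates from Corollaries \ref{corol_banach_valued_Gsharp_1}, \ref{corol_banach_valued_2} and \ref{corol_estimates_Vi_G}, combined with the defining property $|\partial_{\overline z}\tilde g(z)|\leq C_N|\mathrm{Im}\,z|^N$ for every $N$, which absorbs any polynomial blow-up of the form $|\mathrm{Im}\,z|^{-q}$ coming from those estimates.

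For the first inequality, I would apply H\"older directly on $Y\times Y$. By Corollary \ref{corol_banach_valued_2}, $(\mathcal V_j\partial_zG_\sharp)(\cdot,\cdot;z,\kappa)$ lies in $L^\infty(Y\times Y)$ with bound $C|\mathrm{Im}\,z|^{-q}$ uniform in $\kappa\in\mathbf T^2$. By Corollary \ref{corol_banach_valued_Gsharp_1}, $(\mathcal V_iG_\sharp)(\cdot,\cdot;z,\kappa)\in L^\infty(Y;L^1(Y))\subset L^1(Y\times Y)$ with a bound of the same type. The product is therefore in $L^1(Y\times Y)$ with norm bounded by $C|\mathrm{Im}\,z|^{-q'}$, and multiplying by $\partial_{\overline z}\tilde g(z)$ kills the polynomial blow-up, yielding uniform boundedness in $(z,\kappa)\in\mathrm{supp}\,\partial_{\overline z}\tilde g\times\mathbf T^2$.

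For the second inequality, I would split $\Omega_L=U_r\cup(\Omega_L\setminus U_r)$ with $U_r:=\{x\in\Omega_L:\mathrm{dist}(x,\partial\Omega_L)<r\}$, where $r=r(L)$ is to be chosen. Smoothness of $\partial\Omega$ and the scaling $\Omega_L=L\cdot\Omega$ give $|U_r|\leq CLr$ for $1\leq r\leq L$. An important preliminary step is to establish, via the composition identity
\[
(\mathcal V_j\partial_zG_\sharp)(x',x;z)=\int_{\mathbf R^2}(\mathcal V_jG_\sharp)(x',y;z)\,G_\sharp(y,x;z)\,dy,
\]
both uniform boundedness of $\mathcal V_j\partial_zG_\sharp$ on $\mathbf R^2\times\mathbf R^2$ and exponential decay in $|x-x'|$: these follow by splitting the $y$-integral at $|y-x'|=|x-x'|/2$, so that one factor enjoys decay $e^{-c|\mathrm{Im}\,z|^p|x-x'|/2}$ and the other is integrated against its locally-$L^1$ singularity ($\log$ or $1/|\cdot|$) with exponential tail.

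For $x\in\Omega_L\setminus U_r$ one has $|x-x'|\geq r$ for all $x'\notin\Omega_L$, so combining the two exponential decays and integrating in $x'$ yields an interior contribution $\leq CL^2|\mathrm{Im}\,z|^{-q''}e^{-c|\mathrm{Im}\,z|^pr}$. For $x\in U_r$, the inner $x'$-integral is uniformly bounded in $x$ by $C|\mathrm{Im}\,z|^{-q''}$ (using the uniform bound on $\mathcal V_j\partial_zG_\sharp$ together with the local-plus-tail integrability of $\mathcal V_iG_\sharp$ in $x'$), giving a boundary contribution $\leq CLr|\mathrm{Im}\,z|^{-q''}$. Dividing by $|\Omega_L|\sim L^2$, multiplying by $\partial_{\overline z}\tilde g(z)$, and choosing $r=L^{1/2}$ delivers $O(L^{-1/2})+O(L^2 e^{-c\sqrt L|\mathrm{Im}\,z|^p})\to 0$ for each $z\in\mathrm{supp}\,\partial_{\overline z}\tilde g\setminus\mathbf R$, together with uniform boundedness in $L$. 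The main obstacle I anticipate is the preliminary decay estimate for $\mathcal V_j\partial_zG_\sharp$ on the whole space, since the composition integral mixes two kernels with different local singularities; one has to verify that the dichotomy produces exponential decay with only polynomial loss in $|\mathrm{Im}\,z|^{-1}$, so that the loss remains absorbed by $\partial_{\overline z}\tilde g(z)$.
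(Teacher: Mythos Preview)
Your proposal is correct and follows essentially the same route the paper indicates (the paper omits the proof and refers to the method of Proposition~\ref{prop_trace_approximation}). For the first part, your H\"older argument via Corollaries~\ref{corol_banach_valued_Gsharp_1} and~\ref{corol_banach_valued_2} is exactly what Proposition~\ref{prop_G_sharp_kappa_local_estimate} delivers when unpacked. For the second part, your boundary-layer decomposition of $\Omega_L$ at scale $r=\sqrt{L}$, together with the exponential decay of $\mathcal V_j\partial_zG_\sharp$ obtained from the composition formula, mirrors the proof of Proposition~\ref{prop_trace_approximation4}; the only cosmetic difference is that the paper splits the \emph{exterior} variable via $N(\Omega_L,\sqrt{L})$ whereas you split the \emph{interior} variable via $U_{\sqrt L}$, but the resulting estimates are the same. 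The anticipated obstacle---polynomial-only loss in $|\mathrm{Im}\,z|^{-1}$ for the decay of $\mathcal V_j\partial_zG_\sharp$---is not a real difficulty: your dichotomy at $|y-x'|=|x-x'|/2$ cleanly gives the needed decay once $|x-x'|\geq 1$, and uniform boundedness elsewhere follows from the same splitting as in \eqref{eq_trace_approx_proof_17}.
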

We omit the proof of Proposition \ref{prop_interchange_dkappa_dz}. These estimates are justified by a similar calculation as in the proof of Proposition \ref{prop_trace_approximation} using Proposition \ref{prop_G_sharp_kappa_local_estimate}.

\section{Representation of edge index using Green function}
In this section, we prove Theorem \ref{prop_edge_index_expression} by deriving the representation of edge index using the Green function $G_L(x, x';z)$. 

\subsection{Preliminaries}
We start by preparing two preliminary results. The first is on the following trace-class estimate.
\begin{proposition}\label{prop_uniform_trace_interchange_z_trace}
For $z\in\text{supp }\tilde{g}\backslash \mathbf{R}$, $(\mathcal{V}_2x_1-\mathcal{V}_1x_2)\partial_{z}R(z,\mathcal{L}_{\Omega_L})\in S_1(L^2(\Omega_L))$. Moreover,
\begin{equation} \label{eq_uniform_trace_interchange_z_trace}
\int_{\mathbf{C}}dm(z)|\partial_{\overline{z}}\tilde{g}(z)|\int_{\Omega_L}\Big|\big((\mathcal{V}_2x_1-\mathcal{V}_1x_2)\partial_{z}R(z,\mathcal{L}_{\Omega_L})\big)(x,x)\Big|dx <\infty .
\end{equation}
\end{proposition}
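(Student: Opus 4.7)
My approach is to identify the integral kernel of $(\mathcal V_2 x_1 - \mathcal V_1 x_2)\partial_z R(z,\mathcal{L}_{\Omega_L})$ explicitly, establish a uniform bound on $\Omega_L\times\Omega_L$ that is polynomial in $L$ and in $|\mathrm{Im}\,z|^{-1}$, and then absorb the $|\mathrm{Im}\,z|^{-1}$ blow-up against the almost-analytic decay of $\partial_{\overline z}\tilde g$. The trace-class conclusion will then follow from Proposition~\ref{prop_trace_class_kernel} and boundedness of $\Omega_L$, and the integrability \eqref{eq_uniform_trace_interchange_z_trace} from the rapid vanishing $|\partial_{\overline z}\tilde g(z)|=O(|\mathrm{Im}\,z|^N)$.

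\textbf{Step 1 (kernel).} Since $\partial_z R(z,\mathcal{L}_{\Omega_L}) = R(z,\mathcal{L}_{\Omega_L})^2$, Proposition~\ref{prop_composition rule} gives
$$\partial_z G_L(x,x';z) = \int_{\Omega_L} G_L(x,y;z)\,G_L(y,x';z)\,dy.$$
Because $\mathcal V_i$ is a first-order differential operator with bounded coefficients acting in the $x$-variable and $x_j$ is multiplication in the same variable, a direct Leibniz computation shows that $\mathcal V_{i,x}[x_j \partial_z G_L(x,x';z)] = x_j\mathcal V_{i,x}\partial_z G_L(x,x';z) + h_{ij}(x)\partial_z G_L(x,x';z)$ with $h_{ij}\in L^\infty(\mathbf R^2)$ coming from $[\mathcal V_i,x_j]$. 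Hence the kernel of $(\mathcal V_i x_j)\partial_z R(z,\mathcal{L}_{\Omega_L})$ is
$$K_{ij}(x,x';z) = x_j\,\mathcal V_{i,x}\partial_z G_L(x,x';z) + h_{ij}(x)\,\partial_z G_L(x,x';z).$$

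\textbf{Step 2 (uniform bound).} I would then insert the estimates of Corollary~\ref{corol_estimates_Vi_G} into the convolutions defining $\partial_z G_L$ and $\mathcal V_{i,x}\partial_z G_L$. Each factor is dominated by $C|\mathrm{Im}\,z|^{-q}\bigl(L^{1/3}+|\log|{\cdot}||+|{\cdot}|^{-1}\bigr)$, and the only delicate contribution in the double integral is the classical Riesz-kernel product $\int_{\Omega_L}|x-y|^{-1}|y-x'|^{-1}\,dy\lesssim 1+|\log|x-x'||$ in $\mathbf R^2$; all other terms are integrated directly using $|\log|,|{\cdot}|^{-1}\in L^p(\Omega_L)$ for every $p<2$. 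Combined with $|x_j|\le CL$ and $\|h_{ij}\|_\infty<\infty$, this yields
$$\sup_{x,x'\in\Omega_L}|K_{ij}(x,x';z)|\le \frac{P(L)}{|\mathrm{Im}\,z|^{2q}}$$
for some polynomial $P$. Proposition~\ref{prop_trace_class_kernel} then gives $(\mathcal V_i x_j)\partial_z R(z,\mathcal{L}_{\Omega_L})\in S_1(L^2(\Omega_L))$ with $\int_{\Omega_L}|K_{ij}(x,x;z)|\,dx\le |\Omega_L|P(L)|\mathrm{Im}\,z|^{-2q}$, and linearity gives the same bound for the combination $\mathcal V_2 x_1 - \mathcal V_1 x_2$.

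\textbf{Step 3 ($z$-integration).} Finally, $\tilde g\in C_c^\infty(\mathbf C)$ being almost-analytic gives $|\partial_{\overline z}\tilde g(z)|\le C_N|\mathrm{Im}\,z|^N$ for every $N$; choosing $N>2q$ makes $|\partial_{\overline z}\tilde g(z)|\cdot|\mathrm{Im}\,z|^{-2q}$ integrable against $dm$ on the compact set $\mathrm{supp}\,\tilde g$, yielding \eqref{eq_uniform_trace_interchange_z_trace}. The main obstacle in executing this plan is Step~2: one must carefully check that combining two applications of Corollary~\ref{corol_estimates_Vi_G} inside the convolution does not produce a non-integrable on-diagonal singularity, which reduces to the Riesz-kernel estimate $\int_{\Omega_L}\frac{dy}{|x-y||y-x'|}\lesssim 1+|\log|x-x'||$ and to bookkeeping the resulting polynomial exponent of $L$.
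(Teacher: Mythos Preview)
Your proposal follows essentially the same route as the paper's proof, but Step~2 as written contains a gap. You bound both convolution factors by the full expression $L^{1/3}+|\log|\cdot||+|\cdot|^{-1}$ from Corollary~\ref{corol_estimates_Vi_G} and then identify the worst pairing as the Riesz product $\int_{\Omega_L}|x-y|^{-1}|y-x'|^{-1}\,dy$, which you correctly estimate by $1+|\log|x-x'||$. But this bound is \emph{unbounded} on the diagonal $x=x'$, so your claimed supremum $\sup_{x,x'\in\Omega_L}|K_{ij}(x,x';z)|\le P(L)|\mathrm{Im}\,z|^{-2q}$ does not follow, Proposition~\ref{prop_trace_class_kernel} cannot be invoked, and the diagonal values $K_{ij}(x,x;z)$ appearing in \eqref{eq_uniform_trace_interchange_z_trace} would not even be well-defined. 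Phrasing the obstacle as avoiding a ``non-integrable'' singularity misses the point: integrability is not enough here, you need pointwise boundedness.

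The remedy is the one the paper uses: distinguish the two factors in the composition. In
\[
\mathcal{V}_{i,x}\partial_z G_L(x,x';z)=\int_{\Omega_L}(\mathcal{V}_{i,x}G_L)(x,y;z)\,G_L(y,x';z)\,dy
\]
only the first factor carries a derivative and hence the $|x-y|^{-1}$ singularity; the undifferentiated Green function $G_L(y,x';z)$ is controlled by Hypothesis~\ref{prop_GL_singularities} alone, i.e.\ by $L^{1/3}+|\log|y-x'||$ with no inverse-distance term. The actual worst contribution is therefore $\int_{\Omega_L}\frac{L^{1/3}+|\log|y-x'||}{|x-y|}\,dy$, which is uniformly bounded on $\Omega_L\times\Omega_L$ for each fixed $L$. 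With this refinement your Steps~1--3 coincide with the paper's argument.
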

\begin{proof}
Fix $L>1$ in the following paragraphs. We prove that
\begin{equation} \label{eq_interchange_z_trace_proof_1}
\Big|\big((\mathcal{V}_2x_1-\mathcal{V}_1x_2)\partial_{z}R(z,\mathcal{L}_{\Omega_L})\big)(x,x^{\prime})\Big|\lesssim \frac{1}{|\text{Im }z|^{2q}}
\end{equation}
uniformly for $x,x^{\prime}\in \Omega_L$. Then $(\mathcal{V}_2x_1-\mathcal{V}_1x_2)\partial_{z}R(z,\mathcal{L}_{\Omega_L})\in S_1(L^2(\Omega_L))$ follows by Proposition \ref{prop_trace_class_kernel}. \eqref{eq_interchange_z_trace_proof_1} also leads to \eqref{eq_uniform_trace_interchange_z_trace} as a consequence of $\partial_{\overline{z}}\tilde{g}=\mathcal{O}(|\text{Im }z|^{\infty})$.

Note that $\mathcal{V}_i x_j$ is first-order differential operators with uniformly bounded coefficients when the domain size is fixed. Hence
\footnotesize
\begin{equation*}
\begin{aligned}
\Big|\big((\mathcal{V}_2x_1-\mathcal{V}_1x_2)\partial_{z}R(z,\mathcal{L}_{\Omega_L})\big)(x,x^{\prime})\Big|
&\lesssim \sum_{i=1,2}\Big|\big(\mathcal{V}_i \partial_{z}R(z,\mathcal{L}_{\Omega_L})\big)(x,x^{\prime})\Big|
+\Big|\big(\partial_{z}R(z,\mathcal{L}_{\Omega_L})\big)(x,x^{\prime})\Big| \\
&\leq \sum_{i=1,2}\int_{\Omega_L}dx^{\prime\prime}|\mathcal{V}_i G_{L}(x,x^{\prime\prime};z)||G_{L}(x^{\prime\prime},x^{\prime};z)|  \\
&\quad +\int_{\Omega_L}dx^{\prime\prime}|G_{L}(x,x^{\prime\prime};z)||G_{L}(x^{\prime\prime},x^{\prime};z)|,
\end{aligned}
\end{equation*}
\normalsize
where in the second inequality above we used the fact that the integral kernel of $\partial_{z}R(z,\mathcal{L}_{\Omega_L})=-R(z,\mathcal{L}_{\Omega_L})R(z,\mathcal{L}_{\Omega_L})$ can be expressed as the composition of the Green function $G_L$ with itself by Proposition \ref{prop_composition rule}. By Corollary \ref{corol_estimates_Vi_G}, the first integral at the right side is estimated as follows
\footnotesize
\begin{equation*}
\begin{aligned}
\sum_{i=1,2}\int_{\Omega_L}dx^{\prime\prime}|\mathcal{V}_{i} G_{L}(x,x^{\prime\prime};z)||G_{L}(x^{\prime\prime},x^{\prime};z)|  
&\lesssim \frac{1}{|\text{Im }z|^{2q}}\int_{\Omega_L}dx^{\prime\prime}\frac{1+|\log|x^{\prime\prime}-x^{\prime}||}{|x^{\prime\prime}-x|} 
\lesssim \frac{1}{|\text{Im }z|^{2q}}.
\end{aligned}
\end{equation*}
\normalsize
The second integral is estimated in the same way. This gives \eqref{eq_interchange_z_trace_proof_1}.
\end{proof}

The second is on the following kernel estimation. 
\begin{lemma} \label{lem_composition_integral_operators}
For $P\in\{R(z,\mathcal{L}),\mathcal{V}_iR(z,\mathcal{L})\}$ and $Q\in\{ R(z,\mathcal{L}_{\Omega_L}),\mathcal{V}_i R(z,\mathcal{L}_{\Omega_L})\}$ as integral operators on $\mathbf{R}^2$ and $\Omega_L$, respectively, their associated integral kernels satisfy \eqref{eq_composition_rule_cond_1} when $z\in \text{supp }\tilde{g}\backslash \mathbf{R}$.
\end{lemma}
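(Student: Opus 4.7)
The plan is to verify condition \eqref{eq_composition_rule_cond_1} separately for each of the four integral kernels in question, namely $G_\sharp(x,y;z)$, $\mathcal{V}_{i,x} G_\sharp(x,y;z)$, $G_L(x,y;z)$, and $\mathcal{V}_{i,x} G_L(x,y;z)$, using the quantitative Green function estimates already collected in Corollary \ref{corol_estimates_Vi_G}.

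The argument proceeds by a uniform near-field/far-field split. For a generic such kernel $K(x,y)$ acting on a domain $O \in \{\mathbf{R}^2, \Omega_L\}$, fix $x \in O$ and decompose
\begin{equation*}
\int_O |K(x,y)|\, dy \;=\; \int_{O \cap B_{x,1/2}} |K(x,y)|\, dy \;+\; \int_{O \setminus B_{x,1/2}} |K(x,y)|\, dy.
\end{equation*}
The first (near-field) piece is controlled by the local singularity estimate \eqref{eq_estimates_Vi_G_1}: its dominant contributions $1$, $|\log|x-y||$, and $|x-y|^{-1}$ are each integrable over a 2D ball of radius $1/2$ by passing to polar coordinates, yielding a finite bound independent of $x$. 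In the bounded-domain case, the estimate also carries an $L^{1/3}$ term, but integrated over a ball of fixed radius this contributes only a finite constant (depending on $L$ and $\text{Im } z$, but not on $x$). The second (far-field) piece is controlled by the exponential decay estimate \eqref{eq_estimates_Vi_G_2}: the standard polar computation
\begin{equation*}
\int_{\mathbf{R}^2 \setminus B_{x,1/2}} e^{-|\text{Im } z|^p |x-y|}\, dy \;\lesssim\; |\text{Im } z|^{-2p}
\end{equation*}
majorizes the integral over $O \setminus B_{x,1/2}$ and is uniform in $x$. Summing the two contributions yields $\sup_{x \in O} \int_O |K(x,y)|\,dy < \infty$, which is exactly \eqref{eq_composition_rule_cond_1}.

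There is no substantial obstacle to the argument; it is essentially a bookkeeping verification once the singularity and exponential-decay estimates of Section 3 are available. The only point worth flagging is that the bounded-domain estimates depend on $L$ through the $L^{1/3}$ prefactor in the near-field bound, but this is harmless: Proposition \ref{prop_composition rule} is invoked at a fixed value of $L$, and uniformity in $L$ is not required for condition \eqref{eq_composition_rule_cond_1}.
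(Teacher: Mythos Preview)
Your proposal is correct and follows essentially the same approach as the paper: a near-field/far-field split at radius $1/2$, with the local singularity estimate \eqref{eq_estimates_Vi_G_1} handling the integrable $\log$ and $|x-y|^{-1}$ contributions on $B_{x,1/2}$ and the exponential decay \eqref{eq_estimates_Vi_G_2} handling the complement. Your remark that the $L^{1/3}$ prefactor is harmless because $L$ is fixed is apt and makes the verification for the $\Omega_L$ kernels explicit, whereas the paper simply declares those cases ``similar.''
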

\begin{proof}
We prove \eqref{eq_composition_rule_cond_1} for $R(z,\mathcal{L})$ and $\mathcal{V}_i R(z,\mathcal{L})$. The proof of $R(z,\mathcal{L}_{\Omega_L})$ and $\mathcal{V}_i R(z,\mathcal{L}_{\Omega_L})$ are similar.

Recall that the integral kernel of $R(z,\mathcal{L})$ is the Green function $G_{\sharp}(x,x^{\prime};z)$. Proposition \ref{prop_G_sharp_singularities} and \ref{prop_G_sharp_decay} yield that
\footnotesize
\begin{equation*}
\begin{aligned}
\int_{\mathbf{R}^2}\big|\big(R(z,\mathcal{L})\big)(x,x^{\prime})\big|dx^{\prime}
&=\int_{\mathbf{R}^2\cap B_{x,1/2}}|G_{\sharp}(x,x^{\prime};z)|dx^{\prime}
+\int_{\mathbf{R}^2\backslash B_{x,1/2}}|G_{\sharp}(x,x^{\prime};z)|dx^{\prime} \\
&\lesssim
\frac{1}{|\text{Im }z|^q}
\int_{B_{x,1/2}}dx^{\prime}\big(1+|\log|x-x^{\prime}||\big)
+\frac{1}{|\text{Im }z|^q}
\int_{\mathbf{R^2}}dx^{\prime}e^{-|\text{Im }z|^p|x-x^{\prime}|} \\
&<\infty.
\end{aligned}
\end{equation*}
\normalsize
Similarly, Corollary \ref{corol_estimates_Vi_G} implies
\footnotesize
\begin{equation*}
\begin{aligned}
\int_{\mathbf{R}^2}\big|\big(\mathcal{V}_i R(z,\mathcal{L})\big)(x,x^{\prime})\big|dx^{\prime}
&=\int_{\mathbf{R}^2\cap B_{x,1/2}}|\mathcal{V}_i G_{\sharp}(x,x^{\prime};z)|dx^{\prime}
+\int_{\mathbf{R}^2\backslash B_{x,1/2}}|\mathcal{V}_i G_{\sharp}(x,x^{\prime};z)|dx^{\prime} \\
&\lesssim
\frac{1}{|\text{Im }z|^q}
\int_{B_{x,1/2}}dx^{\prime}\big(1+|\log|x-x^{\prime}||+\frac{1}{|x-x^{\prime}|}\big)
+\frac{1}{|\text{Im }z|^q}
\int_{\mathbf{R^2}}dx^{\prime}e^{-|\text{Im }z|^p|x-x^{\prime}|} \\
&<\infty. 
\end{aligned}
\end{equation*}
\normalsize
Observe that the above estimates are uniformly for $x\in\mathbf{R}^2$. Hence $$\sup_{x\in\mathbf{R}^2}\int_{\mathbf{R}^2}\big|\big(R(z,\mathcal{L})\big)(x,x^{\prime})\big|dx^{\prime}<\infty, \,\, \sup_{x\in\mathbf{R}^2}\int_{\mathbf{R}^2}\big|\big(\mathcal{V}_i R(z,\mathcal{L})\big)(x,x^{\prime})\big|dx^{\prime}<\infty.$$
\end{proof}

\subsection{Proof of Theorem \ref{prop_edge_index_expression}}

{\color{blue}Step 1.} Note that $x_1\mathcal{V}_2-x_2\mathcal{V}_1=\mathcal{V}_2x_1-\mathcal{V}_1x_2$, which follows from the Jacobi identity for commutators. Expressing $EI_{L}(\Delta)$ using the Hellfer-Sjöstrand formula \eqref{eq_hellfer_sjostrand} gives
\begin{equation} \label{eq_proof_edgeindex_expression_0}
\begin{aligned}
EI_{L}(\Delta)
&=\frac{1}{\pi}\text{Tr}_{\Omega_L}\Big\{
\int_{\mathbf{C}}dm\frac{\partial ^2 \tilde{g}(z)}{\partial \overline{z}\partial z}(\mathcal{V}_2x_1-\mathcal{V}_1x_2)R(z,\mathcal{L}_{\Omega_L})
\Big\} \\
&=-\frac{1}{\pi}\text{Tr}_{\Omega_L}\Big\{
\int_{\mathbf{C}}dm\partial_{\overline{z}}\tilde{g}(z)(\mathcal{V}_2x_1-\mathcal{V}_1x_2)\partial_{z}R(z,\mathcal{L}_{\Omega_L})
\Big\},
\end{aligned}
\end{equation}
where the second equality follows from an integration by parts in $z$. 

By Proposition \ref{prop_uniform_trace_interchange_z_trace}, and applying the Fubini theorem, we can interchange the integral and the trace in \eqref{eq_proof_edgeindex_expression_0}, obtaining:
\begin{equation}
\label{eq_proof_edgeindex_expression_1}
EI_{L}(\Delta)=-\frac{1}{\pi}
\int_{\mathbf{C}}dm\partial_{\overline{z}}\tilde{g}(z)\text{Tr}_{\Omega_L}\Big[(\mathcal{V}_1x_2-\mathcal{V}_2x_1)\partial_{z}R(z,\mathcal{L}_{\Omega_L}) \Big].
\end{equation}

{\color{blue}Step 2.} We calculate the integral kernel in \eqref{eq_proof_edgeindex_expression_1}. Note that $x_i\partial_z G_{L}(x,x^{\prime};z)$ solves the following equation for $x\in \Omega_L$:
\footnotesize
\begin{equation*}
\begin{aligned}
(-\nabla\cdot A\nabla-z)\big(x_i\partial_{z}G_{L}(x,x^{\prime};z) \big)
&=x_i(-\nabla\cdot A\nabla-z)\partial_{z}G_{L}(x,x^{\prime};z)
+[-\nabla\cdot A\nabla,x_i]\partial_{z}G_{L}(x,x^{\prime};z) \\
&=x_i\big[\partial_z (-\nabla\cdot A\nabla-z)G_{L}(x,x^{\prime};z) +G_{L}(x,x^{\prime};z)\big]
+\mathcal{V}_i \partial_z G_{L}(x,x^{\prime};z) \\
&=x_i G_{L}(x,x^{\prime};z)+\mathcal{V}_i \partial_z G_{L}(x,x^{\prime};z),
\end{aligned}
\end{equation*}
\normalsize
where $\partial_z \delta(x-x^{\prime})=0$ is used in the third equality. Moreover, this function satisfies the Dirichlet boundary conditions on $\partial \Omega_L$. Hence, we can invoke Green's formula to express it as
\begin{equation*}
\begin{aligned}
x_i\partial_{z}G_{L}(x,x^{\prime};z)
&=\int_{\Omega_L}dx^{\prime\prime}
G_{L}(x,x^{\prime\prime};z)x_i^{\prime\prime} G_{L}(x^{\prime\prime},x^{\prime};z) \\
&\quad +
\int_{\Omega_L}dx^{\prime\prime}
G_{L}(x,x^{\prime\prime};z)\mathcal{V}_i \partial_z G_{L}(x^{\prime\prime},x^{\prime};z).
\end{aligned}
\end{equation*}
Consequently, we obtain
\begin{equation} \label{eq_proof_edgeindex_expression_8}
\begin{aligned}
\mathcal{V}_jx_i\partial_{z}G_{L}(x,x^{\prime};z)
&=\int_{\Omega_L}dx^{\prime\prime}
\mathcal{V}_jG_{L}(x,x^{\prime\prime};z)x_i^{\prime\prime} G_{L}(x^{\prime\prime},x^{\prime};z) \\
&\quad +
\int_{\Omega_L}dx^{\prime\prime}
\mathcal{V}_jG_{L}(x,x^{\prime\prime};z)\mathcal{V}_i \partial_z G_{L}(x^{\prime\prime},x^{\prime};z).
\end{aligned}
\end{equation}
Here, the differential operator $\mathcal{V}_j$ can be interchanged with the integral because the integrands are absolutely integrable. For instance, since $\mathcal{V}_j G_{L} \in L^{\infty}(\Omega_L, L^{q}(\Omega_L))$ and $\mathcal{V}_j \partial_z G_{L} \in L^{\infty}(\Omega_L, L^{p}(\Omega_L))$ for some $p > 2$ and $1/q + 1/p = 1$ (similar to Corollaries \ref{corol_banach_valued_Gsharp_1} and \ref{corol_banach_valued_2}), we have:
\begin{equation} \label{eq_proof_edgeindex_expression_9}
\begin{aligned}
&\int_{\Omega_L}dx^{\prime\prime}
|\mathcal{V}_jG_{L}(x,x^{\prime\prime};z)||\mathcal{V}_i \partial_z G_{L}(x^{\prime\prime},x^{\prime};z)| \\
&\leq \|\mathcal{V}_jG_{L}(x,x^{\prime\prime};z)\|_{L^{\infty}((\Omega_L),L^{q}(\Omega_L))}\|\mathcal{V}_i \partial_z G_{L}(x^{\prime},x;z)\|_{L^{\infty}((\Omega_L),L^{p}(\Omega_L))}<\infty .
\end{aligned}
\end{equation}
The same argument applies to the other integral in \eqref{eq_proof_edgeindex_expression_8}.

{\color{blue}Step 3.} By Proposition \ref{prop_uniform_trace_interchange_z_trace} and using \eqref{eq_proof_edgeindex_expression_8}, we have:
\begin{equation*}
\begin{aligned}
\text{Tr}_{\Omega_L}\Big[(\mathcal{V}_1x_2-\mathcal{V}_2x_1)\partial_{z}R(z,\mathcal{L}_{\Omega_L}) \Big]
&=\int_{\Omega_L}dx\sum_{i,j\in\{1,2\}}\tilde{\varepsilon}_{ij}(\mathcal{V}_i x_j\partial_{z}G_{\Omega_L})(x,x;z) \\
&=\int_{\Omega_L}dx\int_{\Omega_L}dx^{\prime}\sum_{i,j\in\{1,2\}}\tilde{\varepsilon}_{ij} \mathcal{V}_i G_{L}(x,x^{\prime};z)\mathcal{V}_j \partial_z G_{L}(x^{\prime},x;z) \\
&\quad +
\int_{\Omega_L}dx\int_{\Omega_L}dx^{\prime}\sum_{i,j\in\{1,2\}}\tilde{\varepsilon}_{ij} \mathcal{V}_i G_{L}(x,x^{\prime};z)x^{\prime}_j  G_{L}(x^{\prime},x;z).
\end{aligned}
\end{equation*}
The first integral in the above identity corresponds exactly to the expression of the edge index in Theorem \ref{prop_edge_index_expression}. To complete the proof of Theorem \ref{prop_edge_index_expression}, it remains to show that:
\begin{equation} \label{eq_proof_edgeindex_expression_2}
\int_{\Omega_L}dx\int_{\Omega_L}dx^{\prime}\sum_{i,j\in\{1,2\}}\tilde{\varepsilon}_{ij} \mathcal{V}_i G_{L}(x,x^{\prime};z)x^{\prime}_j  G_{L}(x^{\prime},x;z)=0.
\end{equation}
{\color{blue}Step 4.} We prove \eqref{eq_proof_edgeindex_expression_2}. As demonstrated in the proof, the Dirichlet boundary conditions play a crucial role, particularly in ensuring the vanishing of boundary terms (see \eqref{eq_proof_edgeindex_expression_4}). 

Before presenting the rigorous proof, we provide an informal calculation that offers a quick path to the result.
With $\mathcal{V}_i=[\mathcal{L},x_i]$, the left side of \eqref{eq_proof_edgeindex_expression_2} equals to
\footnotesize
\begin{equation} \label{eq_proof_edgeindex_expression_3_informal_1}
\begin{aligned}
&\int_{\Omega_L}dx\int_{\Omega_L}dx^{\prime}\sum_{i,j\in\{1,2\}}\tilde{\varepsilon}_{ij} \mathcal{V}_i G_{L}(x,x^{\prime};z)x^{\prime}_j  G_{L}(x^{\prime},x;z)\\
&=\int_{\Omega_L}dx^{\prime}\int_{\Omega_L}dx \sum_{i,j\in\{1,2\}}\tilde{\varepsilon}_{ij}\Big[ (\mathcal{L}_x x_i G_{L})(x,x^{\prime};z)x^{\prime}_j  G_{L}(x^{\prime},x;z) \\
&\quad\quad\quad\quad\quad\quad\quad\quad\quad\quad\quad\quad\quad\quad\quad\quad\quad -(x_i\mathcal{L}_x G_{L})(x,x^{\prime};z)x^{\prime}_j  G_{L}(x^{\prime},x;z)\Big] \\
&=\int_{\Omega_L}dx^{\prime}\int_{\Omega_L}dx\sum_{i,j\in\{1,2\}}\tilde{\varepsilon}_{ij} \Big[ ((\mathcal{L}_x-z)x_i G_{L})(x,x^{\prime};z)x^{\prime}_j  G_{L}(x^{\prime},x;z) \\
&\quad\quad\quad\quad\quad\quad\quad\quad\quad\quad\quad\quad\quad\quad\quad\quad\quad -x_i(\mathcal{L}_x-z) G_{L}(x,x^{\prime};z)x^{\prime}_j  G_{L}(x^{\prime},x;z)\Big] \\
&=\int_{\Omega_L}dx^{\prime}\int_{\Omega_L}dx\sum_{i,j\in\{1,2\}}\tilde{\varepsilon}_{ij} \Big[ ((\mathcal{L}_x-z)x_i G_{L})(x,x^{\prime};z)x^{\prime}_j  G_{L}(x^{\prime},x;z) -x_i\delta(x-x^{\prime})x^{\prime}_j  G_{L}(x^{\prime},x;z)\Big].
\end{aligned}
\end{equation}
\normalsize
Note both $x_iG(x,x^{\prime};z)$ and $G(x^{\prime},x;z)$ satisfy the Dirichlet boundary conditions when $x\in\partial \Omega_L$. An integration by part in $x$ further yields
\footnotesize
\begin{equation} \label{eq_proof_edgeindex_expression_3_informal_2}
\begin{aligned}
&\int_{\Omega_L}dx\int_{\Omega_L}dx^{\prime}\sum_{i,j\in\{1,2\}}\tilde{\varepsilon}_{ij} \mathcal{V}_i G_{L}(x,x^{\prime};z)x^{\prime}_j  G_{L}(x^{\prime},x;z) \\
&=\int_{\Omega_L}dx^{\prime}\int_{\Omega_L}dx\sum_{i,j\in\{1,2\}}\tilde{\varepsilon}_{ij} \Big[ (x_i G_{L})(x,x^{\prime};z)x^{\prime}_j  (\overline{\mathcal{L}}_x-z)G_{L}(x^{\prime},x;z) -x_i\delta(x-x^{\prime})x^{\prime}_j  G_{L}(x^{\prime},x;z)\Big] \\
&=\int_{\Omega_L}dx^{\prime}\int_{\Omega_L}dx\sum_{i,j\in\{1,2\}}\tilde{\varepsilon}_{ij} \Big[ (x_i G_{L})(x,x^{\prime};z)x^{\prime}_j  \delta(x-x^{\prime}) -x_i\delta(x-x^{\prime})x^{\prime}_j  G_{L}(x^{\prime},x;z)\Big] \\
&=\int_{\Omega_L}dx^{\prime}\sum_{i,j\in\{1,2\}}\tilde{\varepsilon}_{ij} \Big[ x_i^{\prime}x_{j}^{\prime} G_{L}(x^{\prime},x^{\prime};z) -x_i^{\prime}x^{\prime}_j  G_{L}(x^{\prime},x^{\prime};z)\Big]
=0,
\end{aligned}
\end{equation}
\normalsize
where $\overline{\mathcal{L}}=-\nabla\cdot \overline{A}\nabla$.  The above informal calculation ignores the singularity of the Green function. This issue is addressed through the following rigorous argument.

We observe that the left side of \eqref{eq_proof_edgeindex_expression_2} is absolutely integrable by Hypothesis \ref{prop_GL_singularities}. Hence 
\footnotesize
\begin{equation} \label{eq_proof_edgeindex_expression_3}
\begin{aligned}
&\sum_{i,j\in\{1,2\}}\tilde{\varepsilon}_{ij}\int_{\Omega_L}dx\int_{\Omega_L}dx^{\prime} \mathcal{V}_i G_{L}(x,x^{\prime};z)x^{\prime}_j  G_{L}(x^{\prime},x;z)\\
&=\lim_{\epsilon\to 0}\sum_{i,j\in\{1,2\}}\tilde{\varepsilon}_{ij}\int_{\Omega_L}dx^{\prime}\int_{\Omega_L\backslash B(x^{\prime},\epsilon)}dx \tilde{\varepsilon}_{ij}\mathcal{V}_i G_{L}(x,x^{\prime};z)x^{\prime}_j  G_{L}(x^{\prime},x;z) \\
&=\lim_{\epsilon\to 0}\sum_{i,j\in\{1,2\}}\tilde{\varepsilon}_{ij}\int_{\Omega_L}dx^{\prime}\int_{\Omega_L\backslash B(x^{\prime},\epsilon)}dx \tilde{\varepsilon}_{ij}\Big[ (\mathcal{L}_x x_i G_{L})(x,x^{\prime};z)x^{\prime}_j  G_{L}(x^{\prime},x;z) \\
&\quad\quad\quad\quad\quad\quad\quad\quad\quad\quad\quad\quad\quad\quad\quad\quad\quad -(x_i\mathcal{L}_x G_{L})(x,x^{\prime};z)x^{\prime}_j  G_{L}(x^{\prime},x;z)\Big] \\
&=\lim_{\epsilon\to 0}\sum_{i,j\in\{1,2\}}\tilde{\varepsilon}_{ij}\int_{\Omega_L}dx^{\prime}\int_{\Omega_L\backslash B(x^{\prime},\epsilon)}dx \Big[ ((\mathcal{L}_x-z)x_i G_{L})(x,x^{\prime};z)x^{\prime}_j  G_{L}(x^{\prime},x;z) \\
&\quad\quad\quad\quad\quad\quad\quad\quad\quad\quad\quad\quad\quad\quad\quad\quad\quad -x_i(\mathcal{L}_x-z) G_{L}(x,x^{\prime};z)x^{\prime}_j  G_{L}(x^{\prime},x;z)\Big] \\
&=:\lim_{\epsilon\to 0}\sum_{i,j\in\{1,2\}}\tilde{\varepsilon}_{ij}
\big(I_{ij,1}(z,\epsilon)-I_{ij,2}(z,\epsilon)\big),
\end{aligned}
\end{equation}
\normalsize
where we applied $\mathcal{V}_i=[\mathcal{L},x_i]$ in the second equality above. This limiting process can circumvent the technical issues involving the singularity of the Green function, as previously discussed. Now, consider the integral $I_{12,1}(z,\epsilon)$. Performing an integration by parts, we find that:
\begin{equation}
\label{eq_proof_edgeindex_expression_4}
\begin{aligned}
I_{12,1}(z,\epsilon)&=\int_{\Omega_L}dx^{\prime}\int_{\Omega_L\backslash B(x^{\prime},\epsilon)}dx ((\mathcal{L}_x-z)x_1 G_{L})(x,x^{\prime};z)x^{\prime}_2  G_{L}(x^{\prime},x;z) \\
&=\int_{\Omega_L}dx^{\prime}\int_{\Omega_L\backslash B(x^{\prime},\epsilon)}dx \big(x_1 G_{L}(x,x^{\prime};z)x^{\prime}_2  (\overline{\mathcal{L}}_x-z)G_{L}(x^{\prime},x;z)\big) \\
&\quad +\int_{\Omega_L}dx^{\prime}\int_{\partial B(x^{\prime},\epsilon)}ds(x)n(x)\cdot\Big[
A(x)\nabla_{x}\big(x_1G_{L}(x,x^{\prime};z)\big)x_2^{\prime}G_{L}(x^{\prime},x;z) \\
&\quad\quad\quad\quad\quad\quad\quad\quad\quad\quad\quad\quad\quad\quad\quad
-x_1x_2^{\prime}G_{L}(x,x^{\prime};z)\overline{A}(x)\nabla_{x}G_{L}(x^{\prime},x;z) \Big].
\end{aligned}
\end{equation}
Note that the boundary term at $\partial \Omega_L$ vanishes because $x_iG_L$ satisfies the Dirichlet boundary conditions. Moreover, the spatial integral in \eqref{eq_proof_edgeindex_expression_4} vanishes because $\Omega_L\backslash B(x^{\prime},\epsilon)$ is disjoint from the support of the Dirac mass $\delta(x-x^{\prime})$. Explicitly, we have:
\footnotesize
\begin{equation} \label{eq_proof_edgeindex_expression_5}
\begin{aligned}
&\int_{\Omega_L\backslash B(x^{\prime},\epsilon)}dx \big(x_1 G_{L}(x,x^{\prime};z)x^{\prime}_2  (\overline{\mathcal{L}}_x-z)G_{L}(x^{\prime},x;z)\big) =\int_{\Omega_L\backslash B(x^{\prime},\epsilon)}dx \big(x_1 G_{L}(x,x^{\prime};z)x^{\prime}_2  \delta(x-x^{\prime})\big) =0.
\end{aligned}
\end{equation}
\normalsize
On the other hand, the integral over $\partial B(x^{\prime}, \epsilon)$ becomes:
\footnotesize
\begin{equation}
\label{eq_proof_edgeindex_expression_6}
\begin{aligned}
&\int_{\Omega_L}dx^{\prime}\int_{\partial B(x^{\prime},\epsilon)}ds(x)n(x)\cdot\Big[
A(x)\nabla_{x}\big(x_1G_{L}(x,x^{\prime};z)\big)x_2^{\prime}G_{L}(x^{\prime},x;z) \\
&\quad\quad\quad\quad\quad\quad\quad\quad\quad\quad\quad\quad\quad\quad\quad
-x_1x_2^{\prime}G_{L}(x,x^{\prime};z)\overline{A}(x)\nabla_{x}G_{L}(x^{\prime},x;z) \Big] \\
&=\int_{\Omega_L}x_2^{\prime}dx^{\prime}\int_{\partial B(x^{\prime},\epsilon)}ds(x)x_1 n(x)\cdot\Big[
\big(A(x)\nabla_{x}G_{L}(x,x^{\prime};z)\big)G_{L}(x^{\prime},x;z) \\
&\quad\quad\quad\quad\quad\quad\quad\quad\quad\quad\quad\quad\quad\quad\quad
-G_{L}(x,x^{\prime};z)\overline{A}(x)\nabla_{x}G_{L}(x^{\prime},x;z) \Big] \\
&\quad+
\int_{\Omega_L}x_2^{\prime}dx^{\prime}\int_{\partial B(x^{\prime},\epsilon)}ds(x)(n(x)\cdot A(x)e_1) G_{L}(x,x^{\prime};z)G_{L}(x^{\prime},x;z) \\
&=\int_{\Omega_L}x_1^{\prime}x_2^{\prime}dx^{\prime}\int_{\partial B(x^{\prime},\epsilon)}ds(x)n(x)\cdot\Big[
\big(A(x)\nabla_{x}G_{L}(x,x^{\prime};z)\big)G_{L}(x^{\prime},x;z) -G_{L}(x,x^{\prime};z)\overline{A}(x)\nabla_{x}G_{L}(x^{\prime},x;z) \Big] \\
&\quad+ \Bigg\{
\int_{\Omega_L}x_2^{\prime}dx^{\prime}\int_{\partial B(x^{\prime},\epsilon)}ds(x)(n(x)\cdot A(x)e_1) G_{L}(x,x^{\prime};z)G_{L}(x^{\prime},x;z) \\
&\quad\quad\quad+
\int_{\Omega_L}x_2^{\prime}dx^{\prime}\int_{\partial B(x^{\prime},\epsilon)}ds(x)\big(x_1-x_1^{\prime}\big) n(x)\cdot\Big[
\big(A(x)\nabla_{x}G_{L}(x,x^{\prime};z)\big)G_{L}(x^{\prime},x;z) \\
&\quad\quad\quad\quad\quad\quad\quad\quad\quad\quad\quad\quad\quad\quad\quad\quad\quad\quad\quad\quad\quad
-G_{L}(x,x^{\prime};z)\overline{A}(x)\nabla_{x}G_{L}(x^{\prime},x;z) \Big] \Bigg\}  \\
&=:\int_{\Omega_L}x_1^{\prime}x_2^{\prime}dx^{\prime}\int_{\partial B(x^{\prime},\epsilon)}ds(x)n(x)\cdot\Big[
\big(A(x)\nabla_{x}G_{L}(x,x^{\prime};z)\big)G_{L}(x^{\prime},x;z) 
-G_{L}(x,x^{\prime};z)\overline{A}(x)\nabla_{x}G_{L}(x^{\prime},x;z) \Big] \\
&\quad +\text{(remainder terms)}.
\end{aligned}
\end{equation}
\normalsize
The integrands in the remainder terms admit the estimate $\mathcal{O}(\log^2|x-x^{\prime}|)$ for $\text{Im }z\neq 0$ by Hypothesis \ref{prop_GL_singularities}. Hence their integrals over $\partial B(x^{\prime},\epsilon)$ vanish as $\epsilon\to 0$. Meanwhile, the leading term in \eqref{eq_proof_edgeindex_expression_6} cancels with the corresponding term in $I_{21,1}(z,\epsilon)$. Thus, we conclude:
\begin{equation}
\label{eq_proof_edgeindex_expression_7}
\begin{aligned}
\lim_{\epsilon\to 0} \sum_{i,j\in\{1,2\}}I_{ij,1}(z,\epsilon)=0.
\end{aligned}
\end{equation}
Similarly, it can be shown that:
\begin{equation}
\label{eq_proof_edgeindex_expression_10}
\begin{aligned}
\lim_{\epsilon\to 0} \sum_{i,j\in\{1,2\}}I_{ij,2}(z,\epsilon)=0.
\end{aligned}
\end{equation}
Finally, combining \eqref{eq_proof_edgeindex_expression_3}, \eqref{eq_proof_edgeindex_expression_7}, and \eqref{eq_proof_edgeindex_expression_10}, we conclude the proof of \eqref{eq_proof_edgeindex_expression_2}.

\section{Proof of Theorem \ref{thm_bec}}
In this section, we prove Theorem \ref{thm_bec}. As outlined in Section 1.3, it remains to estimate the difference between the two expressions \eqref{eq_Chern_number_expression} and \eqref{eq_edge_index_expression} as $L\to\infty$, namely, 
\footnotesize
\begin{equation} \label{eq_main_proof_1}
\begin{aligned}
&\lim_{L\to\infty}\frac{1}{|\Omega_L|}EI_{L}(\Delta)
-\mathcal{C}_{\Delta}  \\
&=-\frac{1}{\pi}\lim_{L\to\infty}\frac{1}{|\Omega_L|}\int_{\mathbf{C}}dm\frac{\partial \tilde{g}(z)}{\partial \overline{z}}\int_{\Omega_L}dx\int_{\Omega_L}dx^{\prime}\sum_{i,j\in\{1,2\}}\tilde{\varepsilon}_{ij}\Big[
\big(\mathcal{V}_i G_{L}-\mathcal{V}_i G_{\sharp}\big)(x,x^{\prime};z)
\big(\mathcal{V}_j \partial_z G_{L}\big)(x^{\prime},x;z)  \\
&\quad\quad\quad\quad\quad\quad\quad\quad\quad\quad\quad\quad\quad\quad\quad\quad\quad\quad\quad\quad\quad\quad\quad\quad +
\big(\mathcal{V}_i G_{\sharp}\big)(x,x^{\prime};z)
\big(\mathcal{V}_j \partial_z G_{L}-\mathcal{V}_j \partial_z G_{\sharp}\big)(x^{\prime},x;z)
\Big].
\end{aligned}
\end{equation}
\normalsize
By Proposition \ref{prop_composition rule} and Lemma \ref{lem_composition_integral_operators},  $\mathcal{V}_j \partial_z G_{L}(x^{\prime},x;z)$ and $\mathcal{V}_j \partial_z G_{\sharp}(x^{\prime},x;z)$ can be written in the following composition form
\begin{equation*} \label{eq_main_proof_2}
\begin{aligned}
\mathcal{V}_i\partial_{z}G_{L}(x^{\prime},x;z)
&=\int_{\Omega_L}dx^{\prime\prime}\mathcal{V}_iG_{L}(x^{\prime},x^{\prime\prime};z)G_{L}(x^{\prime\prime},x;z), \\
\mathcal{V}_i\partial_{z}G_{\sharp}(x^{\prime},x;z)
&=\int_{\mathbf{R}^2}dx^{\prime\prime}\mathcal{V}_iG_{\sharp}(x^{\prime},x^{\prime\prime};z)G_{\sharp}(x^{\prime\prime},x;z)\\
&=\int_{\Omega_L}dx^{\prime\prime}\mathcal{V}_iG_{\sharp}(x^{\prime},x^{\prime\prime};z)G_{\sharp}(x^{\prime\prime},x;z)+ \int_{\mathbf{R}^2\backslash \Omega_L}dx^{\prime\prime}\mathcal{V}_iG_{\sharp}(x^{\prime},x^{\prime\prime};z)G_{\sharp}(x^{\prime\prime},x;z). 
\end{aligned}
\end{equation*}
Then the spatial integral in \eqref{eq_main_proof_1} can be estimated by
\footnotesize
\begin{equation} \label{eq_main_proof_3}
\begin{aligned}
&\Big|\int_{\Omega_L}dx\int_{\Omega_L}dx^{\prime}\sum_{i,j\in\{1,2\}}\tilde{\varepsilon}_{ij}\Big[
\big(\mathcal{V}_i G_{L}-\mathcal{V}_i G_{\sharp}\big)(x,x^{\prime};z)
\big(\mathcal{V}_j \partial_z G_{L}\big)(x^{\prime},x;z) 
+
\mathcal{V}_i G_{\sharp}(x,x^{\prime};z)
\big(\mathcal{V}_j \partial_z G_{L}-\mathcal{V}_j \partial_z G_{\sharp}\big)(x^{\prime},x;z)
\Big]\Big| \\
&\leq \int_{\Omega_L\times\Omega_L\times\Omega_L}
dxdx^{\prime}dx^{\prime\prime}\sum_{i,j\in\{1,2\}}
\big|\big(\mathcal{V}_i G_{L}-\mathcal{V}_i G_{\sharp}\big)(x,x^{\prime};z) \big|
\big|\mathcal{V}_j G_{L}(x^{\prime},x^{\prime\prime};z) \big|
\big|G_{L}(x^{\prime\prime},x;z) \big| \\
&\quad +
\int_{\Omega_L\times\Omega_L\times\Omega_L}
dxdx^{\prime}dx^{\prime\prime}\sum_{i,j\in\{1,2\}}
\big|\mathcal{V}_i G_{\sharp}(x,x^{\prime};z) \big|
\big|\big(\mathcal{V}_j G_{L}-\mathcal{V}_j G_{\sharp}\big)(x^{\prime},x^{\prime\prime};z) \big|
\big|G_{L}(x^{\prime\prime},x;z) \big| \\
&\quad +
\int_{\Omega_L\times\Omega_L\times\Omega_L}
dxdx^{\prime}dx^{\prime\prime}\sum_{i,j\in\{1,2\}}
\big|\mathcal{V}_i G_{\sharp}(x,x^{\prime};z) \big|
\big|\mathcal{V}_j G_{\sharp}(x^{\prime},x^{\prime\prime};z) \big|
\big|\big( G_{L}-G_{\sharp}\big)(x^{\prime\prime},x;z) \big| \\
&\quad +
\int_{\Omega_L\times\Omega_L\times(\mathbf{R}^2\backslash\Omega_L)}
dxdx^{\prime}dx^{\prime\prime}\sum_{i,j\in\{1,2\}}
\big|\mathcal{V}_i G_{\sharp}(x,x^{\prime};z) \big|
\big|\mathcal{V}_j G_{\sharp}(x^{\prime},x^{\prime\prime};z) \big|
\big|G_{\sharp}(x^{\prime\prime},x;z) \big| \\
&=:\sum_{k=1}^{4}\sum_{i,j\in\{1,2\}}I^{(k)}_{ij}(z). 
\end{aligned}
\end{equation}
\normalsize
The estimation of the above integrals is a key component in the proof of Theorem \ref{thm_bec}, which is deferred to Proposition \ref{prop_trace_approximation} and Proposition \ref{prop_trace_approximation4} in Subsection \ref{sec-61} and Subsection \ref{sec-62} respectively. Using the results therein, along with the dominated convergence theorem, we can conclude that
\begin{equation*}
\begin{aligned}
\lim_{L\to\infty}\Big|\frac{1}{|\Omega_L|}EI_{L}(\Delta)
-\mathcal{C}_{\Delta} \Big|
\leq \frac{1}{\pi}\int_{\mathbf{C}}dm(z)\lim_{L\to\infty}\frac{1}{|\Omega_L|}\frac{\partial \tilde{g}(z)}{\partial \overline{z}}\sum_{k=1}^{4}\sum_{i,j\in\{1,2\}}I^{(k)}_{ij}(z)
=0.
\end{aligned}
\end{equation*}
This concludes the proof of Theorem \ref{thm_bec}.

\subsection{Estimation of the integrals $I_{ij}^{(k)}(z)$ for $k=1, 2, 3$} \label{sec-61}

\begin{proposition}
\label{prop_trace_approximation}
The following hold for $I^{(k)}_{ij}(z)$ for $k=1,2,3, i,j=1,2$:
\begin{equation} \label{eq_trace_approximation_1}
\frac{1}{|\Omega_L|}\cdot \partial_{\overline{z}}\tilde{g}(z)I^{(k)}_{ij}
\quad \text{is uniformly bounded for }z\in\text{supp }\partial_{\overline{z}}\tilde{g}.
\end{equation}
Moreover, as $L \to \infty$, 
\begin{equation} \label{eq_trace_approximation_2}
\frac{1}{|\Omega_L|}\cdot \partial_{\overline{z}}\tilde{g}(z)I^{(k)}_{ij}\to 0
\quad \text{for any }z\in\text{supp }\partial_{\overline{z}}\tilde{g}\,\backslash \mathbf{R}.
\end{equation}
\end{proposition}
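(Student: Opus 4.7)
\textbf{Proof proposal for Proposition \ref{prop_trace_approximation}.}

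The plan is to exploit that the Dirichlet correction $W:=G_L-G_\sharp$ is exponentially small away from $\partial\Omega_L$, while the Green functions themselves are exponentially localized on the diagonal. This will concentrate each $I^{(k)}_{ij}$ in an $\mathcal{O}(L)$ boundary layer, negligible compared with $|\Omega_L|=\mathcal{O}(L^2)$.

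The first step is to establish a pointwise interior decay estimate for $W$: for $x,x'\in\Omega_L$,
\[
|\partial_x^\alpha W(x,x';z)|+|\partial_{x'}^\beta W(x,x';z)|+\sum_{i}|\mathcal{V}_i W(x,x';z)|\leq \frac{C}{|\text{Im }z|^{q'}}\,e^{-c|\text{Im }z|^{p'}\text{dist}(x,\partial\Omega_L)}
\]
for $|\alpha|,|\beta|\leq 1$, uniformly in $x'\in\Omega_L$ (and symmetrically in $x'$). I would prove this by noting that $W(\cdot,x';z)$ solves the homogeneous equation $(\mathcal{L}-z)W(\cdot,x';z)=0$ in $\Omega_L$ with Dirichlet data $-G_\sharp(\cdot,x';z)|_{\partial\Omega_L}$, and then applying a Combes--Thomas argument (in the spirit of Step 1 of the proof of Proposition \ref{prop_G_sharp_decay}) to the Dirichlet realization $\mathcal{L}_{\Omega_L}$, followed by the boundary De Giorgi estimate of Proposition \ref{prop_de_giorgi_local_boundary} to upgrade the weighted $L^2$-decay to a pointwise bound. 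Hypotheses \ref{prop_GL_singularities} and \ref{prop_G_sharp_singularities} will absorb the logarithmic singularities arising from the boundary data.

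Given this estimate, each triple integral $I^{(k)}_{ij}$ reduces to a boundary-layer integral. In each case, two of the three factors involve only $G_\sharp$ or only $G_L$, and by Corollary \ref{corol_estimates_Vi_G} together with Hypothesis \ref{prop_G_sharp_singularities}, their convolution in the non-diagonal variable is integrable with only polynomial blow-up in $|\text{Im }z|^{-1}$, producing a kernel concentrated near the spatial diagonal. The remaining factor ($W$ for $k=3$, $\mathcal{V}_i W$ or $\mathcal{V}_j W$ for $k=1,2$) carries the exponential boundary weight above. Combined with the elementary volume estimate
\[
\int_{\Omega_L}e^{-c|\text{Im }z|^{p'}\text{dist}(x,\partial\Omega_L)}\,dx\leq C_z|\partial\Omega_L|=\mathcal{O}_z(L),
\]
this yields $I^{(k)}_{ij}(z)=\mathcal{O}_z(L)$, so that $|\Omega_L|^{-1}I^{(k)}_{ij}(z)=\mathcal{O}_z(L^{-1})\to 0$ for $z\notin\mathbf{R}$, giving \eqref{eq_trace_approximation_2}. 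For the uniform bound \eqref{eq_trace_approximation_1} on $\text{supp }\partial_{\bar z}\tilde g$, I would instead bound the exponential weight trivially by $1$, which produces $I^{(k)}_{ij}(z)=\mathcal{O}(|\Omega_L||\text{Im }z|^{-q''})$; the polynomial blow-up is then absorbed by $\partial_{\bar z}\tilde g(z)=\mathcal{O}(|\text{Im }z|^\infty)$.

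The main obstacle will be the first step: promoting the Combes--Thomas weighted $L^2$-decay for the Dirichlet resolvent to a pointwise bound for $W$ and $\mathcal{V}_i W$, uniformly as $L\to\infty$. One must ensure that the constants in Proposition \ref{prop_de_giorgi_local_boundary} remain bounded when applied at unit scale inside $\Omega_L$, which should follow from the scale-invariant form of the De Giorgi estimate since only the fixed shape $\partial\Omega$ enters through the local boundary geometry. Once this pointwise bulk estimate of $W$ is in hand, the remaining steps reduce to a routine volume-versus-perimeter comparison.
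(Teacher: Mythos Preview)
Your strategy—show that $W=G_L-G_\sharp$ is exponentially small in the bulk and use this to concentrate each $I_{ij}^{(k)}$ in a boundary layer—is exactly the mechanism behind the paper's proof, but the implementation differs. The paper does not first prove a pointwise interior decay estimate for $W$ via Combes--Thomas. Instead, in Step~1 it writes down the explicit layer-potential representation
\[
W(x,x';z)=\int_{\partial\Omega_L} ds(x''')\, n_{x'''}\!\cdot A(x''')\nabla_{x'''}\overline{G_L(x',x''';\bar z)}\,G_\sharp(x,x''';z),
\]
obtained from Green's formula, and substitutes this \emph{unevaluated} into the triple integral. This converts $I_{ij}^{(k)}$ into a four-fold integral with one variable on $\partial\Omega_L$; the perimeter factor $s(\partial\Omega_L)=\mathcal O(L)$ then appears directly, and the remaining three bulk integrals are handled by splitting near/far from $x'''$ and using Corollary~\ref{corol_estimates_Vi_G}. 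Your route would work too, but note that the cleanest way to get your pointwise bound $|W(x,x')|\lesssim C_z\,e^{-c_z\,\mathrm{dist}(x,\partial\Omega_L)}$ is precisely to estimate this same boundary integral using the exponential decay of $G_\sharp(x,x''';z)$ in $|x-x'''|\ge\mathrm{dist}(x,\partial\Omega_L)$; a direct Combes--Thomas argument is awkward here because $W(\cdot,x')$ is \emph{not} in the range of the Dirichlet resolvent (it has nonzero boundary data), so you would first have to lift the boundary data and control the resulting commutator source supported in a boundary collar. Also, the bound is not quite uniform in $x'$: when $x'$ is within distance $d<1/2$ of $\partial\Omega_L$, the factor $\int_{\partial\Omega_L}|\nabla_{x'''}G_L(x',x''')|\,ds$ picks up a harmless $\log(1/d)$ from Hypothesis~\ref{prop_GL_singularities}.

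One genuine slip: your argument for the uniform bound \eqref{eq_trace_approximation_1}—``bound the exponential weight by $1$ to get $I_{ij}^{(k)}=\mathcal O(|\Omega_L|\,|\mathrm{Im}\,z|^{-q''})$''—does not work as stated. Because of the $L^{1/3}$ term in Hypothesis~\ref{prop_GL_singularities}, one has $\int_{\Omega_L}|G_L(x'',x)|\,dx\sim L^{1/3}$ and likewise for $\mathcal V_jG_L$, so dropping the boundary weight yields only $I_{ij}^{(k)}\lesssim C_z\,|\Omega_L|\,L^{2/3}$, which after dividing by $|\Omega_L|$ still grows like $L^{2/3}$. The fix is simply not to discard the boundary weight: your own convergence estimate already gives $\tfrac{1}{|\Omega_L|}I_{ij}^{(k)}\lesssim |\mathrm{Im}\,z|^{-q''}L^{-1/3}$ (compare the paper's \eqref{eq_trace_approx_final_1}), and for $L\ge1$ this is bounded by $|\mathrm{Im}\,z|^{-q''}$, which $\partial_{\bar z}\tilde g$ absorbs.
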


\begin{proof}
We prove the estimate for $I_{ij}^{(1)}$ only, as the proofs for $I_{ij}^{(k)}$, where $k = 2, 3$, are similar.

{\color{blue}Step 1.} We begin by expanding the integral kernel of $\mathcal{V}_i R(z, \mathcal{L}_{\Omega_L}) - \mathcal{V}_i R(z, \mathcal{L})$ using Green's formula. Let the integral kernel of $R(z, \mathcal{L}_{\Omega_L}) - R(z, \mathcal{L})$ be denoted by
$$G_{L,\sharp}(x, x'; z) := G_{L}(x, x'; z) - G_{\sharp}(x, x'; z).$$
Note that $G_{L,\sharp}$ solves the following problem:
\begin{equation*}
\begin{aligned}
(\mathcal{L}_{\Omega_L}-z)G_{L,\sharp}(x,x^{\prime};z)&=0 \quad \text{in }\Omega_L, \\
G_{L,\sharp}(x,x^{\prime};z)&=-G_{\sharp}(x,x^{\prime};z) \quad \text{on }\partial \Omega_L.  
\end{aligned}
\end{equation*}
Thus we can express $G_{L,\sharp}$ by the Green's formula
\begin{equation} \label{eq_trace_proof_Green}
G_{L,\sharp}(x,x^{\prime};z)
=\int_{\partial \Omega_{L}}ds(x^{\prime\prime\prime})n_{x^{\prime\prime\prime}}\cdot A(x^{\prime\prime\prime})\nabla_{x^{\prime\prime\prime}}\overline{G_{L}(x^{\prime},x^{\prime\prime\prime};\overline{z})}G_{\sharp}(x,x^{\prime\prime\prime};z),
\end{equation}
where $n_{x^{\prime\prime\prime}}$ is the outward unit normal at $x^{\prime\prime\prime}$. Hence
\begin{equation} \label{eq_trace_approx_proof_1}
\begin{aligned}
\big(\mathcal{V}_i G_{L}-\mathcal{V}_i G_{\sharp}\big)(x,x^{\prime})
&=\mathcal{V}_iG_{L,\sharp}(x,x^{\prime};z) \\
&=\int_{\partial \Omega_L}ds(x^{\prime\prime\prime})n_{x^{\prime\prime\prime}}\cdot A(x^{\prime\prime\prime})\nabla_{x^{\prime\prime\prime}}\overline{G_{L}(x^{\prime},x^{\prime\prime\prime};\overline{z})}\mathcal{V}_i G_{\sharp}(x,x^{\prime\prime\prime};z).
\end{aligned}
\end{equation}
Note that the interchange of the differential operator $\mathcal{V}_i$ and the integral is justified by observing \eqref{eq_trace_approx_proof_1} is absolutely integrable for $x,x^{\prime}\in \Omega_L$ and $x\neq x^{\prime}$, as seen from the estimate \eqref{eq_estimates_Vi_G_1}. Since $A(x)$ is uniformly bounded, we conclude
\begin{equation} \label{eq_trace_approx_proof_2}
\big|\big(\mathcal{V}_iG_{L}-\mathcal{V}_i G_{\sharp}\big)(x,x^{\prime})\big|
\leq C\int_{\partial \Omega_L}ds(x^{\prime\prime\prime})|\nabla_{x^{\prime\prime\prime}}G_{L}(x^{\prime},x^{\prime\prime\prime};\overline{z})||\mathcal{V}_i G_{\sharp}(x,x^{\prime\prime\prime};z)|. 
\end{equation}

{\color{blue}Step 2.} Substituting \eqref{eq_trace_approx_proof_2} into \eqref{eq_main_proof_3}, we get
\footnotesize
\begin{equation*}
\begin{aligned}
I_{ij}^{(1)} 
&\lesssim \int_{\partial \Omega_L}ds(x^{\prime\prime\prime})\int_{\Omega_L\times\Omega_L\times\Omega_L}
dxdx^{\prime}dx^{\prime\prime}|\nabla_{x^{\prime\prime\prime}}G_{L}(x^{\prime},x^{\prime\prime\prime};\overline{z})||\mathcal{V}_i G_{\sharp}(x,x^{\prime\prime\prime};z)|\big|\mathcal{V}_j G_{L}(x^{\prime},x^{\prime\prime}) \big|
\big|G_{L}(x^{\prime\prime},x)\big| \\
&=\int_{\partial \Omega_L}ds(x^{\prime\prime\prime})
\int_{\Omega_L}dx |\mathcal{V}_i G_{\sharp}(x,x^{\prime\prime\prime};z)|
\int_{\Omega_L}dx^{\prime}|\nabla_{x^{\prime\prime\prime}}G_{L}(x^{\prime},x^{\prime\prime\prime};\overline{z})|
\int_{\Omega_L}dx^{\prime\prime}|\mathcal{V}_j G_L(x^{\prime},x^{\prime\prime};z) |
| G_L(x^{\prime\prime},x;z)|\\
&=\int_{\partial \Omega_L}ds(x^{\prime\prime\prime})
\sum_{m=1}^{4}J_m(x^{\prime\prime\prime};z),
\end{aligned}
\end{equation*}
\normalsize
where
\footnotesize
\begin{equation} \label{eq_trace_approx_proof_3}
\begin{aligned}
J_1(x^{\prime\prime\prime};z)&=: 
\int_{\Omega_L\cap B_{x^{\prime\prime\prime},1/2}}dx |\mathcal{V}_i G_{\sharp}(x,x^{\prime\prime\prime};z)|
\int_{\Omega_L\cap B_{x^{\prime\prime\prime},1/2}}dx^{\prime}|\nabla_{x^{\prime\prime\prime}}G_{L}(x^{\prime},x^{\prime\prime\prime};\overline{z})|
\int_{\Omega_L}dx^{\prime\prime}|\mathcal{V}_j G_L(x^{\prime},x^{\prime\prime};z) |
| G_L(x^{\prime\prime},x;z)|, \\
J_2(x^{\prime\prime\prime};z)&=:
\int_{\Omega_L\cap B_{x^{\prime\prime\prime},1/2}}dx |\mathcal{V}_i G_{\sharp}(x,x^{\prime\prime\prime};z)|
\int_{\Omega_L\backslash B_{x^{\prime\prime\prime},1/2}}dx^{\prime}|\nabla_{x^{\prime\prime\prime}}G_{L}(x^{\prime},x^{\prime\prime\prime};\overline{z})|
\int_{\Omega_L}dx^{\prime\prime}|\mathcal{V}_j G_L(x^{\prime},x^{\prime\prime};z) |
| G_L(x^{\prime\prime},x;z)|, \\
J_3(x^{\prime\prime\prime};z)&=:
\int_{\Omega_L\backslash B_{x^{\prime\prime\prime},1/2}}dx |\mathcal{V}_i G_{\sharp}(x,x^{\prime\prime\prime};z)|
\int_{\Omega_L\cap B_{x^{\prime\prime\prime},1/2}}dx^{\prime}|\nabla_{x^{\prime\prime\prime}}G_{L}(x^{\prime},x^{\prime\prime\prime};\overline{z})|
\int_{\Omega_L}dx^{\prime\prime}|\mathcal{V}_j G_L(x^{\prime},x^{\prime\prime};z) |
| G_L(x^{\prime\prime},x;z)|, \\
J_4(x^{\prime\prime\prime};z)&=:
\int_{\Omega_L\backslash B_{x^{\prime\prime\prime},1/2}}dx |\mathcal{V}_i G_{\sharp}(x,x^{\prime\prime\prime};z)|
\int_{\Omega_L\backslash B_{x^{\prime\prime\prime},1/2}}dx^{\prime}|\nabla_{x^{\prime\prime\prime}}G_{L}(x^{\prime},x^{\prime\prime\prime};\overline{z})|
\int_{\Omega_L}dx^{\prime\prime}|\mathcal{V}_j G_L(x^{\prime},x^{\prime\prime};z) |
| G_L(x^{\prime\prime},x;z)|. 
\end{aligned}
\end{equation}
\normalsize
We estimate $J_m(x^{\prime\prime\prime};z)$ for $m=1,2,3,4$ in the following four steps.

{\color{blue}Step 3.1.} In this step we prove
\begin{equation} \label{eq_trace_approx_proof_9}
\begin{aligned}
J_1(x^{\prime\prime\prime};z)
\lesssim
\frac{1}{|\text{Im }z|^{4q}}\Big(\frac{1}{|\text{Im }z|^{2p}}+L^{\frac{2}{3}} \Big).
\end{aligned}
\end{equation}
First, by the estimate \eqref{eq_estimates_Vi_G_1}
\footnotesize
\begin{equation} \label{eq_trace_approx_proof_4}
\begin{aligned}
J_1(x^{\prime\prime\prime};z)
&\lesssim
\frac{1}{|\text{Im }z|^{2q}}
\int_{\Omega_L\cap B_{x^{\prime\prime\prime},1/2}}\frac{dx}{|x-x^{\prime\prime\prime}|}
\int_{\Omega_L\cap B_{x^{\prime\prime\prime},1/2}}\frac{dx^{\prime}}{|x^{\prime}-x^{\prime\prime\prime}|}\int_{\Omega_L}dx^{\prime\prime}|\mathcal{V}_j G_L(x^{\prime},x^{\prime\prime};z) |
| G_L(x^{\prime\prime},x;z)|.
\end{aligned}
\end{equation}
\normalsize
Next, \eqref{eq_estimates_Vi_G_1} and \eqref{eq_estimates_Vi_G_2} imply that
\footnotesize
\begin{equation} \label{eq_trace_approx_proof_5}
\begin{aligned}
\int_{\Omega_L}dx^{\prime\prime}|\mathcal{V}_j G_L(x^{\prime},x^{\prime\prime};z) |
| G_L(x^{\prime\prime},x;z)|
&=\int_{\Omega_L\cap B_{x,1/2}}dx^{\prime\prime}|\mathcal{V}_j G_L(x^{\prime},x^{\prime\prime};z) |
| G_L(x^{\prime\prime},x;z)| \\
&\quad+ \int_{(\Omega_L\backslash B_{x,1/2})\cap B_{x^{\prime},1/2}}dx^{\prime\prime}|\mathcal{V}_j G_L(x^{\prime},x^{\prime\prime};z) |
| G_L(x^{\prime\prime},x;z)| \\
&\quad+ \int_{(\Omega_L\backslash B_{x,1/2})\backslash B_{x^{\prime},1/2}}dx^{\prime\prime}|\mathcal{V}_j G_L(x^{\prime},x^{\prime\prime};z) |
| G_L(x^{\prime\prime},x;z)| \\
&\lesssim \frac{1}{|\text{Im }z|^{2q}}
\int_{\Omega_L\cap B_{x,1/2}}dx^{\prime\prime} \big(L^{\frac{1}{3}}+\frac{1}{|x^{\prime\prime}-x^{\prime}|}\big)\big(L^{\frac{1}{3}}+|\log|x^{\prime\prime}-x||\big) \\
&\quad + \frac{1}{|\text{Im }z|^{2q}}
\int_{(\Omega_L\backslash B_{x,1/2})\cap B_{x^{\prime},1/2}}dx^{\prime\prime} 
\big(L^{\frac{1}{3}}+\frac{1}{|x^{\prime\prime}-x^{\prime}|}\big)
e^{-|\text{Im }z|^p |x^{\prime\prime}-x|} \\
&\quad + \frac{1}{|\text{Im }z|^{2q}}
\int_{(\Omega_L\backslash B_{x,1/2})\backslash B_{x^{\prime},1/2}}dx^{\prime\prime} e^{-|\text{Im }z|^p |x^{\prime\prime}-x^{\prime}|}e^{-|\text{Im }z|^p |x^{\prime\prime}-x|}. 
\end{aligned}
\end{equation}
\normalsize
Since $\int_{B_{x,1/2}}dx^{\prime\prime}\frac{\log|x^{\prime\prime}-x|}{|x^{\prime\prime}-x^{\prime}|}$ is uniformly bounded in $L$ for $x,x^{\prime}\in \Omega_L$, it holds that
\footnotesize
\begin{equation} \label{eq_trace_approx_proof_6}
\begin{aligned}
\int_{\Omega_L\cap B_{x,1/2}}dx^{\prime\prime} \big(L^{\frac{1}{3}}+\frac{1}{|x^{\prime\prime}-x^{\prime}|}\big)\big(L^{\frac{1}{3}}+|\log|x^{\prime\prime}-x||\big) 
&\leq \int_{B_{x,1/2}}dx^{\prime\prime} \big(L^{\frac{1}{3}}+\frac{1}{|x^{\prime\prime}-x^{\prime}|}\big)\big(L^{\frac{1}{3}}+|\log|x^{\prime\prime}-x||\big) \lesssim L^{\frac{2}{3}}.
\end{aligned}
\end{equation}
\normalsize
In addition, 
\footnotesize
\begin{equation} \label{eq_trace_approx_proof_7}
\begin{aligned}
\int_{(\Omega_L\backslash B_{x,1/2})\cap B_{x^{\prime},1/2}}dx^{\prime\prime} 
\big(L^{\frac{1}{3}}+\frac{1}{|x^{\prime\prime}-x^{\prime}|}\big)
e^{-|\text{Im }z|^p |x^{\prime\prime}-x|}
\leq \int_{B_{x^{\prime},1/2}}dx^{\prime\prime} 
\big(L^{\frac{1}{3}}+\frac{1}{|x^{\prime\prime}-x^{\prime}|}\big)
\lesssim L^{\frac{1}{3}}.
\end{aligned}
\end{equation}
\begin{equation} \label{eq_trace_approx_proof_8}
\begin{aligned}
\int_{(\Omega_L\backslash B_{x,1/2})\backslash B_{x^{\prime},1/2}}dx^{\prime\prime} e^{-|\text{Im }z|^p |x^{\prime\prime}-x^{\prime}|}e^{-|\text{Im }z|^p |x^{\prime\prime}-x|}
\leq \int_{\mathbf{R}^2}dx^{\prime\prime} e^{-|\text{Im }z|^p |x^{\prime\prime}-x|}
\lesssim \frac{1}{|\text{Im }z|^{2p}}.
\end{aligned}
\end{equation}
\normalsize
Combining \eqref{eq_trace_approx_proof_5}-\eqref{eq_trace_approx_proof_8}, we obtain
\begin{equation}  \label{eq_trace_approx_proof_17}
\int_{\Omega_L}dx^{\prime\prime}|\mathcal{V}_j G_L(x^{\prime},x^{\prime\prime};z) |
| G_L(x^{\prime\prime},x;z)|
\leq \frac{1}{|\text{Im }z|^{2q}}\Big(\frac{1}{|\text{Im }z|^{2p}}+L^{\frac{2}{3}} \Big).
\end{equation}
Then \eqref{eq_trace_approx_proof_4} gives
\begin{equation*} 
\begin{aligned}
J_1(x^{\prime\prime\prime};z)
&\lesssim
\frac{1}{|\text{Im }z|^{4q}}\Big(\frac{1}{|\text{Im }z|^{2p}}+L^{\frac{2}{3}} \Big)
\int_{B_{x^{\prime\prime\prime},1/2}}\frac{dx}{|x-x^{\prime\prime\prime}|} \int_{B_{x^{\prime\prime\prime},1/2}}\frac{dx^{\prime}}{|x^{\prime}-x^{\prime\prime\prime}|} \\
&\lesssim
\frac{1}{|\text{Im }z|^{4q}}\Big(\frac{1}{|\text{Im }z|^{2p}}+L^{\frac{2}{3}} \Big).
\end{aligned}
\end{equation*}
This proves \eqref{eq_trace_approx_proof_9}.

{\color{blue}Step 3.2.} The estimate of $J_2$ is obtained similarly:
\footnotesize
\begin{equation} \label{eq_trace_approx_proof_10}
\begin{aligned}
J_2(x^{\prime\prime\prime};z)
&\lesssim
\frac{1}{|\text{Im }z|^{2q}}
\int_{\Omega_L\cap B_{x^{\prime\prime\prime},1/2}}\frac{dx}{|x-x^{\prime\prime\prime}|}
\int_{\Omega_L\backslash B_{x^{\prime\prime\prime},1/2}}dx^{\prime}e^{-|\text{Im }z|^{p}|x^{\prime}-x^{\prime\prime\prime}|}\int_{\Omega_L}dx^{\prime\prime}|\mathcal{V}_j G_L(x^{\prime},x^{\prime\prime};z) |
| G_L(x^{\prime\prime},x;z)| \\
&\lesssim
\frac{1}{|\text{Im }z|^{4q}}\Big(\frac{1}{|\text{Im }z|^{2p}}+L^{\frac{2}{3}} \Big)
\int_{B_{x^{\prime\prime\prime},1/2}}\frac{dx}{|x-x^{\prime\prime\prime}|}
\int_{\mathbf{R}^2}dx^{\prime}e^{-|\text{Im }z|^{p}|x^{\prime}-x^{\prime\prime\prime}|} \\
&\lesssim \frac{1}{|\text{Im }z|^{4q+2p}}\Big(\frac{1}{|\text{Im }z|^{2p}}+L^{\frac{2}{3}} \Big),
\end{aligned}
\end{equation}
\normalsize
where \eqref{eq_trace_approx_proof_17} is applied to obtain the second inequality.

{\color{blue}Step 3.3.} Similarly, the estimate for $J_3$ is given by:
\begin{equation} \label{eq_trace_approx_proof_11}
\begin{aligned}
J_3(x^{\prime\prime\prime};z)
\lesssim \frac{1}{|\text{Im }z|^{4q+2p}}\Big(\frac{1}{|\text{Im }z|^{2p}}+L^{\frac{2}{3}} \Big).
\end{aligned}
\end{equation}

{\color{blue}Step 3.4.} Finally, the estimate for $J_4$ is obtained as
\footnotesize
\begin{equation} \label{eq_trace_approx_proof_12}
\begin{aligned}
J_4(x^{\prime\prime\prime};z)
&\lesssim
\frac{1}{|\text{Im }z|^{2q}}
\int_{\Omega_L\backslash B_{x^{\prime\prime\prime},1/2}}dx e^{-|\text{Im }z|^p|x-x^{\prime\prime\prime}|}
\int_{\Omega_L\backslash B_{x^{\prime\prime\prime},1/2}}dx^{\prime}e^{-|\text{Im }z|^{p}|x^{\prime}-x^{\prime\prime\prime}|}\int_{\Omega_L}dx^{\prime\prime}|\mathcal{V}_j G_L(x^{\prime},x^{\prime\prime};z) |
| G_L(x^{\prime\prime},x;z)| \\
&\lesssim
\frac{1}{|\text{Im }z|^{4q}}\Big(\frac{1}{|\text{Im }z|^{2p}}+L^{\frac{2}{3}} \Big)
\Big(\int_{\mathbf{R}^2}dx^{\prime}e^{-|\text{Im }z|^{p}|x^{\prime}-x^{\prime\prime\prime}|}\Big)^2 \\
&\lesssim \frac{1}{|\text{Im }z|^{4q+4p}}\Big(\frac{1}{|\text{Im }z|^{2p}}+L^{\frac{2}{3}} \Big).
\end{aligned}
\end{equation}
\normalsize

{\color{blue}Step 4.} Using \eqref{eq_trace_approx_proof_3}-\eqref{eq_trace_approx_proof_9} and \eqref{eq_trace_approx_proof_10}-\eqref{eq_trace_approx_proof_12}, and noting that $|\Omega_L|=\mathcal{O}(L^2)$, we conclude
\begin{equation*}
\frac{1}{|\Omega_L|}I_{ij}^{(1)}(z)\lesssim \frac{1}{|\text{Im }z|^{4q+4p}}\Big(\frac{1}{|\text{Im }z|^{2p}}+L^{\frac{2}{3}} \Big)\cdot \frac{s(\partial \Omega_L)}{L^2},
\end{equation*}
where $s(\partial \Omega_L)$ denotes the length of the boundary. Since $s(\partial \Omega_L)=\mathcal{O}(L)$, we have
\begin{equation} \label{eq_trace_approx_final_1}
\frac{1}{|\Omega_L|}I_{ij}^{(1)}(z)\lesssim \frac{1}{|\text{Im }z|^{4q+4p}}\Big(\frac{1}{L|\text{Im }z|^{2p}}+\frac{1}{L^{\frac{1}{3}}} \Big).
\end{equation}
Using the definition of almost analytic extension in Section 3.1 and the fact that $t^{n}e^{-t}$ is bounded on $t\in [0,\infty)$ for any $n>0$, \eqref{eq_trace_approx_final_1} implies the uniform boundedness \eqref{eq_trace_approximation_1}. Finally, the convergence result \eqref{eq_trace_approximation_2} also follows directly from \eqref{eq_trace_approx_final_1}. The proof is complete.
\end{proof}

\subsection{Estimate of $I_{ij}^{(4)}$} \label{sec-62}
\begin{proposition}
\label{prop_trace_approximation4}
The following hold for $I^{(4)}_{ij}$ for $i,j=1,2$:
\begin{equation}
\frac{1}{|\Omega_L|}\cdot \partial_{\overline{z}}\tilde{g}(z)I^{(k)}_{ij}
\quad \text{is uniformly bounded for }z\in\text{supp }\partial_{\overline{z}}\tilde{g}.
\end{equation}
Moreover, as $L \to \infty$, 
\begin{equation}
\frac{1}{|\Omega_L|}\cdot \partial_{\overline{z}}\tilde{g}(z)I^{(k)}_{ij}\to 0
\quad \text{for any }z\in\text{supp }\partial_{\overline{z}}\tilde{g}\,\backslash \mathbf{R}.
\end{equation}
\end{proposition}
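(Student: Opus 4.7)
The plan is to follow the template of the proof of Proposition \ref{prop_trace_approximation}, exploiting the crucial geometric fact that in $I^{(4)}_{ij}$ the variable $x''$ ranges over $\mathbf{R}^2\setminus\Omega_L$ while $x,x'\in\Omega_L$. Consequently $|x''-x|\geq d(x,\partial\Omega_L)$; when this distance is $\geq 1/2$, the exponential decay in Corollary \ref{corol_estimates_Vi_G} yields
\[
|G_\sharp(x'',x;z)|\lesssim \frac{e^{-|\text{Im }z|^p d(x,\partial\Omega_L)/2}}{|\text{Im }z|^q}\cdot e^{-|\text{Im }z|^p|x''-x|/2},
\]
which extracts an exponential decay in the distance of $x$ to the boundary and localizes the outer $x$-integral to a boundary layer of thickness $O(|\text{Im }z|^{-p})$ and area $O(L)$. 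This is the essential improvement over the case $k=1,2,3$ treated in Proposition \ref{prop_trace_approximation}.

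First, for fixed $x,x'\in\Omega_L$ with $d(x,\partial\Omega_L)\geq 1/2$, I would estimate the inner integral over $x''$ by splitting $\mathbf{R}^2\setminus\Omega_L$ at $|x''-x'|=1/2$: on $B_{x',1/2}$ the factor $|\mathcal{V}_j G_\sharp(x',x'';z)|$ carries a $\log$-type or $|x''-x'|^{-1}$ singularity which is locally integrable, while outside both $|\mathcal{V}_j G_\sharp|$ and $|G_\sharp|$ decay exponentially and a standard convolution bound applies. Factoring out $e^{-|\text{Im }z|^p d(x,\partial\Omega_L)/2}$ as above gives
\[
\int_{\mathbf{R}^2\setminus\Omega_L}|\mathcal{V}_j G_\sharp(x',x'';z)|\,|G_\sharp(x'',x;z)|\,dx''\lesssim \frac{e^{-|\text{Im }z|^p d(x,\partial\Omega_L)/2}}{|\text{Im }z|^N}
\]
for some $N$ depending on $p,q$. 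The complementary subcase $d(x,\partial\Omega_L)<1/2$, where the extracted exponential is unavailable, is treated directly using the logarithmic bound in Corollary \ref{corol_estimates_Vi_G}; since this region has area $O(L)$ it still contributes at most $L/|\text{Im }z|^N$ after the $x,x',x''$ integrations.

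Second, the remaining integration against $|\mathcal{V}_i G_\sharp(x,x';z)|$ over $x'\in\Omega_L$ is uniformly bounded in $x$ by $C/|\text{Im }z|^N$ after the standard split at $|x-x'|=1/2$, exactly as in Step 3 of Proposition \ref{prop_trace_approximation}, with the simplification that $G_\sharp$ (unlike $G_L$) carries no $L^{1/3}$ penalty. Third, a normal-coordinate computation near the smooth curve $\partial\Omega_L$ gives $\int_{\Omega_L}e^{-|\text{Im }z|^p d(x,\partial\Omega_L)/2}\,dx\lesssim L/|\text{Im }z|^p$ (perimeter $O(L)$ times layer width $O(|\text{Im }z|^{-p})$). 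Combining these yields $I^{(4)}_{ij}(z)\lesssim L/|\text{Im }z|^N$, hence $I^{(4)}_{ij}(z)/|\Omega_L|\lesssim 1/(L\,|\text{Im }z|^N)$. Uniform boundedness on $\operatorname{supp}\partial_{\bar z}\tilde g$ follows because $\partial_{\bar z}\tilde g=O(|\text{Im }z|^\infty)$ absorbs the polynomial singularity, and pointwise convergence to $0$ as $L\to\infty$ for $z\notin\mathbf{R}$ is immediate from the $1/L$ decay.

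The main obstacle is the bookkeeping in the first step: the splitting must be arranged so that the boundary-distance exponential is extracted before handling the singularity at $x''=x'$, and the boundary-layer subcase $d(x,\partial\Omega_L)<1/2$ must be carried through with only the logarithmic/$|x''-x|^{-1}$ bounds. Apart from these, the computation is a straightforward variant of Steps 3--4 of Proposition \ref{prop_trace_approximation}.
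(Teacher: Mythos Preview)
Your proposal is correct but organizes the estimate differently from the paper. The paper first integrates out the $x$-variable: by Fubini one bounds $\int_{\Omega_L}|\mathcal{V}_i G_\sharp(x,x';z)||G_\sharp(x'',x;z)|\,dx\lesssim |\text{Im }z|^{-2p-2q}$ uniformly in $x',x''$ (exactly as in \eqref{eq_trace_approx_proof_17} but for $G_\sharp$), leaving the double integral $\int_{\Omega_L}dx'\int_{\mathbf{R}^2\setminus\Omega_L}dx''|\mathcal{V}_jG_\sharp(x',x'';z)|$; this is then split according to whether $x''$ lies in the $\sqrt{L}$-collar $N(\Omega_L,\sqrt{L})$, yielding an overall decay of order $L^{-1/2}$ after division by $|\Omega_L|$. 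Your argument instead keeps the $x$-integral to the end and extracts the factor $e^{-|\text{Im }z|^p d(x,\partial\Omega_L)/2}$ from $|G_\sharp(x'',x;z)|$, which after the coarea/layer computation $\int_{\Omega_L}e^{-c\,d(x,\partial\Omega_L)}dx\lesssim L/c$ gives the sharper rate $L^{-1}$. Both approaches are sound; the paper's $\sqrt{L}$-collar split avoids the tubular-neighbourhood estimate and the boundary-layer subcase $d(x,\partial\Omega_L)<1/2$ that you have to treat separately, while your approach is more geometric and yields a cleaner final bound. Either suffices, since only qualitative convergence and absorption of the $|\text{Im }z|^{-N}$ singularity by $\partial_{\bar z}\tilde g$ are needed.
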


\begin{proof}
{\color{blue}Step 1.} First we claim that
\begin{equation} \label{eq_trace_approx_proof_13}
I_{ij}^{(4)}\lesssim
\frac{1}{|\text{Im }z|^{2p+2q}} \int_{\Omega_L}dx^{\prime}
\int_{\mathbf{R}^2\backslash\Omega_{L}}dx^{\prime\prime} |\mathcal{V}_j G_{\sharp}(x^{\prime},x^{\prime\prime};z)|.
\end{equation}
Indeed, using Fubini's theorem, 
\begin{equation*}
\begin{aligned}
I_{ij}^{(4)}
&=\int_{\Omega_L}dx^{\prime}
\int_{\mathbf{R}^2\backslash\Omega_{L}}dx^{\prime\prime} |\mathcal{V}_j G_{\sharp}(x^{\prime},x^{\prime\prime};z) |
\int_{\Omega_L}dx |\mathcal{V}_i G_{\sharp}(x,x^{\prime};z) |
|G_{\sharp}(x^{\prime\prime},x;z)|. \end{aligned}
\end{equation*}
Then \eqref{eq_trace_approx_proof_13} follows by the following estimate
\begin{equation*}
\int_{\Omega_L}dx  |\mathcal{V}_i G_{\sharp}(x,x^{\prime};z) |
|G_{\sharp}(x^{\prime\prime},x;z)|
\lesssim  \frac{1}{|\text{Im }z|^{2p+2q}},
\end{equation*}
which is derived similarly as \eqref{eq_trace_approx_proof_17}, except that it uses the estimates of $G_\sharp$ in Corollary \ref{corol_estimates_Vi_G} instead of $G_L$ in the process.

{\color{blue}Step 2.} We decompose the integral in \eqref{eq_trace_approx_proof_13} as follows: 
\begin{equation} \label{eq_trace_approx_proof_14}
\begin{aligned}
\int_{\Omega_L}dx^{\prime}
\int_{\mathbf{R}^2\backslash\Omega_{L}}dx^{\prime\prime} |\mathcal{V}_j G_{\sharp}(x^{\prime},x^{\prime\prime};z) |
&= \int_{\Omega_L}dx^{\prime}
\int_{\mathbf{R}^2\backslash N(\Omega_L,\sqrt{L})}dx^{\prime\prime} |\mathcal{V}_j G_{\sharp}(x^{\prime},x^{\prime\prime};z) | \\
&\quad +\int_{\Omega_L}dx^{\prime}
\int_{(\mathbf{R}^2\backslash\Omega_{L})\cap N(\Omega_L,\sqrt{L})}dx^{\prime\prime} |\mathcal{V}_j G_{\sharp}(x^{\prime},x^{\prime\prime};z)|\\
&:=J_1+J_2,
\end{aligned}
\end{equation}
where $N(\Omega_L,\sqrt{L}):=\{x:\, \text{dist}(x,\overline{\Omega_L})\leq \sqrt{L}\}$. $J_1$ and $J_2$ are estimated separately as follows.

Estimate of $J_1$: Since $|x^{\prime\prime}-x^{\prime}|>\sqrt{L}>\frac{1}{2}$ when $x^{\prime\prime}\in \mathbf{R}^2\backslash N(\Omega_L,\sqrt{L})$ and $x^{\prime}\in \Omega_L$, we can apply the exponential decay of Green function from Corollary \ref{corol_estimates_Vi_G} to obtain
\footnotesize
\begin{equation} \label{eq_trace_approx_proof_15}
\begin{aligned}
J_1
&\leq \int_{\Omega_L}dx^{\prime}
\int_{\mathbf{R}^2\backslash B_{x^{\prime},\sqrt{L}}}dx^{\prime\prime} |\mathcal{V}_j G_{\sharp}(x^{\prime},x^{\prime\prime};z) |  \frac{1}{|\text{Im }z|^q}\int_{\Omega_L}dx^{\prime}
\int_{\mathbf{R}^2\backslash B_{x^{\prime},\sqrt{L}}}dx^{\prime\prime} e^{-|\text{Im }z|^p|x^{\prime}-x^{\prime\prime}|} \\
&\lesssim \frac{1}{|\text{Im }z|^{q+2p}}\big(|\text{Im }z|^{p}\sqrt{L}+1\big)e^{-|\text{Im }z|^{p}\sqrt{L}}\int_{\Omega_L}dx^{\prime} \\
&\lesssim \frac{1}{|\text{Im }z|^{q+2p}}\big(|\text{Im }z|^{p}\sqrt{L}+1\big)e^{-|\text{Im }z|^{p}\sqrt{L}}\cdot L^2.
\end{aligned}
\end{equation}
\normalsize
Estimate of $J_2$: By Fubini's theorem
\begin{align*}
J_2&= \int_{(\mathbf{R}^2\backslash\Omega_{L})\cap N(\Omega_L,\sqrt{L})}dx^{\prime \prime}
\int_{\Omega_L}dx^{\prime} |\mathcal{V}_j G_{\sharp}(x^{\prime},x^{\prime\prime};z)| \\
&=\int_{(\mathbf{R}^2\backslash\Omega_{L})\cap N(\Omega_L,\sqrt{L})}dx^{\prime\prime}
\Big[
\int_{\Omega_L\cap B_{x^{\prime\prime},1/2}}dx^{\prime} |\mathcal{V}_j G_{\sharp}(x^{\prime},x^{\prime\prime};z)|
+\int_{\Omega_L\backslash B_{x^{\prime\prime},1/2}}dx^{\prime} |\mathcal{V}_j G_{\sharp}(x^{\prime},x^{\prime\prime};z)| \Big].
\end{align*}
Applying the estimate of the Green function in Corollary \ref{corol_estimates_Vi_G}, we can derive that
\begin{equation} \label{eq_trace_approx_proof_16}
\begin{aligned}
J_2
&\lesssim \frac{1}{|\text{Im }z|^{q}}
\int_{N(\Omega_L,\sqrt{L})\backslash \Omega_L}dx^{\prime\prime}
\Big[
\int_{ B_{x^{\prime\prime},1/2}} \frac{dx^{\prime}}{|x^{\prime}-x^{\prime\prime}|}
+\int_{\mathbf{R}^2}dx^{\prime}e^{-|\text{Im }z|^p|x^{\prime}-x^{\prime\prime}|} \Big] \\
&\lesssim \frac{1}{|\text{Im }z|^{q+2p}}
\int_{N(\Omega_L,\sqrt{L})\backslash \Omega_L}dx^{\prime\prime} 
\lesssim \frac{1}{|\text{Im }z|^{q+2p}}L^{\frac{3}{2}},
\end{aligned}
\end{equation}
where $\text{area}(N(\Omega_L,\sqrt{L})\backslash \Omega_L)=\mathcal{O}(\sqrt{L}\cdot s(\partial \Omega_L))=\mathcal{O}(L^{\frac{3}{2}})$ is used in the last inequality above. 

{\color{blue}Step 3.} We conclude from \eqref{eq_trace_approx_proof_13}-\eqref{eq_trace_approx_proof_16} that
\begin{equation*}
\frac{1}{|\Omega_L|}I_{ij}^{(4)}\lesssim
\frac{1}{|\text{Im }z|^{4p+3q}}\Big[\big(|\text{Im }z|^{p}\sqrt{L}+1\big)e^{-|\text{Im }z|^{p}\sqrt{L}}
+\frac{1}{\sqrt{L}}\Big].
\end{equation*}
This, together with the argument after \eqref{eq_trace_approx_final_1}, gives the desired estimates of $I_{ij}^{(4)}$ as claimed in Proposition \ref{prop_trace_approximation}.
\end{proof}

\bigskip
\textbf{Data Availability Statement}: The authors confirm that the current paper does not
have associated data.

\appendix

\section{Appendix: Physical interpretation of Photonic bulk-edge correspondence}
\setcounter{equation}{0}
\setcounter{subsection}{0}
\setcounter{theorem}{0}
\renewcommand{\theequation}{A.\arabic{equation}}
\renewcommand{\thesubsection}{A.\arabic{subsection}}
\renewcommand{\thetheorem}{A.\arabic{theorem}}

This Appendix provides a physical interpretation of the bulk-edge correspondence in Theorem \ref{thm_bec}, drawing on the principles of linear response theory and energy conservation. We begin by introducing the Hamiltonian formulation of electromagnetism to establish the framework for our discussion. Next, we apply linear response theory to demonstrate that the gap Chern number represents the bulk conductivity of photonic energy in response to external excitation, analogous to the Hall conductivity in the quantum Hall effect (QHE). Finally, we show that the bulk-edge correspondence naturally follows as a consequence of energy conservation. 

The presentation in this section is based on the works in \cite{mario23shaking,bhat2006hamiltonian,tong2016lecturesquantumhalleffect,landau2013statistical}. We also note the reference \cite{de2017symmetry_classification} for a systematic study of (bulk) topological photonic crystals based on the Hamiltonian formulation similar to that of this Appendix, with an emphasis on the symmetry classification of (bulk) mediums.

\subsection{Quantum field formulation of electromagnetism}

To establish the framework for the discussion of bulk-edge correspondence in photonic structures, we consider the unpolarized electromagnetic field $\bm{F}=(\bm{E},\bm{H})^{T}$,  rather than focusing solely on 
the $TE$ polarized wave as in the main text. The field $\bm{F}$ is governed by the Maxwell equation, which can be recast into a Schrödinger-like form: 
\begin{equation} \label{eq_phys_1}
i\partial_{t}\bm{F}=\mathfrak{M}\bm{F}
\end{equation}
where the operator $\mathfrak{M}$ is defined as
\begin{equation} \label{eq_phys_10}
\mathfrak{M}=M^{-\frac{1}{2}}
\begin{pmatrix}
0 & i\nabla\times \bm{1}_{3\times 3} \\ -i\nabla\times \bm{1}_{3\times 3} & 0
\end{pmatrix}M^{-\frac{1}{2}},\quad 
M=M(\bm{r})=
\begin{pmatrix}
\epsilon(\bm{r}) & 0 \\ 0 & \mu(\bm{r})
\end{pmatrix}.
\end{equation}
We assume the medium is non-dispersive and lossless, meaning the material response matrix $M$ is time-independent and Hermitian. For a more general treatment of dispersive or lossy material, see \cite{mario23shaking}. We also neglect the bi-anisotropy of the medium by assuming $M$ is diagonal in block form, as shown in \eqref{eq_phys_10}. Additionally, we assume the system is periodic in $xy$-plane satisfying $M(\bm{r}+\bm{e}_i)=M(\bm{r})$ for $i=1,2$, where $e_i$'s are lattice vectors of the periodic structure, and is closed in $z$-direction; specifically, this is achieved by bounding the periodic structure by Perfect Electric Conductor walls at $z=a$ and $z=b$. The $3D$ unit cell is denoted as $Y\times (a,b)$ (recall that $Y$ is the $2D$ cell as in the main text).

We present the discussion of linear response theory within the framework of quantum field theory. To achieve this, we quantize the electromagnetic field, $\bm{F}$, following the standard procedure: each normal mode of the classical problem is associated with a quantum harmonic oscillator, as detailed in\cite{bhat2006hamiltonian,peskin2018introduction}
\begin{equation*}
\bm{F}(\bm{r},t)=\int_{T^2}d\bm{\kappa}\sum_{w_{n,\bm{\kappa}}>0}\Big(e^{-iw_{n,\bm{\kappa}}t}a^{\dagger}_{n,\bm{\kappa}}\bm{F}_{n,\bm{\kappa}}(\bm{r})+e^{iw_{n,\bm{\kappa}}t}a_{n,\bm{\kappa}}\overline{\bm{F}_{n,\bm{\kappa}}(\bm{r})} \Big).
\end{equation*}
Here $a_{n,\bm{\kappa}},a^{\dagger}_{n,\bm{\kappa}}$ denote the standard annihilation and creation operators with the $[a_{n,\bm{\kappa}},a^{\dagger}_{n^{\prime},\bm{\kappa}^{\prime}}]=\delta_{n,n^{\prime}}\delta_{\bm{\kappa}\bm{\kappa}^{\prime}}$. $\bm{F}_{n,\bm{\kappa}}$ denotes the normal mode of Maxwell equation of frequency $w_{n,\bm{\kappa}}$ with the normalization condition $\int_{Y\times (a,b)}d\bm{r}\overline{\bm{F}^{T}_{n,\bm{\kappa}}}(\bm{r})\bm{F}_{n,\bm{\kappa}}(\bm{r})=1$. The Bloch-momentum $\bm{\kappa}$ ranges over the reciprocal cell $T^2=\mathbf{R}^2/\mathbf{Z}^2$. Acting on the vacuum state $|0\rangle$, $\bm{F}(\bm{r},t)$ creates the field at spacetime $(\bm{r},t)$, given by
\begin{equation*}
|\bm{r},t\rangle :=\bm{F}(\bm{r},t)|0\rangle=\int_{T^2}d\bm{\kappa}\sum_{w_{n,\bm{\kappa}}>0} e^{-iw_{n,\bm{\kappa}}t}\bm{F}_{n,\bm{\kappa}}(\bm{r})|n,\bm{\kappa}\rangle.
\end{equation*}

The field in the frequency domain is 
$$
\bm{F}(\bm{r},w)=2\pi\int_{T^2}d\bm{\kappa}\sum_{w_{n,\bm{\kappa}}>0}\big(\delta(w+w_{n,\bm{\kappa}})a^{\dagger}_{n,\bm{\kappa}}\bm{F}_{n,\bm{\kappa}}+\delta(w-w_{n,\bm{\kappa}})a_{n,\bm{\kappa}}\overline{\bm{F}_{n,\bm{\kappa}}} \big).
$$ 
For a given cutoff frequency or threhold energy $\omega_0$, 
integrating $e^{iwt}\bm{F}(\bm{r},w)$ over the frequency band $(0,w_0)$ yields the following partial field with cutoff frequency $\omega_0$:
\begin{equation} \label{eq_phys_partial_field}
\bm{F}(\bm{r},t;w_0)=\int_{T^2}d\bm{\kappa}\sum_{0<w_{n,\bm{\kappa}}<w_0}\Big(e^{-iw_{n,\bm{\kappa}}t}a^{\dagger}_{n,\bm{\kappa}}\bm{F}_{n,\bm{\kappa}}(\bm{r})+e^{iw_{n,\bm{\kappa}}t}a_{n,\bm{\kappa}}\overline{\bm{F}_{n,\bm{\kappa}}(\bm{r})} \Big). 
\end{equation}
The observable of interest is the Poynting vector $\bm{S}=(S_1, S_2, S_3)^{T}$, which measures the energy flux of electromagnetic fields. Classically, the Poynting vector is defined as $\bm{S}(\bm{r},t)=\text{Re}(\bm{E}(\bm{r},t)\times \overline{\bm{H}}(\bm{r},t))$. In quantum field theory, it's naturally associated with the operators $\hat{S}_j$ as follows
\begin{equation*}
S_{j}(\bm{r},t)=\frac{1}{2}\langle \bm{r},t|\hat{S}_j|\bm{r},t\rangle
=\frac{1}{2}\langle 0|\bm{F}^{\dagger}(\bm{r},t)\hat{S}_j\bm{F}(\bm{r},t)|0\rangle ,
\end{equation*}
where $\hat{S}_j$ is defined as
\begin{equation*}
\hat{S}_j:=
\begin{pmatrix}
0 & \bm{e}_{j}\times \bm{1}_{3\times 3} \\ -\bm{e}_{j}\times \bm{1}_{3\times 3} & 0     
\end{pmatrix}.
\end{equation*}
We note that the Poynting vector can be regarded as the photonic counterpart of the electric current density $\bm{j}$. This analogy is evident from the conservation law 
$$
\nabla\cdot \bm{S}+\partial_t W=0 ,
$$
where $W$ denotes the electromagnetic energy density. This conservation law parallels charge conservation in electronic systems
$$
\nabla\cdot \bm{j}+\partial_t\rho=0.
$$ 
This analogy forms the basis for interpreting the photonic Hall effect. 

Motivated by this analogy, we study the modified vector $\hat{S}_k^{m}$ defined below as the natural counterpart of the momentum operator $J_k=i[x_k,H]$
in electronic systems
\begin{equation*}
\hat{S}_k^{m}:=i[x_k,\mathfrak{M}]=M^{-\frac{1}{2}}\hat{S}_k M^{-\frac{1}{2}}. 
\end{equation*}

\subsection{Interpreting the gap Chern number using the linear response theory}
In this section, we demonstrate that the gap Chern number represents the bulk conductivity of photonic energy in response to external excitation, analogous to the quantum Hall effect. We assume that the unperturbed photonic structure exhibits a spectral band gap, with $w_0$ lying within this gap. To proceed, we consider the partial field with a cutoff frequency $w_0$, denoted as $\bm{F}(\bm{r}, t; w_0)|0\rangle$.  Specifically, we focus on the spatially averaged energy flux of this partial field
\begin{equation} \label{eq_phys_2}
\langle \hat{S}_k^{m}(t) \rangle_{w_0}
=\int_{Y\times (a,b)}d\bm{r}\langle 0|\bm{F}^{\dagger}(\bm{r},t;w_0)\hat{S}^{m}_j(t)\bm{F}(\bm{r},t;w_0)|0\rangle. 
\end{equation}
where the observable evolves as $\hat{S}^{m}_j(t)=e^{i\mathcal
L^{\times}t}\hat{S}^{m}_je^{-i\mathcal
L^{\times}t}$ in the interaction picture \cite{tong2016lecturesquantumhalleffect}. We are interested in the fluctuation of $\langle \hat{S}_k^{m} \rangle_{av}$ when the field is coupled with external perturbation. Following a general principle of interaction theory, for the observable $\hat{S}_k^{m}$, there exists an operator $\hat{A}_k$ such that $\hat{A}_k\hat{S}_k^{m}$ represents the peturbation to $\mathfrak{M}$ induced by the external interaction; see $\S 75$ of \cite{landau2013statistical}. Note that in the quantum Hall effect, $\delta H=-\bm{J}\cdot \bm{A}$ where $\bm{A}$ denotes the vector potential. In our photonic setting, the perturbed field is governed by
\begin{equation} \label{eq_phys_3}
i\partial_{t}\bm{F}_{\delta}=\big(\mathfrak{M}+\delta\mathfrak{M}(t)\big)\bm{F}_{\delta}
\quad \text{with} \quad
\delta\mathfrak{M}(t):=\sum_{k=1}^{2}\hat{A}_k(t)\hat{S}_k^{m}(t). 
\end{equation}
Here, we've assumed that the perturbed field has only $x$- and $y$- components (since our system is essentially 2D) and that the interaction operator $\hat{A}_k(t)$ is time-dependent. A concrete example of such interaction is provided in \cite{mario23shaking}, where $\hat{A}_k$ describes the mechanical driving of a photonic system by periodically and slowly shaking.  This mechanical driving can also be mimicked by purely electrical means using spacetime-modulated (Floquet) materials, as discussed in \cite{mario23shaking}.

Now we calculate the expectation \eqref{eq_phys_2} in response to perturbations. In the interaction picture, the perturbed field evolves as $\bm{F}_{\delta}(\bm{r},t;w_0)=T\exp\big(-i\int_{t_0}^{t}\delta\mathfrak{M}(t^{\prime})dt^{\prime} \big)\bm{F}(\bm{r},t;w_0)$, where $T$ denotes the time-ordering operator. Taking it inside \eqref{eq_phys_2} and extracting the linear-order approximation gives
\begin{equation*}
\begin{aligned}
\langle \hat{S}_k^{m}(t) \rangle_{w_0}
&=\int_{Y\times (a,b)}d\bm{r}\langle 0|(\bm{F}_{\delta})^{\dagger}(\bm{r},t;w_0)\hat{S}^{m}_j(t)\bm{F}_{\delta}(\bm{r},t;w_0)|0\rangle \\
&\approx \int_{Y\times (a,b)}d\bm{r}\langle 0|\bm{F}^{\dagger}(\bm{r},t;w_0)\Big(\hat{S}^{m}_j(t)+i\int_{t_0}^{t}dt^{\prime}[\delta\mathfrak{M}(t^{\prime}),\hat{S}_k^{m}(t)] \Big)\bm{F}(\bm{r},t;w_0)|0\rangle.
\end{aligned}
\end{equation*}
Assuming the perturbation is turned on at $t_0=-\infty$,
the fluctuation of energy flux due to the perturbation is
\begin{equation} \label{eq_phys_4}
\langle \Delta\hat{S}_k^{m}(t) \rangle_{w_0}
=\int_{Y\times (a,b)}d\bm{r}\langle 0|\bm{F}^{\dagger}(\bm{r},t;w_0)\Big(i\int_{-\infty}^{t}dt^{\prime}[\delta\mathfrak{M}(t^{\prime}),\hat{S}_k^{m}(t)]\Big)\bm{F}(\bm{r},t;w_0)|0\rangle.
\end{equation}
Now we assume the perturbed field is along one axis, i.e. $ \hat{A}_k(t)=\frac{\mathcal{A}_k\delta_{k\ell}}{i\Omega}e^{-i\Omega t}$ for a fixed $\ell\in\{1,2\}$. We investigate the response of \eqref{eq_phys_4} to $\mathcal{A}_\ell$ as $\Omega\to 0$ (mimicking the response of electric current to the electric field $\bm{E}=-\partial_{t}\bm{A}$ in the quantum Hall effect \cite{tong2016lecturesquantumhalleffect}). Assuming $\mathcal{A}_\ell$ is scalar-valued (hence commutes with $\hat{S}_{\ell}^{m}(t)$), \eqref{eq_phys_4} becomes
\begin{equation*}
\langle \Delta\hat{S}_k^{m}(t) \rangle_{w_0}
=\frac{\mathcal{A}_\ell}{\Omega}\int_{-\infty}^{t}dt^{\prime}e^{-i\Omega t^{\prime}}\int_{Y\times (a,b)}d\bm{r}\langle 0|\bm{F}^{\dagger}(\bm{r},t;w_0)\Big([\hat{S}_{\ell}^{m}(t^{\prime}),\hat{S}_k^{m}(t)]\Big)\bm{F}(\bm{r},t;w_0)|0\rangle.
\end{equation*}
Due to the time-translation invariance of \eqref{eq_phys_3}, this expression depends only on the time difference $t^{\prime\prime}=t-t^{\prime}$, and simplifies to
\begin{equation*}
\langle \Delta\hat{S}_k^{m}(t) \rangle_{w_0}
=\frac{\mathcal{A}_\ell e^{-i\Omega t}}{\Omega}\int_{0}^{\infty}dt^{\prime\prime}e^{-i\Omega t^{\prime\prime}}\int_{Y\times (a,b)}d\bm{r}\langle 0|\bm{F}^{\dagger}(\bm{r},t^{\prime\prime};w_0)\Big([\hat{S}_{\ell}^{m}(0),\hat{S}_k^{m}(t^{\prime\prime})]\Big)\bm{F}(\bm{r},t^{\prime\prime};w_0)|0\rangle.
\end{equation*}
Substituting \eqref{eq_phys_partial_field} into the above expression leads to
\footnotesize
\begin{equation} \label{eq_phys_5}
\langle \Delta\hat{S}_k^{m}(t) \rangle_{w_0}
=\frac{\mathcal{A}_\ell e^{-i\Omega t}}{\Omega}\int_{0}^{\infty}dt^{\prime\prime}e^{-i\Omega t^{\prime\prime}}
\int_{T^2}d\bm{\kappa}\sum_{0<w_{n,\bm{\kappa}}<w_0}\int_{Y\times (a,b)}d\bm{r}\overline{\bm{F}}^{T}_{n,\bm{\kappa}}(\bm{r}) \Big([\hat{S}_{\ell}^{m}(0),\hat{S}_k^{m}(t^{\prime\prime})]\Big)   \bm{F}_{n,\bm{\kappa}}(\bm{r}).
\end{equation}
\normalsize
\eqref{eq_phys_5} indicates that the perturbation at frequency $\Omega$ induces fluctuation with the same oscillating frequency. Moreover, only the off-diagonal terms in \eqref{eq_phys_5} survive in the commutator, which is analogous to the survival of transverse conductivity $\sigma_{xy}$ in the quantum Hall effect.

Dividing both sides of \eqref{eq_phys_5} by the driving force $\mathcal{A}_\ell e^{-i\Omega t}$, we obtain the ``photonic Hall conductance'' \cite{mario23shaking}
\begin{equation*}
\sigma_{k\ell}(\Omega)
=\int_{0}^{\infty}\frac{dt^{\prime\prime}e^{-i\Omega t^{\prime\prime}}}{\Omega}
\int_{T^2}d\bm{\kappa}\sum_{0<w_{n,\bm{\kappa}}<w_0}\int_{Y\times (a,b)}d\bm{r}\overline{\bm{F}}^{T}_{n,\bm{\kappa}}(\bm{r}) \Big([\hat{S}_{\ell}^{m}(0),\hat{S}_k^{m}(t^{\prime\prime})]\Big)   \bm{F}_{n,\bm{\kappa}}(\bm{r}).
\end{equation*}
Note that we only focus on the conductivity in $xOy$ plane, i.e. $k,\ell\in\{1,2\}$. Using the completeness of normal modes $\bm{F}_{n,\bm{\kappa}}$ and the evolution operator $\hat{S}^{m}_j(t)=e^{i\mathcal
L^{\times}t}\hat{S}^{m}_je^{-i\mathcal
L^{\times}t}$, we further get
\footnotesize
\begin{equation*}
\begin{aligned}
\sigma_{k\ell}(\Omega)
&=\int_{0}^{\infty}\frac{dte^{-i\Omega t}}{\Omega}
\int_{T^2}d\bm{\kappa}\sum_{\substack{0<w_{n,\bm{\kappa}}<w_0 \\ w_{n^{\prime},\bm{\kappa}}>0}}
\Big\{\big(\hat{S}_{\ell}^{m}\bm{F}_{n^{\prime},\bm{\kappa}},\bm{F}_{n,\bm{\kappa}} \big)_{3D }
\big(\hat{S}_k^{m}(t)\bm{F}_{n,\bm{\kappa}},\bm{F}_{n^{\prime},\bm{\kappa}} \big)_{3D } \\
&\quad\quad\quad\quad\quad\quad\quad\quad\quad\quad\quad\quad\quad\quad\quad -\big(\hat{S}_{k}^{m}(t)\bm{F}_{n^{\prime},\bm{\kappa}},\bm{F}_{n,\bm{\kappa}} \big)_{3D }
\big(\hat{S}_{\ell}^{m}\bm{F}_{n,\bm{\kappa}},\bm{F}_{n^{\prime},\bm{\kappa}} \big)_{3D }
\Big\} \\
&=\int_{0}^{\infty}\frac{dte^{-i\Omega t}}{\Omega}
\int_{T^2}d\bm{\kappa}\sum_{\substack{0<w_{n,\bm{\kappa}}<w_0 \\ w_{n^{\prime},\bm{\kappa}}>0}}
\Big\{
e^{-i(w_{n,\bm{\kappa}}-w_{n^{\prime},\bm{\kappa}})t}
\big(\hat{S}_k^{m}\bm{F}_{n,\bm{\kappa}},\bm{F}_{n^{\prime},\bm{\kappa}} \big)_{3D }
\big(\bm{F}_{n^{\prime},\bm{\kappa}},\hat{S}_{\ell}^{m}\bm{F}_{n,\bm{\kappa}} \big)_{3D } \\
&\quad\quad\quad\quad\quad\quad\quad\quad\quad\quad\quad\quad\quad\quad\quad
-e^{i(w_{n,\bm{\kappa}}-w_{n^{\prime},\bm{\kappa}})t}\big(\hat{S}_{k}^{m}\bm{F}_{n^{\prime},\bm{\kappa}},\bm{F}_{n,\bm{\kappa}} \big)_{3D }
\big(\bm{F}_{n,\bm{\kappa}},\hat{S}_{\ell}^{m}\bm{F}_{n^{\prime},\bm{\kappa}} \big)_{3D }
\Big\} ,
\end{aligned}
\end{equation*}
\normalsize
where the bracket $(\bm{F},\bm{G})_{3D }=\int_{Y\times (a,b)}d\bm{r}\overline{\bm{G}}^{T}(\bm{r})\bm{F}(\bm{r})$ denotes the $3D$ inner product between vectorial functions. Obviously the above quantity vanishes for $k=\ell$; hence the only interesting ones are $\sigma_{xy}$ and $\sigma_{yx}$ (transverse Hall conductivity). Recall that $w_0$ lies in a band gap, 
\begin{equation*}
\begin{aligned}
\sigma_{xy}(\Omega)=-\sigma_{yx}(\Omega)
=\frac{i}{\Omega}\int_{T^2}d\bm{\kappa}\sum_{\substack{0<w_{n,\bm{\kappa}}<w_0 \\ w_{n^{\prime},\bm{\kappa}}>w_0}}\Big\{&
\frac{\big(\hat{S}_x^{m}\bm{F}_{n,\bm{\kappa}},\bm{F}_{n^{\prime},\bm{\kappa}} \big)_{3D }
\big(\bm{F}_{n^{\prime},\bm{\kappa}},\hat{S}_{y}^{m}\bm{F}_{n,\bm{\kappa}} \big)_{3D }}{\Omega+w_{n,\bm{\kappa}}-w_{n^{\prime},\bm{\kappa}}} \\
&-\frac{\big(\hat{S}_{x}^{m}\bm{F}_{n^{\prime},\bm{\kappa}},\bm{F}_{n,\bm{\kappa}} \big)_{3D }
\big(\bm{F}_{n,\bm{\kappa}},\hat{S}_{y}^{m}\bm{F}_{n^{\prime},\bm{\kappa}} \big)_{3D }}{\Omega+w_{n^{\prime},\bm{\kappa}}-w_{n,\bm{\kappa}}}
\Big\}. 
\end{aligned}
\end{equation*}
Here a more rigorous treatment involves replacing $\Omega$ by $ \Omega+ i\epsilon$ with $\epsilon>0$ and taking the limit $\epsilon \to 0$. This ensures the cauality in the linear response and also justifies the convergence of the integrals. We omit this procedure and refer the reader to \cite{tong2012kinetic} for more detail. The limit of $\Omega\to 0$ is
\begin{equation} \label{eq_phys_6}
\sigma_{xy}
=i\int_{T^2}d\bm{\kappa}\sum_{\substack{0<w_{n,\bm{\kappa}}<w_0 \\ w_{n^{\prime},\bm{\kappa}}>w_0}}
\frac{\big(\hat{S}_x^{m}\bm{F}_{n,\bm{\kappa}},\bm{F}_{n^{\prime},\bm{\kappa}} \big)_{3D }
\big(\bm{F}_{n^{\prime},\bm{\kappa}},\hat{S}_{y}^{m}\bm{F}_{n,\bm{\kappa}} \big)_{3D }-\big(\hat{S}_{x}^{m}\bm{F}_{n^{\prime},\bm{\kappa}},\bm{F}_{n,\bm{\kappa}} \big)_{3D }
\big(\bm{F}_{n,\bm{\kappa}},\hat{S}_{y}^{m}\bm{F}_{n^{\prime},\bm{\kappa}} \big)_{3D }}{(w_{n,\bm{\kappa}}-w_{n^{\prime},\bm{\kappa}})^2}
\end{equation}
which resembles the Kubo formula for Hall conductance; see $\S 2.2.3$ of \cite{tong2016lecturesquantumhalleffect}. We note its obvious analogy to \eqref{eq_proof_chern_expression_2} in the main text (Step 1 in the proof of Theorem \ref{prop_Chern_number_expression}). Following the derivation of \eqref{eq_proof_chern_expression_2}, we obtain
\begin{equation} \label{eq_phys_7}
\sigma_{xy}=\frac{1}{2}\int_{T^2}d\bm{\kappa}\sum_{0<w_{n,\bm{\kappa}}<w_0}\text{Im}(\partial_{\kappa_1}\tilde{\bm{F}}_{n,\bm{\kappa}},\partial_{\kappa_2}\tilde{\bm{F}}_{n,\bm{\kappa}})_{3D }
\end{equation}
where $\tilde{\bm{F}}_{n,\bm{\kappa}}=e^{-i\bm{\kappa}\cdot \bm{x}}\bm{F}_{n,\bm{\kappa}}$
That's exactly (up to a constant) the Chern number associated with the frequency bands in the range $w\in (w,w_0)$. Notably, \eqref{eq_phys_6} indicates the photonic Hall effect results from the interband transition \cite{winn1999interband} between the low-frequency bands (with $w<w_0$) and high-frequency bands (with $w>w_0$), as being proved rigorously in Lemma \ref{lem_four_mero_functions}.

In conclusion, we derive the analogy of the quantum Hall effect in photonics based on the quantization of electromagnetic field and the linear response theory. In this setup, the Chern number of frequency bands is illustrated as the energy flux induced by external interaction in the sense of \eqref{eq_phys_3}.

\subsubsection{Reduction to 2D system}
We briefly digress to demonstrate how the Chern number expression \eqref{eq_chern_num} for TE waves can be derived from \eqref{eq_phys_7}. This reduction applies when the system is homogeneous in $z$ direction, meaning that the material response matrix $M$ in \eqref{eq_phys_10} is independent of $z$. Under this condition, each normal mode can be classified as either TE or TM polarized. Specifically, the mode takes the form $\tilde{\bm{F}}_{n,\bm{\kappa}}=(E_{1,n,\bm{\kappa}}, E_{2,n,\bm{\kappa}}, 0, 0, 0, 0, H_{3,n,\bm{\kappa}})^{T}$
for TE polarization, or
$(0, 0, E_{3,n,\bm{\kappa}}, H_{1,n,\bm{\kappa}}, H_{2,n,\bm{\kappa}}, 0)^{T}$ for TM polarization. Consider the case of TE polarization with a band gap near $\omega_0$. Following \eqref{eq_phys_1}, then the normal mode can be written as 
\begin{equation} \label{eq_phys_8}
\tilde{\bm{F}}_{\bm{\kappa}}=(\bm{E}_{\bm{\kappa}} \, ,\, \bm{H}_{\bm{\kappa}})^{T}\quad \text{with}\quad
\bm{E}_{\bm{\kappa}}=\frac{i}{w_{\bm{\kappa}}}\epsilon^{-\frac{1}{2}}\nabla\times(\mu^{-\frac{1}{2}}\bm{H}_{\bm{\kappa}}),\quad
\bm{H}_{\bm{\kappa}}=(0,0,u_{\bm{\kappa}}).
\end{equation}
Here we have
\begin{equation} \label{eq_phys_11}
\mu^{-\frac{1}{2}}\big(\nabla\times \epsilon^{-1} \nabla \times\big) \mu^{-\frac{1}{2}}\bm{H}_{\bm{\kappa}}=w_{\bm{\kappa}}^2\bm{H}_{\bm{\kappa}}.
\end{equation}
Substituting \eqref{eq_phys_8} into \eqref{eq_phys_7} yields 
\begin{equation} \label{eq_phys_9}
\sigma_{xy}=\frac{1}{2}\int_{T^2}d\bm{\kappa}
\Big[\text{Im}\Big(\partial_{\kappa_1}\big(\frac{i}{w_{\bm{\kappa}}}\epsilon^{-\frac{1}{2}}\nabla\times(\mu^{-\frac{1}{2}}\bm{H}_{\bm{\kappa}})\big),\partial_{\kappa_2}\big(\frac{i}{w_{\bm{\kappa}}}\epsilon^{-\frac{1}{2}}\nabla\times(\mu^{-\frac{1}{2}}\bm{H}_{\bm{\kappa}})\big)\Big)_{3D }  +\text{Im}(\partial_{\kappa_1}H_{\bm{\kappa}},\partial_{\kappa_2}H_{\bm{\kappa}})\Big].
\end{equation}
The second integral in \eqref{eq_phys_9} corresponds exactly to the expression of Chern number \eqref{eq_chern_num} in 2D system. Now we evaluate the first one. Direct calculation yields
\footnotesize
\begin{equation} \label{eq_phys_12}
\begin{aligned}
&\Big(\partial_{\kappa_1}\big(\frac{i}{w_{\bm{\kappa}}}\epsilon^{-\frac{1}{2}}\nabla\times(\mu^{-\frac{1}{2}}\bm{H}_{\bm{\kappa}})\big),\partial_{\kappa_2}\big(\frac{i}{w_{\bm{\kappa}}}\epsilon^{-\frac{1}{2}}\nabla\times(\mu^{-\frac{1}{2}}\bm{H}_{\bm{\kappa}})\big)\Big)_{3D } \\
&=\frac{1}{w_{\bm{\kappa}}^2}  \Big(\partial_{\kappa_1}\big(\epsilon^{-\frac{1}{2}}\nabla\times(\mu^{-\frac{1}{2}}\bm{H}_{\bm{\kappa}})\big),\partial_{\kappa_2}\big(\epsilon^{-\frac{1}{2}}\nabla\times(\mu^{-\frac{1}{2}}\bm{H}_{\bm{\kappa}})\big)\Big)_{3D }  -\frac{\partial_{\kappa_1}w_{\bm{\kappa}}}{w_{\bm{\kappa}}^3}  \Big(\epsilon^{-\frac{1}{2}}\nabla\times(\mu^{-\frac{1}{2}}\bm{H}_{\bm{\kappa}}),\partial_{\kappa_2}\big(\epsilon^{-\frac{1}{2}}\nabla\times(\mu^{-\frac{1}{2}}\bm{H}_{\bm{\kappa}})\big)\Big)_{3D } \\
&\quad -\frac{\partial_{\kappa_2}w_{\bm{\kappa}}}{w_{\bm{\kappa}}^3}  \Big(\partial_{\kappa_1}\big(\epsilon^{-\frac{1}{2}}\nabla\times(\mu^{-\frac{1}{2}}\bm{H}_{\bm{\kappa}})\big),\epsilon^{-\frac{1}{2}}\nabla\times(\mu^{-\frac{1}{2}}\bm{H}_{\bm{\kappa}})\Big)_{3D }  +\frac{\partial_{\kappa_1}w_{\bm{\kappa}}\partial_{\kappa_2}w_{\bm{\kappa}}}{w_{\bm{\kappa}}^4}
\Big(\epsilon^{-\frac{1}{2}}\nabla\times(\mu^{-\frac{1}{2}}\bm{H}_{\bm{\kappa}}),\epsilon^{-\frac{1}{2}}\nabla\times(\mu^{-\frac{1}{2}}\bm{H}_{\bm{\kappa}})\Big)_{3D } \\
&=:I_1(\bm{\kappa})+I_2(\bm{\kappa})+I_3(\bm{\kappa})+I_4(\bm{\kappa}). 
\end{aligned}
\end{equation}
\normalsize
Using \eqref{eq_phys_11} and the fact that $\omega_{\kappa}$ are real-valued, we find
\begin{equation} \label{eq_phys_13}
\text{Im}I_4(\bm{\kappa})
=\text{Im}\Big(\frac{\partial_{\kappa_1}w_{\bm{\kappa}}\partial_{\kappa_2}w_{\bm{\kappa}}}{w_{\bm{\kappa}}^2} \|\bm{E}_{\bm{\kappa}}\|^2\Big)
=0.
\end{equation}
We claim
\begin{equation} \label{eq_phys_14}
I_1(\bm{\kappa})+I_2(\bm{\kappa})+I_3(\bm{\kappa})
=\Big(\partial_{\kappa_1}\bm{H}_{\bm{\kappa}},\partial_{\kappa_2}\bm{H}_{\bm{\kappa}}\Big)_{3D }=(\partial_{\kappa_1}H_{\bm{\kappa}},\partial_{\kappa_2}H_{\bm{\kappa}}).
\end{equation}
Indeed, using \eqref{eq_phys_11}, an integration by parts yields
\footnotesize
\begin{equation} \label{eq_phys_15}
\begin{aligned}
I_2(\bm{\kappa})
&=-\frac{\partial_{\kappa_1}w_{\bm{\kappa}}}{w_{\bm{\kappa}}^3}  \Big(\epsilon^{-\frac{1}{2}}\nabla\times(\mu^{-\frac{1}{2}}\bm{H}_{\bm{\kappa}}),\partial_{\kappa_2}\big(\epsilon^{-\frac{1}{2}}\nabla\times(\mu^{-\frac{1}{2}}\bm{H}_{\bm{\kappa}})\big)\Big)_{3D }
=-\frac{\partial_{\kappa_1}w_{\bm{\kappa}}}{w_{\bm{\kappa}}^3}  \Big(\mu^{-\frac{1}{2}}\nabla\times\epsilon^{-1}\nabla\times(\mu^{-\frac{1}{2}}\bm{H}_{\bm{\kappa}}),\partial_{\kappa_2}\bm{H}_{\bm{\kappa}} \Big)_{3D } \\
&=-\frac{\partial_{\kappa_1}w_{\bm{\kappa}}}{w_{\bm{\kappa}}}  \Big(\bm{H}_{\bm{\kappa}},\partial_{\kappa_2}\bm{H}_{\bm{\kappa}}\Big)_{3D }.
\end{aligned}
\end{equation}
\normalsize
Similarly
\begin{equation} \label{eq_phys_16}
\begin{aligned}
I_3(\bm{\kappa})
=-\frac{\partial_{\kappa_2}w_{\bm{\kappa}}}{w_{\bm{\kappa}}}  \Big(\partial_{\kappa_1}\bm{H}_{\bm{\kappa}},\bm{H}_{\bm{\kappa}}\Big)_{3D }. 
\end{aligned}
\end{equation}
For $I_1$,  there are two ways to perform integration by parts. The first is as follows
\begin{equation*}
\begin{aligned}
I_1(\bm{\kappa})
&=\frac{1}{w_{\bm{\kappa}}^2}  \Big(\partial_{\kappa_1}\big(\mu^{-\frac{1}{2}}\nabla\times\epsilon^{-1}\nabla\times(\mu^{-\frac{1}{2}}\bm{H}_{\bm{\kappa}})\big),\partial_{\kappa_2}\bm{H}_{\bm{\kappa}}\Big)_{3D }
=\frac{1}{w_{\bm{\kappa}}^2}  \Big(\partial_{\kappa_1}\big(w_{\bm{\kappa}}^2\bm{H}_{\bm{\kappa}}\big),\partial_{\kappa_2}\bm{H}_{\bm{\kappa}}\Big)_{3D } \\
&=\Big(\partial_{\kappa_1}\bm{H}_{\bm{\kappa}},\partial_{\kappa_2}\bm{H}_{\bm{\kappa}}\Big)_{3D }
+\frac{2\partial_{\kappa_1}w_{\bm{\kappa}}}{w_{\bm{\kappa}}}\Big(\bm{H}_{\bm{\kappa}},\partial_{\kappa_2}\bm{H}_{\bm{\kappa}}\Big)_{3D }. 
\end{aligned}
\end{equation*}
Alternatively, we can perform the following integration by parts
\begin{equation*}
\begin{aligned}
I_1(\bm{\kappa})
&=\frac{1}{w_{\bm{\kappa}}^2}  \Big(\partial_{\kappa_1}\bm{H}_{\bm{\kappa}},\partial_{\kappa_2}\big(\mu^{-\frac{1}{2}}\nabla\times\epsilon^{-1}\nabla\times(\mu^{-\frac{1}{2}}\bm{H}_{\bm{\kappa}})\big)\Big)_{3D }
=\frac{1}{w_{\bm{\kappa}}^2}  \Big(\partial_{\kappa_1}\bm{H}_{\bm{\kappa}},\partial_{\kappa_2}\big(w_{\bm{\kappa}}^2\bm{H}_{\bm{\kappa}}\big)\Big)_{3D } \\
&=\Big(\partial_{\kappa_1}\bm{H}_{\bm{\kappa}},\partial_{\kappa_2}\bm{H}_{\bm{\kappa}}\Big)_{3D }
+\frac{2\partial_{\kappa_2}w_{\bm{\kappa}}}{w_{\bm{\kappa}}}\Big(\partial_{\kappa_1}\bm{H}_{\bm{\kappa}},\bm{H}_{\bm{\kappa}}\Big)_{3D}.
\end{aligned}
\end{equation*}
By taking the average of these two expressions and using \eqref{eq_phys_15}-\eqref{eq_phys_16}, we obtain \eqref{eq_phys_14}.

Finally, with \eqref{eq_phys_9}-\eqref{eq_phys_14}, we conclude
\begin{equation*} 
\sigma_{xy}=\frac{1}{2}\int_{T^2}
2\text{Im}(\partial_{\kappa_1}H_{\bm{\kappa}},\partial_{\kappa_2}H_{\bm{\kappa}})d\bm{\kappa}
=\int_{T^2}\text{Im}(\partial_{\kappa_1}u_{\bm{\kappa}},\partial_{\kappa_2}u_{\bm{\kappa}})d\bm{\kappa}. 
\end{equation*}
This recovers the Chern number expression \eqref{eq_chern_num} of TE waves in the main text.



\subsection{Bulk-edge correspondence in topological photonics}
In this part, we illustrate the bulk-edge correspondence as a natural consequence of energy conservation. Loosely speaking, in the case the system is closed and finite in $xOy$ plane, the calculation for photonic Hall conductance in the last section still works when we focus on the energy flux within a unit cell that lies deep in the bulk. This is because the effect of boundary is neglectable when the observer is deep in the bulk and when the observed frequency $w_0$ lies in the bulk gap (nearly zero propagation from the boundary to the observer). This leads to the following equality (up to a constant) by \eqref{eq_phys_6}
\begin{equation*}
\mathcal{C}_{w_0}=\langle S_{xy}\rangle^{deep}.
\end{equation*}
Here $\mathcal{C}_{w_0}$ denotes the gap Chern number at the frequency $w_0$. $\langle S_{xy}\rangle^{deep}$ represents the (instant) measured energy flux in the $x$- direction at some observation point deep in bulk, when we apply a unit perturbation \eqref{eq_phys_3} in the $y$- direction.

Now we consider the effect of boundary. For example, we bound the structure by placing lateral PEC walls along a closed curve $\Gamma\subset \mathbf{R}^2$. Since PEC boundaries are impenetrable for energy conduction and we assume the medium is lossless, the perturbation-induced energy flux is forced to circulate within the finite structure. In photonic systems, this circulating energy flux in $xOy$ plane is described by the $z$-component of the light angular momentum $(\bm{r}\times \bm{S})_{z}$ \cite{mario17angular}. On the other hand, the carriers of the circulating energy flux are necessarily the modes located near the boundary $\Gamma$ because the observed frequency $w_0$ lies in the bulk gap. Heuristically, these discussions lead to the equality
\begin{equation*}
\langle S_{xy}\rangle^{deep} = \langle (\bm{r}\times \bm{S})_{z}\rangle^{edge}.
\end{equation*}
In conclusion, we have linked the gap Chern number with the light angular momentum carried by edge modes
\begin{equation*}
\mathcal{C}_{w_0}=\langle (\bm{r}\times \bm{S})_{z}\rangle^{edge}.
\end{equation*}
This equality gives the intuition of Theorem \ref{thm_bec}. As seen from the above discussion, this bulk-edge correspondence is a natural consequence of energy conservation. We point out that it's a direct analog of the mechanism of quantum Hall effect, i.e. the charge conservation, in electronic systems.


\footnotesize
\bibliographystyle{plain}
\bibliography{ref}

\end{document}